\newtheorem{theorem}{Theorem}[section]
\newtheorem{lemma}[theorem]{Lemma}
\newtheorem{proposition}[theorem]{Proposition}
\newtheorem{construction}{Construction}[section]
\theoremstyle{remark}
\newtheorem{definition}{Definition}[section]
\newtheorem{remark}{Remark}
\newtheorem{example}{Example}
\newcommand\nc\newcommand
\nc\bfa{{\boldsymbol a}}\nc\bfA{{\boldsymbol A}}\nc\cA{{\EuScript A}}
\nc\bfb{{\boldsymbol b}}\nc\bfB{{\boldsymbol B}}\nc\cB{{\EuScript B}}
\nc\bfc{{\boldsymbol c}}\nc\bfC{{\boldsymbol C}}\nc\cC{{\mathscr C}}\nc\uc{{\underline c}}
\nc\bfd{{\boldsymbol d}}\nc\bfD{{\boldsymbol D}}\nc\cD{{\mathscr D}}
\nc\bfe{{\boldsymbol e}}\nc\bfE{{\boldsymbol E}}\nc\cE{{\EuScript E}}\nc\ue{{\underline e}}
\nc\bff{{\boldsymbol f}}\nc\bfF{{\boldsymbol F}}\nc\cF{{\mathcal F}}\nc\uf{{\underline f}}
\nc\bfg{{\boldsymbol g}}\nc\bfG{{\boldsymbol G}}\nc\cG{{\EuScript G}}
\nc\bfh{{\boldsymbol h}}\nc\bfH{{\boldsymbol H}}\nc\cH{{\mathcal H}}\nc\uh{{\underline h}}
\nc\bfi{{\boldsymbol i}}\nc\bfI{{\boldsymbol I}}\nc\cI{{\mathcal I}}
\nc\bfj{{\boldsymbol j}}\nc\bfJ{{\boldsymbol J}}\nc\cJ{{\EuScript J}}
\nc\bfk{{\boldsymbol k}}\nc\bfK{{\boldsymbol K}}\nc\cK{{\EuScript K}}
\nc\bfl{{\boldsymbol l}}\nc\bfL{{\boldsymbol L}}\nc\cL{{\EuScript L}}
\nc\bfm{{\boldsymbol m}}\nc\bfM{{\boldsymbol M}}\nc\cM{{\EuScript M}}
\nc\bfn{{\boldsymfol n}}\nc\bfN{{\boldsymbol N}}\nc\cN{{\mathscr N}}\nc\un{{\underline n}}
\nc\bfo{{\boldsymbol o}}\nc\bfO{{\boldsymbol O}}\nc\cO{{\EuScript O}}
\nc\bfp{{\boldsymbol p}}\nc\bfP{{\boldsymbol P}}\nc\cP{{\EuScript P}}
\nc\bfq{{\boldsymbol q}}\nc\bfQ{{\boldsymbol Q}}\nc\cQ{{\EuScript Q}}
\nc\bfr{{\boldsymbol r}}\nc\bfR{{\boldsymbol R}}\nc\cR{{\EuScript R}}\nc\ur{{\underline r}}
\nc\bfs{{\boldsymbol s}}\nc\bfS{{\boldsymbol S}}\nc\cS{{\EuScript S}}
\nc\bft{{\boldsymbol t}}\nc\bfT{{\boldsymbol T}}\nc\cT{{\EuScript T}}
\nc\bfu{{\boldsymbol u}}\nc\bfU{{\boldsymbol U}}\nc\cU{{\EuScript U}}
\nc\bfv{{\boldsymbol v}}\nc\bfV{{\boldsymbol V}}\nc\cV{{\mathscr V}}
\nc\bfw{{\boldsymbol w}}\nc\bfW{{\boldsymbol W}}\nc\cW{{\mathscr W}}
\nc\bfx{{\boldsymbol x}}\nc\bfX{{\boldsymbol X}}\nc\cX{{\EuScript X}}\nc\ux{{\underline x}}
\nc\bfy{{\boldsymbol y}}\nc\bfY{{\boldsymbol Y}}\nc\cY{{\mathscr Y}}\nc\uy{{\underline y}}
\nc\bfz{{\boldsymbol z}}\nc\bfZ{{\boldsymbol Z}}\nc\cZ{{\EuScript Z}}
\nc{\remove}[1]{}
\DeclareSymbolFont{bbold}{U}{bbold}{m}{n}
\DeclareSymbolFontAlphabet{\mathbbold}{bbold}
\DeclareMathOperator{\wt}{wt}
\DeclareMathOperator{\comp}{{\sf C}}
\newcommand{\R}{{\mathbb R}}
\newcommand{\Z}{{\mathbb Z}}
\nc\reals{{\mathbb R}}
\nc{\ff}{{\mathbb F}}
\nc{\PP}{{\mathbb P}}
\nc{\complex}{{\mathbb C}}
\newcommand{\ah}{{\hat{a}}}
\begin{document}

\title{Quantum Error Correction beyond \texorpdfstring{\(SU(2)\)}{}:\\
Spin, Bosonic, and Permutation-Invariant Codes from Convex Geometry}

\author{Arda Aydin}
 \email{aaydin@umd.edu}
\affiliation{ISR and Department of ECE, University of Maryland, College Park, MD 20742
}
\author{Victor V.\@ Albert}
\email{vva@umd.edu}
\affiliation{Joint Center for Quantum Information and Computer Science,
NIST/University of Maryland, College Park, MD 20742}
\author{Alexander Barg}%
 \email{abarg@umd.edu}
\affiliation{ISR and Department of ECE, University of Maryland, College Park, MD 20742
}
\affiliation{Joint Center for Quantum Information and Computer Science,
NIST/University of Maryland, College Park, MD 20742}

\begin{abstract}
We develop a framework for constructing quantum error-correcting codes and logical gates for three types of spaces --- composite permutation-invariant spaces of many qubits or qudits, composite constant-excitation Fock-state spaces of many bosonic modes, and monolithic nuclear state spaces of atoms, ions, and molecules.
By identifying all three spaces with discrete simplices and representations of the Lie group \(SU(q)\), we prove that many codes and their gates in \(SU(q)\) can be inter-converted between the three state spaces.
We construct \remove{new} code instances for all three spaces using classical \(\ell_1\) codes and Tverberg's theorem,
a classic result from convex geometry.
We obtain \remove{new} families of quantum codes with distance that scales almost linearly with the code length $N$ by constructing $\ell_1$ codes based on combinatorial patterns called Sidon sets and utilizing their Tverberg partitions. 
This compares favorably with the existing designs for all the
state spaces.
We present explicit constructions of codes with shorter length or lower total spin/excitation than known codes with similar parameters, \remove{new} bosonic codes with exotic Gaussian gates, as well as examples of short codes with distance larger than the known constructions.
\end{abstract}

           
\maketitle

 \tableofcontents

\section{\label{sec:Introduction} Introduction}

Quantum error correction, a necessary ingredient for realizing useful quantum algorithms, is the art of encoding quantum information into a strategically chosen subspace of an available state space so as to protect this information from noise.

Quantum state spaces come in many shapes and sizes.
For example, many-body, or \textit{composite}, qubit systems have associated with them a permutation-invariant (PI) subspace~\cite{harrow2013church,ruskaiExchange,ruskai-polatsek}, whose states are natural to realize using, e.g., collective atomic ensembles in cavity Quantum Electrodynamics (QED) (see, e.g., Refs.~\cite{chaudhury2007quantum,haas2014entangled,strobel2014fisher,lucke2014detecting,mcconnell2015entanglement,pezze2018quantum}).
On the other hand, nuclear state spaces of atoms, ions, or molecules house \textit{monolithic} spin-like spaces~\cite{gross} which exhibit long lifetimes yet can be fully controlled (see, e.g., recent works~\cite{asaad2020coherent,fernandez2024navigating,low2025control,debry2025error,yu2025schrodinger,ringbauer2022universal} and references therein).
{Artificial, or synthetic, spin spaces can even be engineered inside low-lying subspaces of a cavity \cite{roy2025synthetic,champion2025efficient}.

Both monolithic and composite spaces, the latter consisting of either qubits or bosonic modes, are currently under active investigation because a clear winner in the race to the first fault-tolerant quantum computer is yet to be determined, and because a range of different quantum platforms are likely to remain useful for other computing-adjacent applications.
As such, it is important to understand their ability to house quantum information in as simple and unified way as possible.
This task has proven difficult because conventional coding-theoretic constructions are not relevant outside of the many-qubit block code setting.

Progress to connect composite and monolithic state spaces has been made through Lie group theory. 
PI and spin state spaces are closely related in that they are both irreducible representations, or irreps, of the angular momentum group \(SU(2)\).
As such, their associated noise models and error-correcting codes --- PI codes and spin codes --- can be mapped into each other using properties of this group.
We extend these connections to the bosonic setting by identifying the state space of two bosonic modes of fixed excitation number with irreps of \(SU(2)\) using the Jordan-Schwinger map.

More importantly, we generalize code constructions for \textit{all three} state spaces by extending the group-theoretic connection to \(SU(q)\) for general \(q\). 
This yields relations among the spaces' noise models, codes, and logical gates.

For the case of spin codes, we point out a conceptually new way to interpret monolithic state spaces as irreps of \(SU(q)\), giving rise to new codes for \textit{existing} atomic, molecular, and ionic quantum platforms.
The geometry offered by this interpretation extends the power of monolithic systems since \(SU(q>2)\) codes can pack more logical information into the same-size space with minimal change in their protection.
Since nuclear manifolds are often completely controllable~\cite{asaad2020coherent,fernandez2024navigating,low2025control}, such codes are no less realizable than their \(SU(2)\) counterparts.
Moreover, mis-calibration or stray pulses associated with the \(SU(q)\) control unitaries should be compatible with the \(SU(q)\) geometry we introduce here.

For the bosonic case, we construct codes on constant-excitation Fock-state spaces of an arbitrary number of modes.
Complementing recent multimode code constructions~\cite{jain2024quantum,xu2024letting} that use coherent states, we obtain codes composed of finite sets of Fock states.
Mapping existing spin and PI codes with exotic transversal gates~\cite{gross,exoticGates}, we obtain new Fock state codes whose gates are realized using passive linear-optical transformations.

For the case of PI codes, we obtain new codes in the permutation-invariant subspace of multiple qudits by leveraging a classical family of \(\ell_1\) codes and established results in convex geometry.
Since all three state spaces are associated with irreps of \(SU(q)\), these PI codes can be converted to both spin and Fock state codes with similar error-correcting properties.

\section{Summary of technical results}
Permutation invariant (PI) codes form a class of quantum codes introduced in \cite{ruskaiExchange,ruskai-polatsek}. They are well-suited to recover encoded states from deletion errors \cite{ouyangEquivalence,hagiwaraDeletion,ouyangPI}.
A large family of qubit PI codes was recently constructed in \cite{aydin2023family}.

The authors of \cite{jain2024absorption} hinted at a relation between codes that correct photon absorption-emission errors (AE codes), a class of molecular codes that can be interpreted as spin codes, and PI codes. This connection was fully developed in \cite{aydin2025class}, where some of us showed that PI codes can be mapped to AE codes, giving rise to a family of AE codes hosted in a system with low angular momentum, which supports their efficient implementation and practical utility. The mapping of PI codes to AE codes, coupled with earlier results \cite{gross2,kubischta2023notsosecret} on a relation between spin codes of \cite{gross} and PI codes, also facilitated a link between the spin and AE code families  \cite{aydin2025class}. 

\begin{figure}[!t]
\includegraphics[width=0.95\columnwidth]{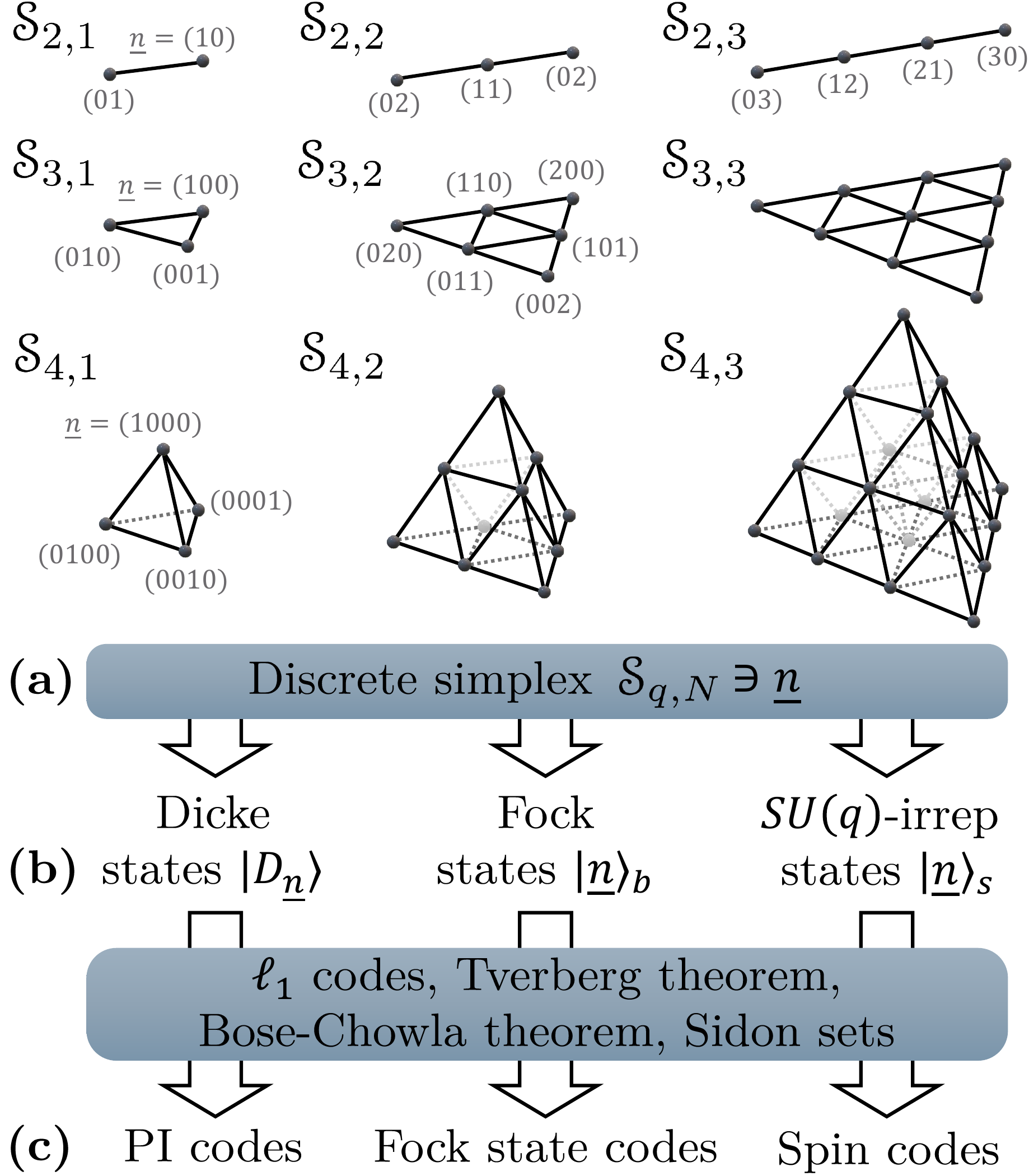}
\caption{
(a) Angular momentum ladders of spin-\(N/2\) irreps of \(SU(2)\) are in one-to-one correspondence with points of the discrete simplex \(\cS_{2,N}\).
This correspondence can be extended to one between completely symmetric \(SU(q)\)-irreps and higher-dimensional discrete simplices \(\cS_{q,N}\) (see \cref{subsec:js}).
(b) Simplices can be mapped into permutation-invariant (PI), bosonic, and spin spaces by associating simplex points with \(N\)-\(q\)udit Dicke states, \(N\)-excitation \(q\)-mode Fock states, and \(SU(q)\) ``spin'' states, respectively.
(c) This gives rise to new qudit PI, Fock state, and spin codes via classical $\ell_1$ codes and results from convex geometry (see \cref{sec: classical}).
}
\label{fig: SqN}
\end{figure}

Here we further connect qudit PI codes with Fock state and spin codes.
Our construction takes the following path. As one of the main results, we establish an equivalence between 
qudit PI codes, defined in \cite{ouyangQudit}, spin codes, and Fock state codes, supported by identifying all the three
systems with vertices of a discrete simplex, \cref{fig: SqN,fig: SqN1}. As a result,
we can seek constructions of codes within these families in a unified manner. 
\remove{Toward this end, we develop
an observation made earlier in \cite{aydin2023family} in the qubit case, namely, that the action of deletion errors on Dicke states can be expressed in a simple form. Since deletions are equivalent to erasures for PI codes, we can use deletion-correcting properties to estimate the code distance, so this approach forms a vehicle for constructing families of good PI codes. 
Here we further this line of thought, extending it to qudits as well as to powers of the deletion operators to match the photon loss errors in Fock state codes. Following the path of deletions rather
than Pauli errors again simplifies calculation of the distance of PI codes, this time in the qudit case. }

As our first step, we extend the Knill-Laflamme error-correcting conditions for qubit PI codes, whose specific form was found in \cite{aydin2023family}, to the case of general qudits, see \cref{fig: KL conditions}. We further show, in \cref{sec: Fock,sec:spin}, that similar-looking conditions support
error correction for multi-mode Fock state codes as well as for $SU(q)$ spin codes. 

At this point, the problem
of code construction reduces to finding solutions to a system of equations with respect to the basis coefficients of the codes.
To show that such solutions exist, we leverage the relation of the codes to the discrete simplex. Since its vertices
host classical $\ell_1$ codes, we derive sufficient conditions for the existence of quantum codes in terms of $\ell_1$ codes. 
To further link $\ell_1$ codes with the quantum error correction conditions, we
engage a classic result in convex geometry known as the Tverberg theorem, which shows that a sufficiently large point set in $\R^m$ can be partitioned into $K$ subsets such that their convex combinations have a nonempty intersection. Without explicitly mentioning the geometric link, this approach to the construction of PI codes was earlier hinted at in \cite{OuyangADCode,movassagh2024constructing}, and we fully develop it in this work.

The next step in our construction is finding families of classical $\ell_1$ codes with large distance. Starting
with a result in  \cite{kovacevic2018multisets} for such codes, we rely on combinatorial patterns known as Sidon ($B_t$) sets, whose
existence follows by another classical result, the Bose-Chowla theorem in additive number theory. This enables us to show
that there exist qudit PI codes, spin codes, and Fock state codes with distance that scales almost linearly with $N$ (the
length, total spin, and total excitation, respectively), \cref{sec: asymptotic}. In the final \cref{sec: Examples}, we further utilize the geometric connection to construct examples of spin, Fock state, and qudit PI codes as well as families of multi-mode covariant Fock state codes.

\begin{remark}[\sc Related results] After the completion of this paper we became aware that the general idea of using Tverberg partitions for the construction of quantum
codes has been considered in earlier literature following its introduction in \cite{knill2000theory}; see in particular \cite{bumgardner2012codes,cao2021higher}. While these works utilized it in an abstract setting of satisfying the Knill-Laflamme conditions for general quantum codes, we apply it in a concrete problem of constructing codes for several specific physical platforms. Starting with Tverberg partitions of classical $\ell_1$ codes with good parameters, we obtain improved asymptotics and specific examples of codes for the three interrelated systems considered here.

A related forthcoming work that develops new permutation invariant codes appears in \cite{vlad}.
\end{remark}

\begin{figure}[t]
\begin{center}\begin{tikzpicture}[node distance=4cm, every node/.style={font=\sffamily},
    circleNode/.style={circle, draw, minimum size=0.5in, align=center}, arrow/.style={-{Latex}, thick}, textnode/.style={font=\small}
  ]

 \node[circleNode] (L) at (-2,2) {$\ell_1$ codes};
  \node[circleNode] (A) at (0,0) {PI codes};
  \node[circleNode] (B) at (60:4) {Spin codes};
  \node[circleNode] (C) at (0:4) {Fock state\\ codes};


 \draw[dashed,arrow] (B) to node[pos=0.4,above right, textnode] {Prop.~\ref{prop: Spin to Fock}} (C);
 \draw[arrow,  bend right=10] (A) to node[below right, textnode] {Prop.~\ref{prop:equivalenceSpinPI}} (B);
 \draw[dashed,arrow,  bend right=10] (B) to node[above left, textnode] {\cref{lemma:ericandian}} (A);
\draw[arrow] (L) to node[pos=0.4,below left, textnode] {Thm.~\ref{prop:bound}} (A);

 \draw[arrow] (A) to node[pos=0.2,below right, textnode]{Prop.~\ref{prop:equivalence}} (C);
\end{tikzpicture}
\end{center}
\caption{Relations between $SU(q)$ code families. The solid arrows apply to all $q\ge 2$ and the dashed arrows apply only to $q=2$. The transformation Spin-to-PI in \cref{lemma:ericandian} is from \cite{IanEricDihedralIEEE}; the other transitions form new results.
}\label{fig: relations}
\end{figure}
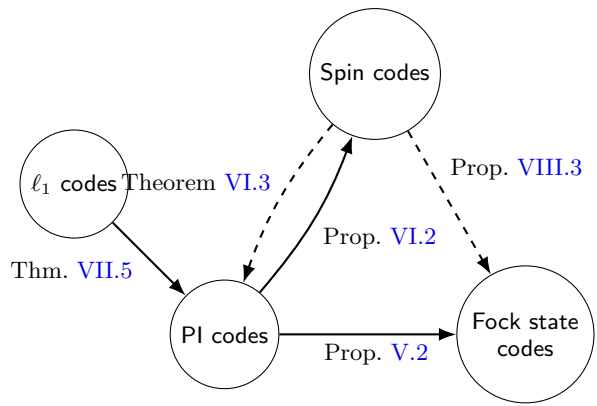

\section{Preliminaries}\label{sec:preliminaries} 
For a finite set $\cQ:=\{0,1,\dots,q-1\}$, let $S_q=\{ \ket{i}: i\in \cQ \}$ denote the standard orthonormal basis of the qudit space $\complex^q$, which 
describes a quantum system with $q$ levels. We begin with introducing elements of notation used
to write multi-dimensional qudit states using this basis. 
Let $U_q=\{ \bfu_i, i\in \cQ \}$ be the set of elementary basis vectors given by
  $$
  (\bfu_i)_j=\delta_{ij}, \quad j=0,1,\dots,q-1,
  $$
using the Kronecker delta notation. Then the vectors in $S_q$ can be written as $\ket{i}=\ket{\bfu_i}$ for all $i\in \cQ$.

Let $S(\complex^{q\otimes N})$ be the set of all density matrices of order $q^N$. For a vector $\bfx\in\cQ^N$, let 
$\bfx_{\sim i}:=(x_1,\ldots,x_{i-1},x_{i+1},\ldots,x_N),$ with the $i$-th entry missing. The {\em partial trace} of an 
 $q^N \times q^N$ matrix
  $$ 
  M=\sum_{\bfx,\bfy\in \cQ^N}m_{\bfx,\bfy}\ket{\bfx}\bra{\bfy},
  $$
is a mapping given by
\begin{align*}
    \Tr_i : S(\complex^{q\otimes N}) &\rightarrow S(\complex^{q\otimes (N-1)})\\
    M &\mapsto \sum_{\bfx,\bfy\in \cQ^N}m_{\bfx,\bfy}\Tr\left(\ket{x_i}\bra{y_i}\right)\ket{\bfx_{\sim i}}\bra{\bfy_{\sim i}}.
\end{align*}

We will use the notation $\un,\ue,\uf$ to refer to $q$-tuples of nonnegative integers, where for instance, $\un:=(n_0,n_1,\dots,n_{q-1})$.
For $N\in\Z_+,$ let
\begin{align}\label{eq:SimplexDef}
    \cS_{q,N}:=\Big\{\un\in \Z_0^q: \sum_{i=0}^{q-1} n_i=N\Big\},
\end{align} 
be a discrete simplex, i.e., the set of all integer partitions of $N$ into at most $q$ parts, see \cref{fig: SqN,{fig: SqN1},fig: S44}. 
Plainly,
$$|\cS_{q,N}|=\binom{N+q-1}{q-1}$$.

Multinomial coefficients are defined in a standard way:
\begin{align*}
    \binom{N}{\un}= \begin{cases}
        \frac{N!}{n_0!n_1!\ldots n_{q-1}!}, &\text{if }\un\in\cS_{q,N}, \\
        0 & \text{otherwise},
    \end{cases}
\end{align*}
with $0!=1$ by definition.

\begin{definition}\label{def:composition}
   For a $q$-ary string $\bfx \in \cQ^N$ let 
   $$
   n_i(\bfx)=|\{j\in [N]: x_j=i\}|, \quad i\in \cQ,
   $$
omitting the mention of $\bfx$ when it is clear from the context. Define the {\em composition}\footnote{In information theory, one often considers the probability distribution $\frac 1N \comp(\bfx)$, calling it the {\em type} of $\bfx$ \cite{csiszar1998method}.} of $\bfx$ as the tuple 
$\comp(\bfx)=(n_0,n_1,\ldots,n_{q-1})$ and note that $\sum_{i=0}^{q-1}n_i = N$ and 
  $$|\{\bfx: \comp(\bfx)=\un\}|=\binom N\un.
  $$
  For $q=2$, the composition $\comp(\bfx)$ is fully determined by
the Hamming weight $n_1(\bfx)$, denoted below as $\wt(\bfx)$.
\end{definition}  

\begin{definition} A 
\textit{Dicke state} is a linear combination of all qudit states with the same composition. We define
 \begin{equation}\label{eq: Dicke}
    \ket{D_\un}= \frac{1}{\sqrt{\binom{N}{\un}}}\sum_{\substack{\bfx\in\cQ^N \\ \comp(\bfx)=\un}} \ket{\bfx}.
\end{equation}
For $q=2$ this reduces to the more standard notion of a Dicke state, $\ket {D_w}\propto \sum_{\begin{substack}{\bfx\in\{0,1\}^N\\ \wt(\bfx)=w}\end{substack}}\ket\bfx$ \cite{Dicke1,sekhar2020actual}.
\end{definition}

\begin{figure}
\includegraphics[width=.9\linewidth]{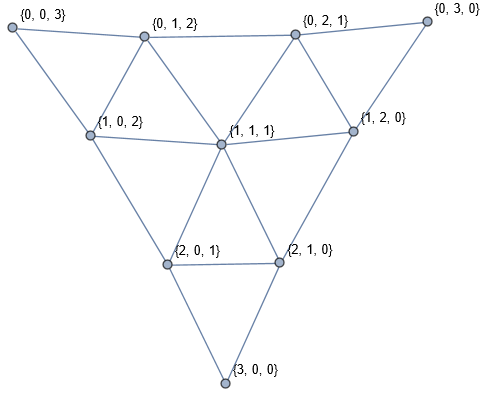}
\caption{Graph of the discrete simplex $\cS_{3,3}$, where the vertices $u,v$ are connected iff $d_1(u,v)=1$; see \cref{eq: d1}. This space is used to define qudit PI codes and Fock state codes (as the index set of basis coefficients), as well as classical $\ell_1$ codes of \cref{sec: classical} (as a metric space).}
\label{fig: SqN1}
\end{figure}

To give an example, if $q=3$ and $\bfx=(0,2,2)$, then $\comp(\bfx)=(1,0,2)$ and 
\begin{align*}
    \ket{D_{(1,0,2)}}=\frac{1}{\sqrt{3}}\left( \ket{022}+\ket{202}+\ket{220} \right).
\end{align*}

{\em Error correction conditions:} 
The general {Knill-Laflamme} conditions give a criterion for error correction for quantum codes
used on a quantum communication channel represented by a set of Kraus operators $\{A_a\}$. Somewhat informally, a code corrects errors in this set if and only if $\bra {\bfc_i}A_a^\dag A_b\ket{\bfc_j}=0$ and $\bra {\bfc_i}A_a^\dag A_b\ket{\bfc_i}=g_{ab}$ for all $i\ne j$ and all $a,b$, where $(\bfc_i)_i$ is 
an orthonormal basis of the code and $g_{ab}\in \complex$ is some set of constants. These conditions are called orthogonality and non-deformation, respectively. 

The Knill-Laflamme conditions for qubit PI codes were rendered in an explicit form in our recent work \cite{aydin2023family} based on the approach via deletion errors. Here we extend them to general qudit PI codes and Fock state codes to quantify the error
correcting performance of the codes. 

The \(i\)th code basis state of a PI code of logical dimension \(K\) is a superposition of Dicke states \(|D_\un\rangle\) (see Def.~\ref{Def:PICode}).
Denoting the state's basis expansion coefficients by $\alpha_{\un}^{(i)}\in \mathbb{C}$, the Knill-Laflamme conditions translate to equations (\hyperlink{C3}{C3}), (\hyperlink{C4}{C4}) shown in Fig.~\ref{fig: KL conditions}. This fact will be shown in the context of the three main code classes considered in the paper, with proofs provided in Thm.~\ref{Theo:PIMainTheorem} (PI codes), Thm.~\ref{Theo:BosonMainTheoremAD} (Fock state codes), and
Thm.~\ref{Theo:SpinMainTheorem} (spin codes).
\begin{figure}
\begin{tcolorbox}[width=1.05\linewidth, left=1mm, right=1mm, boxsep=0pt, boxrule=0.5pt, sharp corners=all, colback=white!95!black]  
\begin{align*}
        &\hypertarget{C1}{\text{\rm (C1)}}\sum_{\un\in\cS_{q,N}}(\alpha^{(i)}_{\un})^*\alpha^{(j)}_{\un}=0\\
        &\hypertarget{C2}{\text{\rm (C2)}}\sum_{\un\in\cS_{q,N}}|\alpha^{(i)}_{\un}|^2=\sum_{\un\in\cS_{q,N}}|\alpha^{(j)}_{\un}|^2\\
        &\hypertarget{C3}{\text{\rm (C3)}}\sum_{\un\in\cS_{q,N}}(\alpha^{(i)}_{\un})^*\alpha^{(j)}_{\un-\ue+\uf}\frac{\binom{N-t}{\un-\ue}}{\sqrt{\binom{N}{\un}\binom{N}{\un-\ue+\uf}}}=0\\
        &\hypertarget{C4}{\text{\rm (C4)}}\sum_{\un\in\cS_{q,N}}\!\!\!\!\left((\alpha^{(i)}_{\un})^*\alpha^{(i)}_{\un-\ue+\uf}-(\alpha^{(j)}_{\un})^*\alpha^{(j)}_{\un-\ue+\uf}\right)\!\frac{\binom{N-t}{\un-\ue}}{\sqrt{\binom{N}{\un}\binom{N}{\un-\ue+\uf}}}=0
    \end{align*}
    Here $\ue,\uf\in\cS_{q,t}$ where $t$ is the number of errors that a qudit PI code can detect, and 
$i,j\in \{0,1,\dots,K-1\}, i\ne j$.
\end{tcolorbox}
\caption{Conditions for construction and error correction with qudit PI codes. Eqns.~(C3), (C4)
represent the KL conditions. We relate similar conditions for spin and Fock state codes to these using multinomial coefficient identities.
}\label{fig: KL conditions}
\end{figure}

\section{PI codes}

Our construction starts with PI codes \cite{ruskaiExchange,ruskai-polatsek,ouyangPI,ouyangQudit}, which were earlier shown to yield codes for the amplitude damping and related error models in \cite{aydin2025class,kubischta2023notsosecret}. We proceed with the definition.
\begin{definition}\label{Def:PICode}
    A $K$-dimensional qudit PI code $\cQ_{PI}$ is defined by the set of basis vectors 
    \begin{align} \label{eq: basis vectors}
        \ket{\bfc_i} = \sum_{\un \in \cS_{q,N}}\alpha^{(i)}_{\un}\ket{D_{\un}}, \quad i=0,1,\ldots,K-1.
    \end{align}
 Here $ \ket{D_{\un}} $ is a Dicke state given in \cref{eq: Dicke} and $ \alpha_{\un}^{(i)} \in \complex $ $i=0,1,\ldots,K-1$ is a coefficient set. We also call the number $N$ the {\em length} of the code, and if the coefficients $\alpha_\un^{(i)}$ are real,
 we call $\cQ_{PI}$ a {\em real code}.
\end{definition}

To define the distance of a PI code, we use the approach via deletions. Recall that an {\em erasure} is essentially a deletion with a known location. By symmetry, for PI codes, deletions are equivalent to erasures as deleting $t$ qudits whose locations are unknown is the same as deleting the first $t$ qudits: the resulting state in both cases is the same. Hence, if a PI code corrects the set of errors in \cref{eq: KrausSetForDelChannel}, it can also correct $t$ erasures. A code that corrects $t$ erasures clearly has distance at least $t+1$. With this in mind, we say that the code has {\em distance} at least $t+1$ if it corrects all errors in the set given by \cref{eq: KrausSetForDelChannel}.
Given an $q$-qudit PI code of length $N$, dimension $K$, and distance $d$, we call it and $(N,K,q,d)$ PI code.

\subsection{Quantum deletion channel} In quantum coding theory, the deletion channel is the partial trace operation on a set of positions that are unknown. The formal definition is given as follows:
\begin{definition}
    Let $t\le N$ be a positive integer. Consider a quantum state $\rho \in S\left(\complex^{q\otimes n}\right)$ and a set of indices $I = \{i_1, i_2, \ldots, i_t\} \subset \{1, 2, \ldots, n\}$, where $i_1< i_2<\ldots<i_t$. The \textit{$t$-deletion error} corresponding to a subset $I$ is a composition of partial trace operations on its elements:
    \begin{align*}
        D_I^N(\rho) := \Tr_{i_1}\circ \ldots \circ \Tr_{i_t}(\rho).
    \end{align*}
\end{definition}
A {\em $t$-deletion channel} operator is a convex combination of all $t$-deletion errors, i.e.,
\begin{align*}
    \operatorname{Del}_t^N(\rho) = \sum_{I:|I|=t}p_ID_I^N(\rho),
\end{align*}
where $p_I$ is a probability distribution supported on the deleted set $I$.

We will need an explicit form of the action of this channel on a Dicke state. 
For this, we use the Kraus decomposition of the $t$-deletion channel, which appears in \cite{ouyangEquivalence}. 
Let $x\in \cQ^N$ and $\ket j\in S_q$. 
A deletion error of type $j$ acting on the $i$-th qudit of a pure state $\ket \bfx, 
\bfx \in \cQ^N$ is written as
\begin{align*}
    \bfK^{(N)}_{j,i} \ket{\bfx} = \bra{j}\ket{x_i}\ket{\bfx_{\sim i}}, \quad i=1,\dots,N.
\end{align*}
 Let $I$ be a $t$-subset as above and let $J=\{j_1, j_2, \ldots, j_t\}$, where $j_k\in\cQ$ for $k=1,2,\ldots,t$. 
Then define the joint deletion error
\begin{align*}
    \bfK^{(N)}_{J,I}: = \bfK^{(N-t+1)}_{j_1,i_1}\circ\bfK^{(N-t+2)}_{j_2,i_2}\circ\dots\circ\bfK^{(N)}_{j_t,i_t},
\end{align*}
where the qudit $i_k$ is subjected to a deletion of type $j_k$ for all $k$.
Note that this operator can also be expressed as the tensor product 
  $$
  \bfK^{(N)}_{J,I}=\bfK_1\otimes \bfK_2\otimes \ldots \otimes\bfK_N,
  $$
where for all $l=1,\dots,N$, $K_l=\text{Id}$ for $l\not\in I$ and $K_l=\bra{j_k}$ if $l=i_k$, $k=1,\dots,t$.
See \cite[Lemma 3.1]{ouyangEquivalence}  for details.

Using the operators $K_{I,J}$, we can write a Kraus decomposition of the deletion channel \cite{ouyangEquivalence}:
\begin{align*}
     \operatorname{Del}_t^N(\rho) = \sum_{I,J}p_I \bfK^{(N)}_{J,I}\rho \bfK^{(N)\dagger}_{J,I},
\end{align*}
where $p_I$ is a probability distribution.

\subsection{Quantum deletions acting on Dicke states} \label{sec: deletions}
Action of deletions on qubit Dicke states was previously analyzed in \cite{aydin2023family}. In this section, we extend those results to  
the case of qudits.

\begin{lemma}\label{Lemma:DelOperatorAction}
    Let $ q\geq 2$ and $N\ge 0$ be integers. For a Dicke state $ \ket{D_\un},\un \in \cS_{q,N} $, all $i\in\{1,2,\ldots,N\}$ and $ j\in \{0,1,\ldots,q-1\} $,
    \begin{align}
        \bfK_{j,i}^{(N)}\ket{D_{\un}}= \sqrt{\frac{\binom{N-1}{\un-\bfu_j}}{\binom{N}{\un}}}\ket{D_{\un-\bfu_j}}, \label{eq: ji}
    \end{align}
    where $\bfu_j\in U_q$.
\end{lemma}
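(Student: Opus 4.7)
The plan is to unpack both sides of \eqref{eq: ji} by writing the Dicke state explicitly in the computational basis and applying the definition of $\bfK_{j,i}^{(N)}$ term by term. By \cref{eq: Dicke},
\begin{align*}
\bfK_{j,i}^{(N)}\ket{D_\un}
= \frac{1}{\sqrt{\binom{N}{\un}}}\sum_{\substack{\bfx\in\cQ^N\\ \comp(\bfx)=\un}}\langle j|x_i\rangle\,\ket{\bfx_{\sim i}}.
\end{align*}
The inner product $\langle j|x_i\rangle$ is $1$ when $x_i=j$ and $0$ otherwise, so the sum is supported on the set $\cX_{j,i}(\un):=\{\bfx\in\cQ^N:\comp(\bfx)=\un,\,x_i=j\}$.

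Next I would observe that removing the $i$-th coordinate gives a bijection between $\cX_{j,i}(\un)$ and the set $\{\bfy\in\cQ^{N-1}:\comp(\bfy)=\un-\bfu_j\}$: each $\bfx\in\cX_{j,i}(\un)$ has $\comp(\bfx_{\sim i})=\un-\bfu_j$ since deleting position $i$ decrements only the $j$-th entry of the composition, and conversely every such $\bfy$ has a unique preimage obtained by inserting $j$ in position $i$. Thus
\begin{align*}
\sum_{\bfx\in\cX_{j,i}(\un)}\ket{\bfx_{\sim i}}=\sum_{\substack{\bfy\in\cQ^{N-1}\\ \comp(\bfy)=\un-\bfu_j}}\ket{\bfy}=\sqrt{\binom{N-1}{\un-\bfu_j}}\,\ket{D_{\un-\bfu_j}},
\end{align*}
where the last equality is just the definition of $\ket{D_{\un-\bfu_j}}$ rewritten. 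Combining with the prefactor $1/\sqrt{\binom{N}{\un}}$ yields \eqref{eq: ji}.

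Finally, I would note that the identity covers the edge case $n_j=0$: in this case $\cX_{j,i}(\un)=\emptyset$ so the left-hand side vanishes, while on the right-hand side $\un-\bfu_j\notin\cS_{q,N-1}$ makes $\binom{N-1}{\un-\bfu_j}=0$ by the convention adopted after \cref{eq:SimplexDef}, so both sides agree. There is no real obstacle here; the only subtlety is making explicit the bijection between $\cX_{j,i}(\un)$ and strings with composition $\un-\bfu_j$, which is exactly where the ratio of multinomial coefficients comes from and which explains why the result is independent of the position $i$---a manifestation of the permutation invariance of Dicke states that will be used repeatedly in later sections.
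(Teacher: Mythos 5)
Your proof is correct and follows essentially the same route as the paper's: both expand the Dicke state in the computational basis, apply $\bfK_{j,i}^{(N)}$ term by term so that only the $\bfx$ with $x_i=j$ survive, and then recognize the resulting unnormalized sum as $\ket{H_{\un-\bfu_j}}$. You make explicit the bijection between $\{\bfx:\comp(\bfx)=\un,\,x_i=j\}$ and the length-$(N-1)$ strings of composition $\un-\bfu_j$ and handle the degenerate case $n_j=0$, which the paper leaves implicit, but there is no substantive difference in the argument.
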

\begin{proof}
Let $\bfx \in \cQ^N$. Consider the unnormalized Dicke state $ \ket{H_\un}=\sum_{\bfx: \comp(\bfx)=\un} \ket{\bfx}$. Fix an $i\in\{1,2,\dots,N\}$.
As a result of applying $ \bfK_{j,i}^{(N)}$ to this state, the terms (states) $\ket\bfx$ with $x_i\neq j$ vanish, and the terms $\ket \bfx$ with $x_i=j$ are subjected to deletion, i.e., mapped to $\bfx_{\sim i}$.
Therefore, the number of $j$'s in the remaining states decreases by one, and as a result, the composition
 of the terms becomes $\un^\prime=(n_0,\dots,n_{j-1},n_j-1,n_{j+1},\ldots,n_{q-1})=\un-\bfu_j$. Succinctly, we have
    \begin{align*}
       \bfK_{j,i}^{(N)}\ket{H_\un} = \ket{H_{\un-\bfu_j}}.
    \end{align*}
Normalizing, we obtain \cref{eq: ji}. 
\end{proof}
Because of the symmetry of Dicke states, intuition suggests that the order of applying the deletion operators does not matter. 
The next lemma establishes this property rigorously.
\begin{lemma}\label{Lemma:Commutation}
     Let $\un\in\cS_{q,N}$ and $\ket{D_\un}$ be a Dicke state. Then for any $ j_1,j_2\in \{ 0,1,\ldots,q-1 \} $, $i_1,i_1^\prime \in \{ 1,2,\ldots,N \}$, and $i_2,i_2^\prime \in \{1,2,\ldots,N-1\}$, we have
    \begin{align*}
        \bfK^{(N-1)}_{j_1,i_2}\bfK^{(N)}_{j_2,i_1}\ket{D_\un} = \bfK^{(N-1)}_{j_2,i_2^\prime}\bfK^{(N)}_{j_1,i_1^\prime}\ket{D_\un} .
    \end{align*}
\end{lemma}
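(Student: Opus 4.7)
The plan is to prove this lemma as a direct two-step application of \cref{Lemma:DelOperatorAction}. The key observation enabling the argument is that the previous lemma tells us $\bfK_{j,i}^{(N)} \ket{D_\un}$ depends on the deletion \emph{type} $j$ but \emph{not} on the position $i$; moreover, the image is again a Dicke state (on $N-1$ qudits), so the lemma can be reapplied to it. Once both sides are reduced to the same multiple of the same Dicke state, commutativity of vector addition in $\Z^q$ finishes the job.

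Concretely, I would first apply \cref{Lemma:DelOperatorAction} to the inner operator on the left-hand side to obtain
\[
\bfK^{(N)}_{j_2,i_1}\ket{D_\un} \;=\; \sqrt{\frac{\binom{N-1}{\un-\bfu_{j_2}}}{\binom{N}{\un}}}\,\ket{D_{\un-\bfu_{j_2}}},
\]
which is a Dicke state indexed by $\un-\bfu_{j_2}\in \cS_{q,N-1}$. Then I would apply the lemma again, this time with $N$ replaced by $N-1$ and the type--position pair $(j_1,i_2)$, to get
\[
\bfK^{(N-1)}_{j_1,i_2}\bfK^{(N)}_{j_2,i_1}\ket{D_\un}
\;=\; \sqrt{\frac{\binom{N-2}{\un-\bfu_{j_1}-\bfu_{j_2}}}{\binom{N}{\un}}}\,\ket{D_{\un-\bfu_{j_1}-\bfu_{j_2}}},
\]
where I used that the two square-root factors telescope. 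Repeating the identical argument on the right-hand side with the pair $(j_2,i_2')$ following $(j_1,i_1')$ gives the same expression, since $\bfu_{j_1}+\bfu_{j_2}=\bfu_{j_2}+\bfu_{j_1}$ and multinomial coefficients depend only on the resulting composition.

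I do not foresee a substantive obstacle: the only thing to be careful about is the bookkeeping of the upper index $(N)$ versus $(N-1)$ on the Kraus operators, and the range of the position indices, which are already enforced by the lemma's hypotheses. The content of the lemma is really that \cref{Lemma:DelOperatorAction} makes the action position-independent on Dicke states, which immediately implies that any two sequential deletions on a Dicke state commute up to the ordering of types, and the types contribute an additive shift of $\un$ that is manifestly symmetric in $(j_1,j_2)$.
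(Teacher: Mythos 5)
Your proof is correct and takes essentially the same approach as the paper: apply \cref{Lemma:DelOperatorAction} twice to each side, observe that the square-root prefactors telescope to $\sqrt{\binom{N-2}{\un-\bfu_{j_1}-\bfu_{j_2}}\big/\binom{N}{\un}}$, and conclude by symmetry of $\bfu_{j_1}+\bfu_{j_2}$. The paper's proof is just a more compact rendering of the same two-step computation.
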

\begin{proof}
    Using \cref{Lemma:DelOperatorAction},
    \begin{align*}
        \bfK^{(N-1)}_{j_1,i_2}\bfK^{(N)}_{j_2,i_1}\ket{D_\un} &= \sqrt{\frac{\binom{N-1}{\un-\bfu_{j_2}}}{\binom{N}{\un}}}\bfK^{(N-1)}_{j_1,i_2}\ket{D_{\un-\bfu_{j_2}}}\\
        &=\sqrt{\frac{\binom{N-2}{\un-\bfu_{j_1}-\bfu_{j_2}}}{\binom{N}{\un}}}\ket{D_{\un-\bfu_{j_1}-\bfu_{j_2}}}.
    \end{align*}
    Similarly,
    \begin{align*}
        \bfK^{(N-1)}_{j_2,i_2^\prime}\bfK^{(N)}_{j_1,i_1^\prime}\ket{D_\un} &= \sqrt{\frac{\binom{N-1}{\un-\bfu_{j_1}}}{\binom{N}{\un}}}\bfK^{(N-1)}_{j_2,i_2}\ket{D_{\un-\bfu_{j_1}}}\\
        &=\sqrt{\frac{\binom{N-2}{\un-\bfu_{j_1}-\bfu_{j_2}}}{\binom{N}{\un}}}\ket{D_{\un-\bfu_{j_1}-\bfu_{j_2}}}\qedhere
    \end{align*}
\end{proof}
Again by symmetry of Dicke states, our arguments below will not depend on the location of the deleted qudits once their number is fixed.
This allows us to simplify the notation, writing $\bfK_j$ for a deletion of type $j$, without specifying the location $i$. Further, when the dimension $N$ is clear from the context, we also omit it.

A ``power'' of the deletion operator is defined through repeated application to states of decreasing
dimension. Let $e\ge 1, j\in\cQ,$ and $i_k\in\{ 1,2,\ldots,N-k+1 \}$ for $k=1,2,\dots,e.$  Define
    $$
    ( \bfK_j )^e\ket{D_\un} := \bfK^{N-e+1}_{j,i_e}\ldots\bfK^{N-1}_{j,i_2}\bfK^{N}_{j,i_1}\ket{D_{\un}}.
    $$
An explicit expression for $( \bfK_j )^e$ is derived next.
\begin{lemma}\label{Lemma:PowerOperator}
  Consider a Dicke state $\ket{D_\un},  \un\in\cS_{q,N}$. For any $j\in\cQ$, we have
    \begin{align*}
        ( \bfK_j )^e\ket{D_\un} 
        &=\sqrt{\frac{\binom{N-e}{\un-\ue_j}}{\binom{N}{\un}}}\ket{D_{\un-\ue_j}}.
    \end{align*}
    Here $\ue_j = e\bfu_j$, where $\bfu_j\in U_q$. 
\end{lemma}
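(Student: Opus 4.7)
The plan is induction on $e$, invoking \cref{Lemma:DelOperatorAction} as the base case and \cref{Lemma:Commutation} to justify that the order (and hence the location labels $i_1,\dots,i_e$) of the successive single-qudit deletions is irrelevant when acting on a Dicke state. This reduces the problem to a one-dimensional recursion: knowing the action of one more $\bfK_j$ on a Dicke state, I just need to iterate and collapse the resulting telescoping product of square-root factors.

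Concretely, for $e=1$ the claim is exactly \cref{Lemma:DelOperatorAction}. For the inductive step I would assume
\[
(\bfK_j)^{e-1}\ket{D_\un} \;=\; \sqrt{\frac{\binom{N-e+1}{\un-(e-1)\bfu_j}}{\binom{N}{\un}}}\,\ket{D_{\un-(e-1)\bfu_j}},
\]
then apply one more deletion $\bfK_j^{(N-e+1)}$ to the right-hand side via \cref{Lemma:DelOperatorAction}. This introduces the factor $\sqrt{\binom{N-e}{\un-e\bfu_j}/\binom{N-e+1}{\un-(e-1)\bfu_j}}$ and replaces the Dicke index by $\un-e\bfu_j=\un-\ue_j$. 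Multiplying the two square roots causes the $\binom{N-e+1}{\un-(e-1)\bfu_j}$ factor to cancel, producing exactly the stated coefficient $\sqrt{\binom{N-e}{\un-\ue_j}/\binom{N}{\un}}$. Equivalently, writing $(\bfK_j)^e$ out as a product and applying \cref{Lemma:DelOperatorAction} $e$ times yields the telescoping product
\[
\prod_{k=1}^{e}\sqrt{\frac{\binom{N-k}{\un-k\bfu_j}}{\binom{N-k+1}{\un-(k-1)\bfu_j}}} \;=\; \sqrt{\frac{\binom{N-e}{\un-\ue_j}}{\binom{N}{\un}}},
\]
which is the claim.

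I do not anticipate a substantive obstacle; the only point that deserves a short remark is the boundary case $n_j<e$. In this regime some intermediate Dicke state $\ket{D_{\un-k\bfu_j}}$ has its $j$-th composition entry equal to zero, so the next application of $\bfK_j$ annihilates it and the full $e$-fold product vanishes. Consistently, the right-hand side also vanishes because the multinomial coefficient $\binom{N-e}{\un-\ue_j}$ is zero by the convention adopted after \cref{def:composition} (the argument $\un-\ue_j$ has a negative entry and is therefore not in $\cS_{q,N-e}$). Hence the formula holds as stated for all $\un\in\cS_{q,N}$ and all $e\le N$, without further case analysis.
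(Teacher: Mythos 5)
Your proof is correct and matches the paper's: the paper likewise applies \cref{Lemma:DelOperatorAction} repeatedly to produce the same telescoping product of ratios of multinomial coefficients, with adjacent numerator and denominator factors cancelling. Your added remark on the boundary case $n_j<e$ is a sensible (if unstated in the paper) sanity check and does not change the argument.
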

\begin{proof}
    Using \cref{Lemma:DelOperatorAction},
    \begin{align*}
         \left( \bfK_j \right)^e\ket{D_\un} &= \sqrt{\frac{\binom{N-e}{\un-\ue_j}\ldots\binom{N-2}{\un-\bfu_j-\bfu_j}\binom{N-1}{\un-\bfu_j}}{\binom{N-e+1}{\un-\ue_j+\bfu_j}\ldots\binom{N-1}{\un-\bfu_j}\binom{N}{\un}}}\ket{D_{\un-\ue_j}}\\
         &=\sqrt{\frac{\binom{N-e}{\un-\ue_j}}{\binom{N}{\un}}}\ket{D_{\un-\ue_j}}. \qedhere
    \end{align*}
\end{proof}

\textit{Kraus set for the $t$-deletion channel:} We are now in a position to describe the Kraus set of a $t$-deletion channel for PI states. Let $e_0, e_1, \ldots, e_{q-1}$ be nonnegative integers. A deletion error labeled by the tuple
\begin{align}\label{eq:DefErrorVector}
    \ue := e_0\bfu_0+e_1\bfu_1+\ldots+e_{q-1}\bfu_{q-1},
\end{align}
meaning that a deletion of type $j$ ``occurred $e_j$ times'', i.e., by application of $(\bfK_j)^{e_j}$.
Note that the collection of error tuples of type \cref{eq:DefErrorVector} is precisely the discrete simplex $\cS_{q,t}$, cf. \cref{eq:SimplexDef}. The deletion operator associated with the error $\ue\in\cS_{q,t}$ is
\begin{align*}
    \mathbb{E}_\ue := (\bfK_0)^{e_0}(\bfK_1)^{e_1}\ldots (\bfK_{q-1})^{e_{q-1}}
\end{align*}
Using \cref{Lemma:DelOperatorAction,Lemma:Commutation,Lemma:PowerOperator}, we can write the Kraus set of the $t$-deletion channel for PI states as follows: 
\begin{align}\label{eq: KrausSetForDelChannel}
    \cE_{q,t} = \left\{\mathbb{E}_\ue :\ue \in \cS_{q,t} \right\}
\end{align}

The action of $\mathbb{E}_\ue$ on Dicke states takes a surprisingly simple form.
\begin{lemma}\label{Lemma:KrausOperatorAction}
    Let $\mathbb{E}_\ue\in \cE_{q,t}$ be a Kraus operator acting on a Dicke state $\ket{D_\un}$. We have
    \begin{align*}
        \mathbb{E}_\ue\ket{D_\un} = \sqrt{\frac{\binom{N-t}{\un-\ue}}{\binom{N}{\un}}}\ket{D_{\un-\ue}}.
    \end{align*}
\end{lemma}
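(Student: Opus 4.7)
The plan is to apply Lemma~\ref{Lemma:PowerOperator} recursively to each factor $(\bfK_j)^{e_j}$ in the product defining $\mathbb{E}_\ue$, and to observe that the multinomial coefficients telescope.

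First, unpack the definition $\mathbb{E}_\ue = (\bfK_0)^{e_0}(\bfK_1)^{e_1}\cdots(\bfK_{q-1})^{e_{q-1}}$, so that the rightmost factor $(\bfK_{q-1})^{e_{q-1}}$ acts on $\ket{D_\un}$ first. By Lemma~\ref{Lemma:PowerOperator}, it produces an amplitude $\sqrt{\binom{N-e_{q-1}}{\un-\ue_{q-1}}/\binom{N}{\un}}$ times the Dicke state $\ket{D_{\un-\ue_{q-1}}}$, where $\ue_{q-1}=e_{q-1}\bfu_{q-1}$. Next, $(\bfK_{q-2})^{e_{q-2}}$ acts on this reduced Dicke state, which now has total count $N-e_{q-1}$, yielding a fresh factor $\sqrt{\binom{N-e_{q-1}-e_{q-2}}{\un-\ue_{q-1}-\ue_{q-2}}/\binom{N-e_{q-1}}{\un-\ue_{q-1}}}$. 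The key observation is that the denominator $\binom{N-e_{q-1}}{\un-\ue_{q-1}}$ of this second factor is exactly the numerator produced by the first factor, so the two cancel under the square root.

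Continuing this telescoping for $j=q-3,q-2,\ldots,0$, all intermediate multinomial coefficients drop out, leaving only the initial denominator $\binom{N}{\un}$ and the terminal numerator $\binom{N-t}{\un-\ue}$ with $t=\sum_j e_j$ and $\ue=\sum_j e_j\bfu_j$. At the level of the quantum state, each $(\bfK_j)^{e_j}$ subtracts $\ue_j$ from the composition label, so the final Dicke state is $\ket{D_{\un-\ue}}$, yielding the claimed expression.

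One subtlety worth flagging is that the formula must be independent of the order in which the factors $(\bfK_j)^{e_j}$ are applied, since the definition of $\mathbb{E}_\ue$ fixes a particular ordering while the statement is inherently symmetric in the labels. This order-independence is already guaranteed by Lemma~\ref{Lemma:Commutation}, which shows that the single-deletion operators commute on Dicke states; powers thereof inherit the same property by pairwise application. This is the only potential obstacle, but once invoked, the telescoping computation above is essentially automatic, and no further technical work is needed beyond verifying that intermediate compositions $\un-\sum_{j'>j}\ue_{j'}$ remain in $\cS_{q,N-\sum_{j'>j}e_{j'}}$, which holds whenever $\binom{N-t}{\un-\ue}\ne 0$ (otherwise both sides vanish trivially).
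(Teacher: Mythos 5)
Your proof is correct and takes essentially the same route as the paper: apply Lemma~\ref{Lemma:PowerOperator} successively to each factor $(\bfK_j)^{e_j}$, starting from the rightmost, and observe that the intermediate multinomial coefficients telescope, leaving $\sqrt{\binom{N-t}{\un-\ue}/\binom{N}{\un}}$. The extra remarks on order-independence via Lemma~\ref{Lemma:Commutation} and on vanishing compositions are sound observations, though the paper simply uses the fixed ordering in the definition of $\mathbb{E}_\ue$ and does not spell them out.
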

\begin{proof}
    Using \cref{Lemma:PowerOperator}, 
    \begin{align*}
        \mathbb{E}_\ue\ket{D_\un}&=(\bfK_0)^{e_0}(\bfK_1)^{e_1}\ldots (\bfK_{q-1})^{e_{q-1}}\ket{D_\un}\\
        &=\sqrt{\frac{\binom{N-e_0-\ldots-e_{q-1}}{\un-\ue_0-\ldots-\ue_{q-1}}}{\binom{N-e_1-\ldots-e_{q-1}}{\un-\ue_1-\ldots-\ue_{q-1}}}}\\
        &\hspace*{.3in}\ldots\sqrt{\frac{\binom{N-e_{q-2}-e_{q-1}}{\un-\ue_{q-2}-\ue_{q-1}}}{\binom{N-e_{q-1}}{\un-\ue_{q-1}}}}\sqrt{\frac{\binom{N-e_{q-1}}{\un-\ue_{q-1}}}{\binom{N}{\un}}}\ket{D_{\un-\ue}}. \qedhere
    \end{align*}
\end{proof}
We will use this lemma to establish the error correction conditions for qudit PI codes in the next section.

\subsection{Error correction conditions for PI codes} 
Using the approach via the deletion errors in \cref{sec: deletions}, we show the error correction conditions for qudit PI codes.
\begin{theorem}\label{Theo:PIMainTheorem}
    Let $\cQ_{PI}$ be an $(N,K,q,d)$ PI code. Suppose that the coefficients in \cref{eq: basis vectors} satisfy $\alpha_{\un}^{(i)}\in \mathbb{C}$ for all $\un\in \cS_{q,N}, i=0,1,\ldots K-1$. Then $\cQ_{PI}$ has distance $d=t+1$ if and only if conditions {\rm (\hyperlink{C1}{C1})--(\hyperlink{C4}{C4})}
    hold for all distinct $i,j\in\{0,1,\ldots,K-1\}$ and all $\ue,\uf \in \cS_{q,t}$.
\end{theorem}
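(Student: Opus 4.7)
The plan is to apply the Knill-Laflamme (KL) conditions directly to the Kraus set $\cE_{q,t}$ of the $t$-deletion channel (\cref{eq: KrausSetForDelChannel}) and reduce them to algebraic identities on the coefficients $\alpha^{(i)}_\un$, using the explicit action of $\mathbb{E}_\ue$ on Dicke states established in \cref{Lemma:KrausOperatorAction}. Since deletions act as erasures for PI states, correcting every element of $\cE_{q,t}$ is by definition the same as distance $t+1$, so equivalence of the KL conditions with (C3)-(C4), with (C1)-(C2) guaranteeing that the basis is orthonormal in the first place, completes the argument.

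Using \cref{Lemma:KrausOperatorAction}, I would first write
\[
\mathbb{E}_\uf \ket{\bfc_j} = \sum_{\un \in \cS_{q,N}} \alpha^{(j)}_\un \sqrt{\frac{\binom{N-t}{\un-\uf}}{\binom{N}{\un}}}\, \ket{D_{\un-\uf}},
\]
and the analogous expression for $\bra{\bfc_i}\mathbb{E}_\ue^\dagger$. Orthonormality of Dicke states then forces $\un - \ue = \un' - \uf$, i.e., $\un' = \un - \ue + \uf$, yielding
\[
\bra{\bfc_i}\mathbb{E}_\ue^\dagger \mathbb{E}_\uf \ket{\bfc_j} = \sum_{\un \in \cS_{q,N}} (\alpha^{(i)}_\un)^* \alpha^{(j)}_{\un-\ue+\uf}\, \frac{\binom{N-t}{\un-\ue}}{\sqrt{\binom{N}{\un}\binom{N}{\un-\ue+\uf}}}.
\]
The off-diagonal KL orthogonality requirement for $i\ne j$ becomes exactly (C3), while writing the non-deformation requirement that the diagonal matrix element be independent of $i$ as a difference between two indices $i,j$ gives (C4). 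On the orthonormality side, the inner product $\langle \bfc_i | \bfc_j \rangle$ expanded in the Dicke basis collapses to (C1), and equality of norms is (C2); with unit normalization, these conditions guarantee a genuine $K$-dimensional code subspace.

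The main source of friction is bookkeeping the summation index. Terms with $\un - \ue$ having a negative component, or with $\un - \ue + \uf \not\in \cS_{q,N}$, must drop out automatically so that the sums can be written uniformly over all $\un \in \cS_{q,N}$. This will follow from the standard convention that $\binom{N-t}{\un-\ue} = 0$ outside $\cS_{q,N-t}$, and the complementary convention that $\alpha^{(j)}_{\un'}=0$ whenever $\un' \not\in \cS_{q,N}$. Once this is handled, the identification of the KL conditions with (C3)-(C4) is a direct comparison of sums, and deletion-erasure equivalence for PI states finishes the reduction to distance $t+1$.
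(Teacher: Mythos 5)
Your proposal is correct and follows essentially the same route as the paper: you invoke \cref{Lemma:KrausOperatorAction} to compute $\bra{\bfc_i}\mathbb{E}_\ue^\dagger\mathbb{E}_\uf\ket{\bfc_j}$, collapse the double sum using Dicke-state orthonormality, and identify the resulting expressions with \hyperlink{C3}{(C3)}/\hyperlink{C4}{(C4)}, with \hyperlink{C1}{(C1)}/\hyperlink{C2}{(C2)} handling basis orthonormality — exactly the paper's argument. Your additional remark about the boundary bookkeeping (terms with $\un-\ue$ or $\un-\ue+\uf$ outside the simplex vanishing by the zero-multinomial convention) is a reasonable clarification that the paper leaves implicit.
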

\begin{proof}
   Conditions (\hyperlink{C1}{C1}) and (\hyperlink{C2}{C2}) are necessary to ensure that the basis for the code space is orthonormal. We need to show that conditions (\hyperlink{C3}{C3}) and (\hyperlink{C4}{C4}) are equivalent to the Knill-Laflamme conditions for the qudit PI code with distance $d=t+1$.

Let $\mathbb{E}_{\ue},\mathbb{E}_{\uf}\in\cE_{q,t}$. By \cref{Lemma:KrausOperatorAction}, 
    \begin{align*}
        \bra{\bfc_i}\mathbb{E}_\ue^\dagger\mathbb{E}_{\uf}\ket{\bfc_j} &= \sum_{\un}\sum_{\un^\prime}(\alpha^{(i)}_{\un})^*\alpha^{(j)}_{\un^\prime}\sqrt{\frac{\binom{N-t}{\un-\ue}\binom{N-t}{\un^\prime-\uf}}{\binom{N}{\un}\binom{N}{\un^\prime}}}\delta_{\un-\ue,\un^\prime-\uf}\\
        &=\sum_{\un}(\alpha^{(i)}_{\un})^*\alpha^{(j)}_{\un-\ue+\uf}\frac{\binom{N-t}{\un-\ue}}{\sqrt{\binom{N}{\un}\binom{N}{\un-\ue+\uf}}},
    \end{align*}
    where we used orthonormality $\bra{D_{\un-\ue}}\ket{D_{\un^\prime-\uf}}=\delta_{\un-\ue,\un^\prime-\uf}$ (here $\delta_{\cdot,\cdot}$ is the Kronecker delta). Hence, Condition (\hyperlink{C3}{C3}) is equivalent to the orthogonality part of the Knill-Laflamme conditions. By repeating the same sequence of steps, one can show that Condition (\hyperlink{C4}{C4}) is equivalent to the non-deformation 
    part of the Knill-Laflamme conditions.
\end{proof}

\section{Fock state codes}\label{sec: Fock}

Spin codes and PI codes both lie within irreducible representations of \(SU(q)\).
We show that similar codes can also be constructed in a third setting, namely, the space of two or more bosonic (a.k.a.~continuous-variable) modes~\cite{albert2025bosonic,fabre2020modes,terhal2020towards}.
A bosonic state space is spanned by Fock states, and the subspace of two-mode Fock states with the same total excitation number houses irreducible representations of \(SU(2)\) (represented by the set of passive Gaussian transformations).
We use this and other connections to construct new bosonic codes supported on multi-mode Fock-state subspaces with the same total excitation.

A $K$-dimensional $q$-mode bosonic {\em Fock state code} is defined by the following basis:
\begin{definition}\label{Def:FockStateCodes}
    Let $\un \in \cS_{q,N}$ and let
    $$
    \ket{\un}_b:=\ket{n_0}_b\otimes \ket{n_1}_b\otimes\ldots\otimes\ket{n_{q-1}}_b,
    $$
where each $\ket{n_i}_b$ is a single-mode Fock state. A $K$-dimensional $q$-mode Fock state code $\cQ_{f}$ with total excitation $N$ is a $\complex$-linear space with the basis
    \begin{align}\label{eq:DefFockStateCodes}
        \ket{\bfc_i} = \sum_{\un\in\cS_{q,N}}\alpha^{(i)}_{\un}\ket{\un}_b,\quad i=0,1,\ldots,K-1.
    \end{align}
As before, we call $\cQ_f$ a {\em real code} if the coefficients $\alpha^{(i)}_{\un}$ are real. 
\end{definition}
Codes in this definition are often called {\em constant-excitation} codes because they are formed of states $\ket\un_b$ with a fixed $\ell_1$ norm of $\un$. Only such codes will be considered below.

Our notation $\ket{\cdot}_b$ indicates that the argument is a single- or multi-mode Fock state, which helps to distinguish it from qudit states, and we also extend this usage to inner products.

\subsection{Amplitude damping errors}

The prevailing error model in bosonic platforms is described by amplitude damping, which models leakage of energy carriers (photons or phonons) out of the system \cite{michael2016new,albert2018performance}.
The strength of the noise is governed by a real positive ``loss rate'' parameter \(\gamma\).
As the loss rate increases, the average excitation number of the input state decreases, with all states decaying to the all-zero, or vacuum, Fock state \(\ket{\underline{0}}_b\) as \(\gamma\to 1\).

Suppose that we have a code $\cQ$ defined by the basis states $ \{\ket{\bfc_i}\}_i$. For a quantum channel with the Kraus set $\cE$, the {\em code fidelity} is defined as
\begin{align*}
    \cF_{\cE,\cQ} = \min_{\psi_{\text{in}}} \sum_{\hat{E}\in\cE}\bra{\psi_{\text{in}}}\hat{E}^\dagger\hat{E}\ket{\psi_{\text{in}}}, 
\end{align*}
where the input state is
\begin{align*}
    \ket{\psi_{\text{in}}} = \sum_{i=0}^{K-1}a_i\ket{\bfc_i}. 
\end{align*}
Code fidelity is a measure of the distance between the input state and the recovered state, so the higher the fidelity value, the more error protection does the code offer.

Kraus operators for amplitude damping act on a single-mode Fock state according to the 
following pattern:
\begin{align}\label{eq:ADErrorAction}
    \hat{A}_x\ket{j}=\sqrt{(1-\gamma)^{j-x}\gamma^x}\sqrt{\binom{j}{x}}\ket{j-x},\quad j\geq x.
\end{align}

For $\ue\in\cS_{q,r}$, define the error operator $\hat{A}_{\ue}$ in the following way:
\begin{align*}
    \hat{A}_{\ue} = \hat{A}_{e_0}\otimes\hat{A}_{e_1}\otimes\ldots\otimes\hat{A}_{e_{q-1}}.
\end{align*}
Varying $\ue$, we obtain an error set
\begin{align}\label{eq:DefADErrorSet}
    \cE_t^{AD} = \{ \hat{A}_{\ue}: \ue\in\bar{\cS}_{q,t} \}, 
\end{align}
where
$\bar{\cS}_{q,t} = \bigcup_{1\leq r\leq t}\cS_{q,r}.$
\begin{definition}
    We say that a $q$-mode Fock state code $\cQ_f$ corrects $t$-AD errors if \remove{, for some constant $A>0$,} it satisfies the KL conditions for the error set in Eq. \eqref{eq:DefADErrorSet}.
\remove{\begin{align*}
    \cF_{\cE_t^{AD},\cQ_f}=1-A\gamma^{t+1}+\mathcal{O}(\gamma^{t+2}),
\end{align*}}
\end{definition}

In \cite{chuangADCode}, it is shown that \remove{if} for a constant-excitation Fock state code, \remove{\cref{eq:DefFockStateCodes}, satisfies the Knill--Laflamme conditions for the error set $\cE_t^{AD}$, then}
\begin{align}\label{eq: fidelity}
    \cF_{\cE_t^{AD},\cQ_f}= 1 - \binom{N}{t+1}\gamma^{t+1} + \mathcal{O}(\gamma^{t+2}).
\end{align}

In \cite{OuyangADCode}, the authors show that the entanglement fidelity also behaves in the same way. Note that both the worst-case and entanglement fidelity depend only on the total excitation $N$ and the multiplicity of errors $t$; in particular, they are independent of the number of modes $q$. 
This further affirms our goal to design codes with low total excitation.

\subsection{Error correction conditions}

In the following theorem, we show that the error-correction conditions for $q$-mode Fock state codes can be related to those of PI codes via the Vandermonde convolution.

\begin{theorem}\label{Theo:BosonMainTheoremAD}
Let $\cQ_f$ be a $K$-dimensional Fock state code with total excitation $N$ defined by the basis coefficients $\alpha_\un^{(i)}, \un\in \cS_{q,N}, 0\le i\le K-1$. Then $\cQ_f$ corrects $t$-AD errors if it satisfies conditions {\rm (\hyperlink{C1}{C1})--(\hyperlink{C4}{C4})}.
\end{theorem}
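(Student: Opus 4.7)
The strategy is to compute the matrix elements $\bra{\bfc_i}\hat{A}_\ue^\dagger \hat{A}_\uf \ket{\bfc_j}$ directly in the Fock basis and show that they factor into a $\gamma$-dependent prefactor times exactly the expressions appearing inside conditions (C3)--(C4). The only nontrivial combinatorial input is a multinomial form of the Vandermonde identity, which bridges the product of single-mode binomials natural to the Fock picture and the multinomial coefficient $\binom{N-t}{\un-\ue}$ natural to the PI picture. Once this bridge is in hand, the sufficiency of (C1)--(C4) follows by direct substitution, and the fidelity bound of \cite{chuangADCode} recalled in the paragraph preceding the theorem converts the Knill--Laflamme conditions into $t$-AD error correction.

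First I would use the single-mode action \eqref{eq:ADErrorAction} together with the tensor-product structure $\hat{A}_\ue = \hat{A}_{e_0}\otimes\cdots\otimes\hat{A}_{e_{q-1}}$ to obtain
\[
\bra{\un}_b \hat{A}_\ue^\dagger \hat{A}_\uf \ket{\un'}_b = \delta_{\un-\ue,\,\un'-\uf}\prod_{j=0}^{q-1} (1-\gamma)^{n_j-e_j}\gamma^{(e_j+f_j)/2}\sqrt{\binom{n_j}{e_j}\binom{n_j'}{f_j}}.
\]
The Kronecker delta forces $\un'=\un-\ue+\uf$, and because both $\un$ and $\un'$ lie in $\cS_{q,N}$, it additionally forces $|\ue|=|\uf|=:r$. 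Hence the $\un$-dependent factors $(1-\gamma)^{n_j-e_j}$ collapse to the constant $(1-\gamma)^{N-r}$, independent of which $\un$ is being summed over.

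Second, I would establish the multinomial Vandermonde identity
\[
\prod_{j=0}^{q-1}\binom{n_j}{e_j} \;=\; \frac{\binom{N}{r}\binom{r}{\ue}\binom{N-r}{\un-\ue}}{\binom{N}{\un}},
\]
by direct factorial expansion using $\sum_j n_j = N$ and $\sum_j e_j = r$. Applying it to both $\ue$ and $\uf$ (and noting that $\un'-\uf=\un-\ue$) collapses the product of square-rooted binomials into a single factor $\binom{N-r}{\un-\ue}$ divided by $\sqrt{\binom{N}{\un}\binom{N}{\un-\ue+\uf}}$, up to an $\un$-independent prefactor $\binom{N}{r}\sqrt{\binom{r}{\ue}\binom{r}{\uf}}\,(1-\gamma)^{N-r}\gamma^r$. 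Substituting into $\bra{\bfc_i}\hat{A}_\ue^\dagger\hat{A}_\uf\ket{\bfc_j}=\sum_{\un}(\alpha_\un^{(i)})^*\alpha_{\un-\ue+\uf}^{(j)}\bra{\un}_b \hat{A}_\ue^\dagger \hat{A}_\uf \ket{\un-\ue+\uf}_b$, the $\un$-sum is precisely the left-hand side of (C3), and the analogous computation for $i=j$ minus its $j$-counterpart yields the left-hand side of (C4), with the parameter $t$ in (C3)--(C4) playing the role of the actual error weight $r$.

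Applying this reduction for every $r\in\{1,\ldots,t\}$ thus promotes (C1)--(C4) to the full Knill--Laflamme conditions for $\cE_t^{AD}$, and the cited fidelity estimate then delivers $\cF_{\cE_t^{AD},\cQ_f}=1-\mathcal{O}(\gamma^{t+1})$. The main obstacle is the bookkeeping in the second step: one must verify that the product $\prod_j\sqrt{\binom{n_j}{e_j}\binom{n_j'}{f_j}}$, after imposing the constraints $n_j'=n_j-e_j+f_j$ and $|\ue|=|\uf|=r$, assembles cleanly into a single $\binom{N-r}{\un-\ue}/\sqrt{\binom{N}{\un}\binom{N}{\un-\ue+\uf}}$ factor; the rest is mechanical. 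A secondary subtlety, already present in the parallel PI statement (Thm.~\ref{Theo:PIMainTheorem}), is that (C3)--(C4) in Fig.~\ref{fig: KL conditions} are written with a fixed $t$, whereas the reduction uses the identity at each weight $r\le t$; I would interpret the conditions as a family indexed by the weight of $\ue,\uf$, consistently with how Thm.~\ref{Theo:PIMainTheorem} uses them.
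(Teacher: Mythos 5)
Your computation of the matrix element and the multinomial Vandermonde reduction (Lemma~\ref{Lemma: multinomial}) matches the paper's through Eq.~\eqref{eq:InnerProd}, but there is a real gap at the final step, which you flag but dismiss. After the delta constraint forces $|\ur|=|\ur'|=r$, your $\un$-sum equals the expression in (\hyperlink{C3}{C3}) \emph{with the parameter $t$ replaced by $r$}, for each $r\in\{1,\dots,t\}$. The theorem, however, only assumes (\hyperlink{C1}{C1})--(\hyperlink{C4}{C4}) at the single top weight $t$, i.e.\ for $\ue,\uf\in\cS_{q,t}$. Your proposal to ``interpret the conditions as a family indexed by the weight of $\ue,\uf$'' silently strengthens the hypothesis and therefore proves a weaker statement. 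Your justification that this matches how Theorem~\ref{Theo:PIMainTheorem} uses them is also incorrect: there the Kraus set $\cE_{q,t}$ defined in \eqref{eq: KrausSetForDelChannel} consists of weight-exactly-$t$ deletion operators, so only weight-$t$ inner products ever arise, whereas the AD Kraus set $\cE_t^{AD}$ in \eqref{eq:DefADErrorSet} ranges over $\bar\cS_{q,t}=\bigcup_{1\le r\le t}\cS_{q,r}$, which is precisely why the Fock-state case needs an extra argument.

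The missing step is a second application of Vandermonde --- now as the full convolution of Lemma~\ref{lemma:convolution} rather than the pointwise identity --- which pushes weight-$r$ conditions up to weight $t$: writing
\begin{align*}
  \binom{N-r}{\un-\ur}=\sum_{\ue^\prime\in\cS_{q,t-r}}\binom{t-r}{\ue^\prime}\binom{N-t}{\un-\ur-\ue^\prime}
\end{align*}
and setting $\ue=\ur+\ue'$, $\uf=\ur'+\ue'\in\cS_{q,t}$, one finds that each weight-$r$ inner product is a nonnegative linear combination of the weight-$t$ sums appearing in (\hyperlink{C3}{C3}) (and likewise for (\hyperlink{C4}{C4})), so vanishing at weight $t$ alone already forces all $r\le t$ inner products to vanish. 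Once you add this step, your argument coincides with the paper's.
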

\begin{proof}
    Let $\un\in \cS_{q,N}$ and $\ur\in\cS_{q,r}$, where $r\leq t$. Then, using \cref{eq:ADErrorAction}
    and \cref{Lemma: multinomial} in succession, we obtain
    \begin{align*}
        \hat{A}_{\ur}\ket{\un}_b& = \sqrt{(1-\gamma)^{N-r}\gamma^r}\sqrt{\binom{n_0}{r_0}\ldots\binom{n_{q-1}}{r_{q-1}}}\ket{\un-\ur}_b\\
    &=  \sqrt{(1-\gamma)^{N-r}\gamma^r}\sqrt{\frac{\binom{N}{r}\binom{r}{\ur}\binom{N-r}{\un-\ur}}{\binom{N}{\un}}}\ket{\un-\ur}_b.   
\end{align*}
Let $r,r^\prime \leq t$ be nonnegative integers. Let $\ur\in\cS_{q,r}$ and $\ur^\prime\in\cS_{q,r^\prime}$, then the inner product
\begin{multline*}
    \bra{\bfc_i}\hat{A}_\ur^\dagger\hat{A}_{\ur^\prime}\ket{\bfc_j}= \\ 
    \sqrt{(1-\gamma)^{2N-r-r^\prime}\gamma^{r+r^\prime}\binom{N}{r}\binom{N}{r^\prime}\binom{r}{\ur}\binom{r^\prime}{\ur^\prime}} \\
    \times \sum_{\un}\sum_{\un^\prime}(\alpha^{(i)}_{\un})^*\alpha^{(j)}_{\un^\prime}\sqrt{\frac{\binom{N-r}{\un-\ur}\binom{N-r^\prime}{\un^\prime-\ur^\prime}}{\binom{N}{\un}\binom{N}{\un^\prime}}}\bra{\un-\ur}\ket{\un^\prime-\ur^\prime}_b
\end{multline*}
gives a nonzero value only when $\un-\ur=\un^\prime-\ur^\prime$, but this can hold only if the $\ell_1$ norms of $\ur,\ur'$ are equal, i.e., $r=r^\prime$. Using this, we further find
\begin{multline}\label{eq:InnerProd}
\bra{\bfc_i}\hat{A}_\ur^\dagger\hat{A}_{\ur^\prime}\ket{\bfc_j}= (1-\gamma)^{N-r}\gamma^{r}\binom{N}{r}\sqrt{\binom{r}{\ur}\binom{r}{\ur^\prime}}\\
\times\left(\sum_{\un}(\alpha_{\un}^{(i)})^*\alpha^{(j)}_{\un-\ur+\ur^\prime}\frac{\binom{N-r}{\un-\ur}}{\sqrt{\binom{N}{\un}\binom{N}{\un-\ur+\ur^\prime}}}\right).
\end{multline}
The Vandermonde convolution, \cref{lemma:convolution}, gives 
\begin{align}\label{eq:VondConv}
    \binom{N-r}{\un-\ur}=\sum_{\ue^\prime\in\cS_{q,t-r}}\binom{t-r}{\ue^\prime}\binom{N-t}{\un-\ur-\ue^\prime}.
\end{align}
Our plan is to replace $\binom{N-r}{\un-\ur}$ in \cref{eq:InnerProd} with the sum on the right-hand side, but before that, let us 
introduce new variables $\ue,\uf\in \cS_{q,t}$ such that $\ue=\ur+\ue^\prime$ and $\uf=\ur^\prime+\ue^\prime$ (this is always possible). Making these changes, we obtain
\begin{multline*}
\bra{\bfc_i}\hat{A}_\ur^\dagger\hat{A}_{\ur^\prime}\ket{\bfc_j}=(1-\gamma)^{N-r}\gamma^r\binom{N}{r}\sqrt{\binom{r}{\ur}\binom{r}{\ur^\prime}}\\ \times\sum_{\ue^\prime\in\cS_{q,t-r}}\binom{t-r}{\ue^\prime}\Biggl(\sum_\un(\alpha_{\un}^{(i)})^*\alpha^{(j)}_{\un-\ue+\uf}\frac{\binom{N-t}{\un-\ue}}{\sqrt{\binom{N}{\un}\binom{N}{\un-\ue+\uf}}}\Biggr)
\end{multline*}
Therefore, as long as condition (\hyperlink{C3}{C3}) holds, orthogonality part of the KL conditions is satisfied. Following the same sequence of steps, one can show that (\hyperlink{C4}{C4}) suffices for
the non-deformation condition. Finally, conditions (\hyperlink{C1}{C1}), (\hyperlink{C2}{C2}) should be satisfied for the code
basis to be orthonormal.
\end{proof}

Equipped with this result, we give the following definition.
\begin{definition}[Bosonic distance]\label{def: Bosonic distance}
A $K$-dimensional Fock state code \cref{eq:DefFockStateCodes} has {\em bosonic distance} $d_b=t+1$ 
if it corrects $t$-AD errors.
\end{definition}
Given a code on $q$ modes with total excitation $N$, dimension $K$, and bosonic distance $d_b$, below we call it an
$(N,K,q,d_b)$ Fock state code.

\subsection{Equivalence of Fock state codes and PI codes}
Relying on the results established so far, we now will argue that qudit PI codes and multi-mode Fock state codes are equivalent.
This will become apparent once we introduce some notation. For a $q$-tuple $\un\in \cS_{q,N}$, let $\ket{D_\un}$ be 
a Dicke state and let $\ket{\un}_b$ be a Fock state. There is a mapping between them, which we
denote by $f$, that identifies the excitation pattern of each Dicke state with a corresponding Fock-state label:
    \begin{align}\label{eq:mappingf}
       \ket{D_\un}\quad\xLeftrightarrow[\text{$f^{-1}$}]{\text{$f$}} \quad\ket{\un}_b 
    \end{align}
For example, the Dicke state \(\ket{111}\) corresponds to the Fock state \(\ket{030}_b\), where we remind the reader that a Dicke state is labeled by the composition of the qudit basis-state labels participating in the state 
(see Definition \ref{def:composition}).

This mapping can be seen to arise directly from second quantization.
For example, a two-mode Fock space is the state space of particles in one of two possible orbitals, call them \(\psi\) and \(\phi\).
The vacuum Fock state \(|00\rangle\) then corresponds to the zero-particle state, the two single-occupation Fock states are \(|10\rangle = |\psi\rangle\) and \(|01\rangle = |\phi\rangle\), the three double-occupation Fock states are \(|20\rangle = |\psi\psi\rangle\), \(|11\rangle \propto |\psi\phi\rangle + |\phi\psi\rangle\), and \(|02\rangle = |\phi\phi\rangle\), etc.
Identifying \(\psi \leftrightarrow 0\) and \(\phi \leftrightarrow 1\) yields the Dicke states.

Combining this mapping with our derived relations between the error-correction conditions on erasure and amplitude damping yields Fock state codes from PI codes.

\begin{proposition}\label{prop:equivalence}
Applying the mapping $f$ to a $(N,K,q,t+1)$ PI code results in a $(N,K,q,t+1)$ Fock state code.
\end{proposition}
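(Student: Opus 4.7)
The plan is to exploit the fact that the two code classes share the \emph{same} index set $\cS_{q,N}$, the \emph{same} basis coefficients $\alpha_\un^{(i)}$, and (crucially) the \emph{same} error-correction conditions (\hyperlink{C1}{C1})--(\hyperlink{C4}{C4}). The map $f$ in \cref{eq:mappingf} is a bijection between the Dicke-state basis $\{\ket{D_\un}\}_{\un\in\cS_{q,N}}$ and the constant-excitation Fock basis $\{\ket{\un}_b\}_{\un\in\cS_{q,N}}$, so it extends by $\complex$-linearity to an invertible linear map between the two $\binom{N+q-1}{q-1}$-dimensional ambient spaces that preserves the coefficient tuples $(\alpha_\un^{(i)})_\un$ defining each codeword. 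Hence the image of a PI code of dimension $K$ is automatically a Fock state code of dimension $K$ with the \emph{same} coefficient data.

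First, I would check that dimension and total excitation are preserved: the PI code has length $N$, meaning every basis vector is a superposition of Dicke states indexed by $\cS_{q,N}$, so the image is a superposition of Fock states $\ket{\un}_b$ with $\sum_i n_i=N$, which is precisely a constant-excitation state with total excitation $N$. Next I would verify that conditions (\hyperlink{C1}{C1}) and (\hyperlink{C2}{C2}) transfer unchanged: they involve only the coefficients $\alpha_\un^{(i)}$ and the counting index $\un$, not the underlying Hilbert space, so orthonormality on the PI side yields orthonormality on the Fock side (using $\langle\un|\un'\rangle_b=\delta_{\un,\un'}$).

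The heart of the argument is the distance preservation. By hypothesis and \cref{Theo:PIMainTheorem}, the coefficients $\alpha_\un^{(i)}$ satisfy (\hyperlink{C3}{C3}) and (\hyperlink{C4}{C4}) for all $\ue,\uf\in\cS_{q,t}$ and all $i\ne j$. These are conditions \emph{solely on the numerical coefficients} --- they make no reference to either the Dicke or Fock basis. Invoking \cref{Theo:BosonMainTheoremAD} in the forward direction, the same coefficient data, interpreted as a Fock state code, corrects all $t$-AD errors, which by \cref{def: Bosonic distance} means bosonic distance at least $t+1$. Thus the image is an $(N,K,q,t+1)$ Fock state code.

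The main ``obstacle'' is really just the conceptual one of recognizing that (\hyperlink{C3}{C3})--(\hyperlink{C4}{C4}) are purely algebraic conditions on the coefficient array and therefore act as a common certificate for both code families; once this is spelled out the proof reduces to citing \cref{Theo:PIMainTheorem,Theo:BosonMainTheoremAD} back-to-back. I would present the proof as a two-line chain: PI distance $t+1$ $\Longrightarrow$ (\hyperlink{C1}{C1})--(\hyperlink{C4}{C4}) hold $\Longrightarrow$ $f$-image is a Fock state code of bosonic distance $t+1$, with the remark that the converse direction is equally immediate, establishing a genuine equivalence and not merely a one-way implication.
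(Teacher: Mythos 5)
Your proof takes exactly the same route as the paper's, which is just the two-step chain: \cref{Theo:PIMainTheorem} gives PI distance $t+1$ $\Leftrightarrow$ conditions (\hyperlink{C1}{C1})--(\hyperlink{C4}{C4}), and \cref{Theo:BosonMainTheoremAD} gives (\hyperlink{C1}{C1})--(\hyperlink{C4}{C4}) $\Rightarrow$ the Fock code corrects $t$-AD errors; the observation that the conditions are purely on the shared coefficient array $(\alpha^{(i)}_\un)$ is the correct bridge. One caution about your closing remark that ``the converse direction is equally immediate, establishing a genuine equivalence'': \cref{Theo:BosonMainTheoremAD} is stated only as a sufficient condition (an ``if,'' not an ``iff''), so the reverse map Fock $\to$ PI does not follow from citing these two theorems back-to-back --- consistent with the one-way solid arrow from PI to Fock in \cref{fig: relations}.
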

\begin{proof}
    By combining \cref{Theo:PIMainTheorem} and \cref{Theo:BosonMainTheoremAD}.
\end{proof}
The above correspondences between error-correction conditions and basis states allow us to directly construct bosonic codes that can correct AD errors using the large existing literature of qubit and qudit PI codes \cite{ruskaiExchange,ruskai-polatsek,ouyangHigherDimensions,ouyangPI,ouyangQudit,IanEricDihedralIEEE,exoticGates,kubischtaUnitaryt,aydin2023family,gross}.

\section{Spin codes}
\label{sec:spin}

While a \(D\)-dimensional nuclear manifold is typically interpreted as an irrep of its corresponding angular momentum group \(SU(2)\), it also houses irreps \(SU(q)\) for any \(q\) that admits an irrep of dimension \(D\).
This interpretation allows us to construct new codes for hyperfine systems that pack information in a different way than what is dictated by the \(SU(2)\) interpretation.

For example, the discrete simplices \(\cS_{2,5}\) and \(\cS_{3,2}\) both consist of 6 points.
This implies that a six-dimensional space houses both the spin-\(5/2\) irrep of \(SU(2)\) and the \((2,0)\)-irrep of \(SU(3)\)~\cite{georgi2000lie}.
The geometry of the former is a ladder of angular momentum states (a 1-simplex), while the latter states from a triangle~\cite[Eq.~(9.30)]{georgi2000lie} (a 2-simplex).
A trivial code can then be defined for both using the corners of each simplex, with the \(SU(3)\) case yielding a higher logical dimension. 

More powerful codes can be defined similarly using the advantages of packing in higher-dimensional simplices, similar to the advantages of using higher-dimensional spheres for quantum spherical codes~\cite{jain2024quantum}.
Since nuclear manifolds are often completely controllable~\cite{asaad2020coherent,fernandez2024navigating,low2025control}, such codes are no less realizable than their \(SU(2)\) counterparts.

\subsection{\texorpdfstring{\(SU(q)\)}{} and the Jordan-Schwinger map}
\label{subsec:js}
We define spin codes from Fock state codes using the Jordan-Schwinger (JS) mapping (more precisely, its \(SU(q)\) generalization~\cite{klein1991boson}).
This mapping allows us to simply associate each Fock state \(|\underline n\rangle_b\) with the spin state \(|\underline n\rangle_s\) while also relating bosonic noise to noise on this spin subspace.

Let $\{J_i : i=1,2,\ldots,q^2-1\}$ be the generators of the Lie algebra $\mathfrak{s u}(q)$ in the fundamental (i.e., defining) \(q\)-dimensional irreducible representation, or irrep~\cite{georgi2000lie}.
The JS map yields a simple expression for the representation of this Lie algebra in terms of operators that are quadratic in the bosonic creation and annihilation operators.
These, in turn, generate the corresponding Lie group of passive linear-optical transformations acting on \(q\) bosonic modes.

For $i=1,\dots,q$, let 
    \begin{gather*}
    \ah_i=\underbrace{I\otimes\dots\otimes I}_{i-1}\otimes~\ah\otimes
       \underbrace{I\otimes\dots\otimes I}_{q-i}
    \end{gather*}
 be annihilation operators, with their adjoints being the creation operators. The JS quadratic representation of the Lie algebra is then
\begin{align}\label{eq:js_lie-algebra}
    \sum_{j,k=1}^{q}\ah_j^\dagger(J_i)_{jk}\ah_k, \quad i=1,\ldots,q^2-1~.
\end{align}

This Lie algebra can be upgraded to \(\mathfrak{u}(q)\) by plugging in the identity matrix for \(J_i\) to yield the total excitation operator.
This is a constant, \(N\), on our simplex-labeled subspaces.
In other words, rotations generated by quadratic combinations of the above type preserve the total excitation number \(N\).

For each \(N\), Fock states \(|\underline{n}\rangle_b\) are in one-to-one correspondence with complex-valued monomials of the form \(z_1^{n_1}\cdots z_q^{n_q}\).
As such, the representation of \(\mathfrak{u}(q)\) on that fixed-\(N\) space is Sym$^N(\mathbb{C}^q)$, i.e., the irreducible representation on homogeneous degree-\(N\) polynomials in \(q\) variables~\cite{gitman1993coherent}, \cite[Sec.~11.2.2]{VilenkinKlimyk1993}
(\textit{a.k.a.}~the completely symmetric irrep~\cite{klein1991boson}).
The fundamental irrep is present in the single-excitation space, i.e., at \(N=1\).

Our simplex mapping then associates such irreps with irreps acting on Dicke states \(|D_{\underline n}\rangle\), but we can also think of an isolated nuclear manifold as housing an irrep of the appropriate dimension.
This yields the ``spin'' codes that we now define.

The Hilbert space of a spin-$N$ system over $SU(q)$ has the basis $\{\ket{\un}_s : \un\in\cS_{q,N}\}$. 
A $K$-dimensional {\em spin code} $\cQ_{sp}$ is a subspace of the spin-$J$ system with the logical basis
\begin{align}\label{eq:DefSpinCode}
    \ket{\bfc_i} = \sum_{\un\in\cS_{q,N}}\alpha^{(i)}_{\un}\ket{\un}_s, \quad i=0,1,\ldots,K-1,
\end{align}
where $\alpha^{(i)}_\un$ are complex coefficients. 

\begin{definition}[Spin distance]
    A spin code $\cQ_{sp}$ has spin distance $d_s=t+1$ if it detects the errors from the following set:
    \begin{align}\label{eq:spinErrorSet}
        \epsilon_t^s=\{J_{i_1}J_{i_2}\dots J_{i_t} : i_j=1,2,\ldots,q^2-1\}.
    \end{align}
\end{definition}
Given an $SU(q)$ spin code of total spin $N$, dimension $K$, and distance $d_s$, we refer to it as an $(N,K,q,d_s)$ spin code.

\subsection{Error Correction Conditions}
In the following theorem, we show that the error-correction conditions for $SU(q)$ spin codes are related to qudit PI codes.

\begin{theorem}\label{Theo:SpinMainTheorem}
Let $\cQ_{sp}$ be an $(N,K,q,d_s)$ code defined by the basis coefficients $\alpha_\un^{(i)}, \un\in \cS_{q,N}, 0\le i\le K-1$ as in \eqref{eq:DefSpinCode}. If $\cQ_{sp}$ satisfies conditions {\rm (\hyperlink{C1}{C1})--(\hyperlink{C4}{C4})}, then its distance $d_s= t+1$.
\end{theorem}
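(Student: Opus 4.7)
The plan is to reduce the spin detection condition on $\epsilon_t^s$ to the AD-error conditions already established in \cref{Theo:BosonMainTheoremAD}. Under the JS correspondence $\ket{\un}_s\leftrightarrow\ket{\un}_b$ from \cref{subsec:js}, each generator $J_i$ becomes the bilinear $\sum_{j,k}\ah_j^\dagger (J_i)_{jk}\ah_k$, and every $E\in\epsilon_t^s$ becomes a polynomial of bidegree $(t,t)$ in creation/annihilation operators that preserves the total excitation.

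I would first bring a product $J_{i_1}\cdots J_{i_t}$ to normal-ordered form by repeatedly applying $[\ah_j,\ah_k^\dagger]=\delta_{jk}$. Since every bilinear $\ah_j^\dagger\ah_k$ is number-conserving, the resulting expansion has the form
\[
J_{i_1}\cdots J_{i_t}=\sum_{s=0}^{t}\,\sum_{\uf,\ue\in\cS_{q,s}}c^{(s)}_{\uf,\ue}\,(\ah^\dagger)^{\uf}\ah^{\ue},
\]
with the balanced bidegree $\sum f_i=\sum e_i=s$ in each surviving monomial. Consequently $\bra{\bfc_i}E\ket{\bfc_j}$ decomposes into a linear combination of matrix elements of the normal-ordered monomials $(\ah^\dagger)^\uf\ah^\ue$ with $\ue,\uf\in\cS_{q,s}$ for $0\le s\le t$.

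Using $\ah^{\ue}\ket{\un}_b=\sqrt{\prod_i n_i!/(n_i-e_i)!}\,\ket{\un-\ue}_b$, the adjoint identity for $(\ah^\dagger)^\uf$, and a regrouping of multinomial factorials, I would verify that
\[
\bra{\bfc_i}(\ah^\dagger)^\uf\ah^\ue\ket{\bfc_j}=\frac{N!}{(N-s)!}\sum_{\un}(\alpha^{(i)}_{\un})^*\alpha^{(j)}_{\un-\ue+\uf}\frac{\binom{N-s}{\un-\ue}}{\sqrt{\binom{N}{\un}\binom{N}{\un-\ue+\uf}}}.
\]
For $s=t$, the right-hand side is precisely the quantity controlled by (\hyperlink{C3}{C3}) and (\hyperlink{C4}{C4}), giving orthogonality for $i\ne j$ and non-deformation for $i=j$. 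For $0<s<t$, I would invoke the Vandermonde identity $\binom{N-s}{\un-\ue}=\sum_{\ue'\in\cS_{q,t-s}}\binom{t-s}{\ue'}\binom{N-t}{\un-\ue-\ue'}$, exactly as in the proof of \cref{Theo:BosonMainTheoremAD}, and substitute $\ue\mapsto\ue+\ue'$ and $\uf\mapsto\uf+\ue'$ so both live in $\cS_{q,t}$; each summand then matches (\hyperlink{C3}{C3})/(\hyperlink{C4}{C4}). The $s=0$ piece is a scalar multiple of the identity, whose matrix elements reduce to $\delta_{ij}$ by (\hyperlink{C1}{C1})/(\hyperlink{C2}{C2}) and trivially satisfy detection. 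Combining all contributions yields $\bra{\bfc_i}E\ket{\bfc_j}=\lambda_E\delta_{ij}$ for every $E\in\epsilon_t^s$, i.e., $d_s\ge t+1$.

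The main obstacle is the normal-ordering step: one must check explicitly that after applying the CCR, every surviving monomial has balanced bidegree $\sum f_i=\sum e_i\le t$, with no unbalanced contaminating terms. This is guaranteed because each $\ah_j^\dagger\ah_k$, and hence any commutator of such bilinears, preserves particle number, but the combinatorial bookkeeping is the most tedious part of the argument. Everything that follows is a direct translation of the Fock-code computation of \cref{Theo:BosonMainTheoremAD} through the JS map.
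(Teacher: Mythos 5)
Your proposal is correct and takes essentially the same route as the paper: push the product $J_{i_1}\cdots J_{i_t}$ through the Jordan--Schwinger map, decompose the resulting number-conserving operator into normal-ordered monomials $(\ah^\dagger)^{\uf}\ah^{\ue}$ with $\ue,\uf\in\cS_{q,s}$, evaluate the matrix elements using the multinomial identity, and for $s<t$ apply the Vandermonde convolution to lift to $\cS_{q,t}$ so that (\hyperlink{C3}{C3})/(\hyperlink{C4}{C4}) apply verbatim. Your observation that the balanced bidegree $|\uf|=|\ue|$ is automatic from number conservation makes the normal-ordering step slightly more explicit than the paper's version, which allows independent ranges for the creation and annihilation exponents and only later enforces balance via the Kronecker delta; apart from an immaterial swap of $\ue\leftrightarrow\uf$ in the displayed matrix element (harmless since both range over all of $\cS_{q,s}$), the argument matches.
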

\begin{proof}
Recall that a spin code has a distance $d_s=t+1$ if it can detect the errors from the set $\epsilon_t^s$ as in \eqref{eq:spinErrorSet}. Let $ \mathbf{J}=J_{i_1}J_{i_2}\ldots J_{i_t}\in\epsilon_t^s $. Then to satisfy the KL orthogonality condition, we need to show that
\begin{align}\label{eq:OrthCondSpinExpanded}
    \bra{\bfc_i}\bfJ\ket{\bfc_j}=\sum_{\un\in\cS_{q,N}}\sum_{\un^\prime\in\cS_{q,N}}(\alpha_{\un}^{(i)})^*\alpha_{\un^\prime}^{(j)}\bra{\un}\bfJ\ket{\un^\prime}_s=0
\end{align}
for all $\bfJ\in\epsilon_t^s$. Define the mapping from an arbitrary spin state on $SU(q)$ to an arbitrary $q$-mod Fock state
\begin{align}\label{eq:smap}
    \ket{\un}_s \quad\xLeftrightarrow[\text{$s^{-1}$}]{\text{$s$}}\quad\ket{\un}_b
\end{align}
By definition of the JS map, we have
\begin{align*}
    s\left(J_i\ket{\un}_s\right) = \operatorname{js}(J_i)s(\ket{\un}_s) = \operatorname{js}(J_i)\ket{\un}_b, \quad i=1,\ldots,q^2-1,
\end{align*}
and therefore,
\begin{align}\label{eq:JSmappedInnerProd}
    \bra{\un}\mathbf{J}\ket{\un^\prime}_s = \bra{\un}\operatorname{js}(\mathbf{J})\ket{\un^\prime}_b.
\end{align}
Note that the JS map of $J_i$ has the form
\begin{align*}
    \operatorname{js}(J_i)=\sum_{j,k}\gamma_{jk}^{(i)}\ah^\dagger_j\ah_k,
\end{align*}
where $\gamma_{jk}^{(i)}\in \mathbb{C}$.
 Hence the image of $\bfJ=J_{i_1}J_{i_2}\ldots J_{i_t}$ under JS has the form 
\begin{align*}
\operatorname{js}(\mathbf{J})=\sum_{j_1,k_1}\cdots\sum_{j_t,k_t}(\gamma_{j_1k_1}^{i_1}\ldots\gamma_{j_tk_t}^{i_t})\ah_{j_1}^\dagger\ah_{k_1}\ldots \ah_{j_t}^\dagger\ah_{k_t}
\end{align*}
For an integer $r\ge0$ and $\ue\in \cS_{q,r}$, define the operator 
\begin{align*}
    \hat{B}_{\ue}= \ah^{e_0}\otimes\ah^{e_1}\otimes\ldots\otimes \ah^{e_{q-1}}.
\end{align*}
Recalling the commutation relations $[\ah_i,\ah_j^\dagger]=\delta_{i,j}$, $[\ah_i,\ah_j]=0$, and $[\ah_i^\dagger,\ah_j^\dagger]=0$, we have
\begin{align*}
    js(\mathbf{J})=\sum_{r=0}^t\sum_{r^\prime=0}^t\sum_{\ur\in\cS_{q,r}}\sum_{\ur^\prime\in \cS_{q,r^\prime}}  \rho_{\ur,\ur^\prime}\hat{B}_{\ur}^\dagger\hat{B}_{\ur^\prime}
\end{align*}
Here $\rho_{\ur,\ur^\prime}$ is a set of complex coefficients. Then, using \eqref{eq:JSmappedInnerProd}, we have
\begin{align*}
    \bra{\un}\mathbf{J}\ket{\un^\prime}_s=\sum_{r=0}^t\sum_{r^\prime=0}^t\sum_{\ur\in\cS_{q,r}}\sum_{\ur^\prime\in \cS_{q,r^\prime}}  \rho_{\ur,\ur^\prime} \bra{\un}\hat{B}_{\ur}^\dagger\hat{B}_{\ur^\prime}\ket{\un^\prime}_b
\end{align*}
For a single mode we have
  $$
  \ah^e\ket n_b=\sqrt{e!\binom ne}\ket{n-e}_b.
  $$
Together with the definition of $\hat B_\ue$ this yields
\begin{align*}
    \hat{B}_\ur\ket{\un}_b&= 
    \sqrt{r_0!\dots r_{q-1}!\binom{n_0}{r_0}\ldots\binom{n_{q-1}}{r_{q-1}}}\ket{\un-\ur}_b\\
    &=\sqrt{\frac{r!\binom{N}{r}\binom{N-r}{\un-\ur}}{\binom{N}{\un}}}\ket{\un-\ur}_b, 
\end{align*}
where the second equality follows by \cref{Lemma: multinomial}.
Then the inner product
\begin{align*}
    \bra{\un}\hat{B}_{\ur}^\dagger\hat{B}_{\ur^\prime}\ket{\un^\prime}_b&=\sqrt{\frac{r!r^\prime!\binom{N}{r}\binom{N}{r^\prime}\binom{N-r}{\un-\ur}\binom{N-r^\prime}{\un^\prime-\ur^\prime}}{\binom{N}{\un}\binom{N}{\un^\prime}}}\delta_{\un^\prime-\ur^\prime,\un-\ur}\\
    &=r!\binom{N}{r}\frac{\binom{N-r}{\un-\ur}}{\sqrt{\binom{N}{\un}\binom{N}{\un+\ur^\prime-\ur}}}\delta_{\un^\prime,\un-\ur+\ur^\prime}
\end{align*}
Repeating the same sequence of steps as in eqs. \eqref{eq:InnerProd},\eqref{eq:VondConv}, we obtain
\begin{multline*}
  \bra{\un^\prime}\hat{B}_{\ur^\prime}^\dagger\hat{B}_{\ur}\ket{\un}_b=\\
 r!\binom{N}{r}\sum_{\ue^\prime\in\cS_{q,t-r}}\binom{t-r}{\ue^\prime}\frac{\binom{N-t}{\un-\ue}}{\sqrt{\binom{N}{\un}\binom{N}{\un-\ue+\uf}}}\delta_{\un^\prime,\un-\ue+\uf} 
\end{multline*}
Here the tuples $\ue,\uf\in\cS_{q,t}$ are defined as $\ue=\ur+\ue^\prime$, $\uf=\ur^\prime+\ue^\prime$
Therefore, we have
\begin{multline}\label{eq:spinInnerProd}
    \bra{\un}\mathbf{J}\ket{\un^\prime}_s=\\
    \sum_{r=0}^t\sum_{\ur,\ur'\in\cS_{q,r}}\sum_{\ue^\prime\in\cS_{q,t-r}}\binom{t-r}{\ue^\prime}\chi_{\ur,\ur^\prime}\frac{\binom{N-t}{\un-\ue}}{\sqrt{\binom{N}{\un}\binom{N}{\un-\ue+\uf}}}\delta_{\un^\prime,\un-\ue+\uf} 
\end{multline}
Here $\chi_{\ur,\ur^\prime}= r!\binom{N}{r}\rho_{\ur,\ur^\prime}$.
Combining \eqref{eq:OrthCondSpinExpanded} with \eqref{eq:spinInnerProd}, we obtain 
\begin{multline*}
    \bra{\bfc_i}\mathbf{J}\ket{\bfc_j}=\\
    \sum_{r=0}^t\sum_{\ur_1\in\cS_{q,r}}\sum_{\ur^\prime\in \cS_{q,r}}\sum_{\ue^\prime\in\cS_{q,t-r}}\binom{t-r}{\ue^\prime}\chi_{\ur,\ur^\prime}\\
    \times\left(\sum_{\un\in\cS_{q,N}}(\alpha^{(i)}_{\un})^*\alpha^{(j)}_{\un-\ue+\uf}\frac{\binom{N-t}{\un-\ue}}{\sqrt{\binom{N}{\un}\binom{N}{\un-\ue+\uf}}}\right),
\end{multline*}
which is zero as long as condition (\hyperlink{C3}{C3}) is satisfied. Finally, as in \cref{Theo:BosonMainTheoremAD}, we can show that the non-deformation condition is also satisfied when condition (\hyperlink{C4}{C4}) is met. 
\end{proof}

\subsection{Equivalence of spin codes and PI codes}
In this subsection, we establish a relationship between $SU(q)$ spin codes and qudit PI codes. To do that we define the following mapping between spin states and PI states:
\begin{align}\label{eq:eqMapSpinPI}
    \ket{D_{\un}}\quad\xLeftrightarrow[\text{$\sigma^{-1}$}]{\text{$\sigma$}} \quad \ket{\un}_s.
\end{align}
Here $\un\in\cS_{q,N}$. This mapping allows us to construct spin codes using the PI codes.
\begin{proposition}\label{prop:equivalenceSpinPI}
Applying the mapping $\sigma$ to a $(N,K,q,t+1)$ PI code results in a $(N,K,q,t+1)$ spin code.
\end{proposition}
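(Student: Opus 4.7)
The plan is to reduce this statement to the two error-correction theorems already established, namely \cref{Theo:PIMainTheorem} (which characterizes PI distance by conditions \hyperlink{C1}{(C1)}--\hyperlink{C4}{(C4)}) and \cref{Theo:SpinMainTheorem} (which shows the same set of conditions suffices for spin distance). The key observation is that both conditions are stated purely in terms of the expansion coefficients $\alpha^{(i)}_{\un}$, with no reference to the underlying basis states. Consequently, the proof should be essentially a bookkeeping argument after extending the simplex-to-state identification $\sigma$ linearly.

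First I would verify that the map $\sigma$, defined in \cref{eq:eqMapSpinPI}, extends to a $\mathbb{C}$-linear isometry between $\mathrm{span}\{\ket{D_\un} : \un\in\cS_{q,N}\}$ and $\mathrm{span}\{\ket{\un}_s : \un\in\cS_{q,N}\}$, since both families are orthonormal bases indexed by the same simplex $\cS_{q,N}$. Applying $\sigma$ to each basis vector in \cref{eq: basis vectors} of a given $(N,K,q,t+1)$ PI code $\cQ_{PI}$ then yields
\begin{align*}
    \sigma(\ket{\bfc_i}) = \sum_{\un\in\cS_{q,N}}\alpha^{(i)}_{\un}\ket{\un}_s, \qquad i=0,1,\ldots,K-1,
\end{align*}
which has exactly the form of a spin code \eqref{eq:DefSpinCode} on the same simplex, with \emph{identical} coefficients $\alpha^{(i)}_{\un}$. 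The parameters $N$ and $q$ are inherited directly from the indexing set $\cS_{q,N}$, and the logical dimension remains $K$ because $\sigma$ is an isometry (so the images of the orthonormal PI basis remain orthonormal and linearly independent, giving a $K$-dimensional subspace).

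Next I would feed the coefficients through the two theorems. By hypothesis $\cQ_{PI}$ has distance $t+1$, so by \cref{Theo:PIMainTheorem} the coefficients $\{\alpha^{(i)}_{\un}\}$ satisfy all four conditions \hyperlink{C1}{(C1)}--\hyperlink{C4}{(C4)} for every pair $i\ne j$ and every $\ue,\uf\in\cS_{q,t}$. Since conditions \hyperlink{C1}{(C1)}, \hyperlink{C2}{(C2)} depend only on the $\alpha^{(i)}_{\un}$, they automatically hold for $\sigma(\cQ_{PI})$ and guarantee that $\{\sigma(\ket{\bfc_i})\}$ is an orthonormal basis. Conditions \hyperlink{C3}{(C3)}, \hyperlink{C4}{(C4)} are likewise purely combinatorial identities among the $\alpha^{(i)}_{\un}$. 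Therefore \cref{Theo:SpinMainTheorem} applies and certifies that the spin code $\sigma(\cQ_{PI})$ has spin distance at least $t+1$, completing the argument.

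There is no serious obstacle here because the hard work was already carried out in \cref{Theo:PIMainTheorem,Theo:SpinMainTheorem}; the only delicate point is to note that \cref{Theo:SpinMainTheorem} provides a sufficient direction (the coefficient conditions imply distance $\geq t+1$), so strictly speaking $\sigma(\cQ_{PI})$ has spin distance \emph{at least} $t+1$. The statement of the proposition should be interpreted in this sense. If one wished to pin down equality, one would additionally exhibit a weight-$(t+1)$ operator from $\epsilon^s_{t+1}$ on which the Knill-Laflamme conditions fail; for the purpose of the claimed conversion of code parameters, the ``at least'' direction supplied by the two theorems is exactly what is required.
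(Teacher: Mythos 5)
Your proof is correct and follows exactly the route the paper takes: the paper's own one-line proof is ``By combining \cref{Theo:PIMainTheorem} and \cref{Theo:SpinMainTheorem},'' and your exposition simply spells out the implicit bookkeeping (that $\sigma$ is a coefficient-preserving isometry between orthonormal bases indexed by the same simplex, so conditions (C1)--(C4) transfer verbatim). Your closing remark that \cref{Theo:SpinMainTheorem} only supplies the sufficiency direction, hence distance $\geq t+1$ rather than equality, is a correct and worthwhile caveat that the paper leaves unstated.
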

\begin{proof}
    By combining \cref{Theo:PIMainTheorem} and \cref{Theo:SpinMainTheorem}.
\end{proof}
Using \cref{prop:equivalenceSpinPI} along with the literature of qubit and qudit PI codes, it is possible to construct spin codes over $SU(q)$ that can correct for random rotation errors. 

\remove{
We briefly mention a connection between PI codes and spin codes. It will play a role in Sec.~\ref{sec: JordanSchinger} where we detail a link between spin codes and Fock state codes and their logical gates. 

Spin systems arise as irreducible representations of the unitary group $SU(2)$. In particular, a spin-$J$ system corresponds
to a $(2J+1)$-dimensional representation. The Hilbert space of a spin-$J$ system has the basis $\ket{J,m}, m=-J,-J+1,\ldots,J-1,J$, where $\ket{J,m}$ is an eigenstate of the $z$ component of the angular momentum operator, $J_z$, with eigenvalue $m$.  A $K$-dimensional {\em spin code} $\cQ_{sp}$ \cite{gross} is a subspace of the spin-$J$ system with the logical basis
\begin{align}\label{eq:DefSpinCode}
    \ket{\bfc_i} = \sum_{m}\alpha^{(i)}_{m}\ket{J,m}, \quad i=0,1,\ldots,K-1,
\end{align}
where $\alpha^{(i)}$ are complex coefficients. Spin codes are designed to protect
the encoded information against small-order isotropic rotations. 
\begin{definition}[Spin distance]
    A spin code $\cQ_{sp}$ as in \cref{eq:DefSpinCode} has spin distance $d_s=t+1$ if it can detect the errors from the following set
    \begin{align*}
        \cE^s_t=\{ \hat{E}^r_{\delta m} : |\delta m|\leq r \leq t \},
    \end{align*}
    where
    \begin{align*}
        \hat{E}^r_{\delta m}\propto \sum_{m=_J}^J C^{J,m+\delta m}_{J,m;r,\delta m}\ket{J,m+\delta m}\bra{J,m},
    \end{align*}
and where $C^{J,m+\delta m}_{J,m;r,\delta m}$ are the Clebsch-Gordon coefficients.
\end{definition}
Thus, a spin code with distance $d_s=2t+1$ can detect random rotations of order $2t$.
}

For $q=2$, it is shown in \cite{IanEricDihedralIEEE} that $SU(2)$ spin codes can also be mapped to qubit PI codes through the mapping $\sigma^{-1}$, a mapping that its authors called {\em Dicke bootstrap}. The next lemma shows that $\sigma$ is in fact an isometry.
\begin{lemma}{\rm \cite[Lemma~2]{IanEricDihedralIEEE}}\label{lemma:ericandian}
 The mapping $\sigma^{-1}$ transforms a $(N,K,2,t+1)$ spin code
 to a $(N,K,2,t+1)$ PI code.
 \end{lemma}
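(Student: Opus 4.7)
The plan is to apply the relabeling map $\sigma^{-1}$ to the spin code basis $\ket{\bfc_i}=\sum_{\un\in\cS_{2,N}}\alpha_\un^{(i)}\ket{\un}_s$, producing a candidate PI code $\ket{\bfc_i}=\sum_{\un\in\cS_{2,N}}\alpha_\un^{(i)}\ket{D_\un}$ with identical coefficients, and then invoke \cref{Theo:PIMainTheorem} by verifying conditions (\hyperlink{C1}{C1})--(\hyperlink{C4}{C4}). The orthonormality conditions (\hyperlink{C1}{C1}), (\hyperlink{C2}{C2}) transfer immediately because $\{\ket{D_\un}\}_\un$ is itself an orthonormal family, so the entire task reduces to deriving (\hyperlink{C3}{C3}), (\hyperlink{C4}{C4}) from the spin Knill--Laflamme conditions guaranteed by $d_s=t+1$.

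By hypothesis, the spin code detects every $\bfJ=J_{i_1}\cdots J_{i_t}\in\epsilon_t^s$, i.e., $\bra{\bfc_i}\bfJ\ket{\bfc_j}=c_\bfJ\delta_{ij}$ for some scalars $c_\bfJ$. The Jordan--Schwinger expansion carried out in the proof of \cref{Theo:SpinMainTheorem} rewrites each $\bra{\un}\bfJ\ket{\un'}_s$ as a linear combination, indexed by pairs $(\ue,\uf)\in\cS_{2,t}\times\cS_{2,t}$, of exactly the factors $\binom{N-t}{\un-\ue}/\sqrt{\binom{N}{\un}\binom{N}{\un-\ue+\uf}}\,\delta_{\un',\un-\ue+\uf}$ appearing in (\hyperlink{C3}{C3}), (\hyperlink{C4}{C4}). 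Consequently every spin detection equation is itself a specific linear combination of the PI equations over $(\ue,\uf)$, and the whole implication reduces to inverting this coefficient system.

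The main obstacle, and the reason the lemma is restricted to $q=2$, is precisely this spanning/inversion step: as $\bfJ$ ranges over all length-$t$ words in $\{J_1,J_2,J_3\}$, the induced linear system must have full rank with respect to the pairs $(\ue,\uf)\in\cS_{2,t}\times\cS_{2,t}$, modulo the dependencies forced by the fixed total excitation $N=\ah_0^\dagger\ah_0+\ah_1^\dagger\ah_1$. I would establish this through a triangular decomposition of the three JS images $\operatorname{js}(J_+)=\ah_0^\dagger\ah_1$, $\operatorname{js}(J_-)=\ah_1^\dagger\ah_0$, and $\operatorname{js}(J_z)=\tfrac12(\ah_0^\dagger\ah_0-\ah_1^\dagger\ah_1)$: for a target pair $(\ue,\uf)$ one picks a normal-ordered length-$t$ product of $J_\pm$ and $J_z$ whose JS image has $\hat B_\ue^\dagger\hat B_\uf$ as its leading excitation-preserving monomial, and then peels off the strictly lower-weight corrections inductively on $|\ue|+|\uf|$ using the already-inverted equations at smaller weights. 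Once this triangular inversion is in place, each (\hyperlink{C3}{C3}), (\hyperlink{C4}{C4}) equation is recovered as a linear combination of the spin detection equations, so the $\sigma^{-1}$-image is a genuine $(N,K,2,t+1)$ PI code.
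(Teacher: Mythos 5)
The paper does not actually prove this lemma: it is quoted verbatim from \cite[Lemma~2]{IanEricDihedralIEEE}, so there is no in-text argument for you to match. I will therefore assess your proposal on its own merits.

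Your framing is sensible — push the coefficients through $\sigma^{-1}$, note orthonormality of the Dicke basis gives (\hyperlink{C1}{C1}) and (\hyperlink{C2}{C2}) for free, and reduce (\hyperlink{C3}{C3}), (\hyperlink{C4}{C4}) to an invertibility claim for the linear system that appears in the proof of Thm.~\ref{Theo:SpinMainTheorem}. But the crucial inversion step is asserted rather than proven, and it is not as clean as you suggest. There is no actual induction available ``on $|\ue|+|\uf|$'': both $\ue,\uf$ range over $\cS_{2,t}$, so $|\ue|+|\uf|=2t$ is fixed. What really happens after normal-ordering $\operatorname{js}(\bfJ)$ is that you get $\hat B_\ur^\dagger \hat B_{\ur'}$ terms with every $r\le t$, not just $r=t$, and each lower-$r$ contribution is itself (by the Vandermonde convolution step of Thm.~\ref{Theo:BosonMainTheoremAD}) a fixed linear combination of degree-$t$ conditions. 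So the entire spin-detection system collapses to a linear system on the $\bigl((t+1)\times(t+1)\bigr)$ set of pairs $(\ue,\uf)\in\cS_{2,t}\times\cS_{2,t}$, and one must show this system has full rank.

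That rank statement is where the $q=2$ restriction does real work, and you never invoke the ingredient that makes it true: on the spin-$J$ irrep the Casimir identity $J_x^2+J_y^2+J_z^2=J(J+1)\mathbbm{1}$ holds, so products of exactly $t$ generators already span all products of $\le t$ generators (subtract a Casimir factor to drop the degree by two). This is what gives the span of $\{\operatorname{js}(\bfJ)\mid\bfJ\in\epsilon_t^s\}$, restricted to $\mathrm{Sym}^N(\mathbb{C}^2)$, dimension $\sum_{\ell=0}^t(2\ell+1)=(t+1)^2$, matching the number of $(\ue,\uf)$ pairs; without it, your ``already-inverted equations at smaller weights'' have no source, because $\epsilon_t^s$ contains no lower-degree words. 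The proposal would become a proof if you either (i) invoked the Casimir relation and did the dimension count explicitly, or (ii) wrote down the transition matrix and exhibited a triangularizing order to show it is nonsingular. As written, the central claim — that the spin-detection equations span the (\hyperlink{C3}{C3})/(\hyperlink{C4}{C4}) equations — is left as an assertion, and that is a genuine gap.
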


\section{From \texorpdfstring{$\ell_1$}{W} to \texorpdfstring{$SU(q)$}{} codes}\label{sec:existence}

In this section we develop an approach to the construction of the three related families of $SU(q)$ 
codes, the qudit PI codes, spin codes, and Fock state codes. 
The main tool involved in the derivations is the connection between PI qudit codes and classical $\ell_1$ codes.
Once we construct a family of PI codes, we employ the connection between them and spin and Fock state codes
to obtain new code families of each of these two types.

In \cref{sec: classical} we overview constructions of $\ell_1$ codes in the simplex. In \cref{sec: l1-PI-Fock}
we state and prove the main equivalence result between $\cS_{q,N}$ codes and qudit PI codes, which yields a
bound on the parameters of the PI codes. 
In \cref{sec: asymptotic} we analyze the asymptotic scaling of the parameters
qudit PI codes arising from the suggested approach. We show that 
there exist families of $(N,K,N,d)$ codes with increasing $N$
and any $K,d$ satisfying $K =  o(2^{N})$ and $d = o(\frac{N}{\log N})$, where this claim applies to 
$SU(q)$ codes of all the three types, and $d$ is $d_b,d_s$ or the qudit PI code distance as appropriate.

Overall, our construction generalizes the known results \cite{chuangADCode}, \cite{VanLoock}, \cite{OuyangADCode,movassagh2024constructing} both in terms of allowing a range of code dimensions and a better scaling of the distance.

\subsection{Classical \texorpdfstring{$\ell_1$}{l1} codes} \label{sec: classical}

The construction of qudit PI codes presented in the next subsection starts with codes in the $\ell_1$ metric.
The general problem of constructing such codes is well researched in classical coding theory
\cite{varshamov1973class}, \cite{tallini2011}, \cite{kovacevic2018multisets}.
Existing constructions rely in a large part on the Bose-Chowla theorem and related concepts from additive number theory (Sidon sets, $B_h$-sequences), with applications to flash memory, DNA coding, and others \cite{barg2010codes,goyal2024gilbert}. 

The specific setting that we will use in the next section relates to codes in $\cS_{q,N}$ under the $\ell_1$ distance. A distinguishing feature of this code subclass is that the underlying space is formed of all $q$-tuples with {\em fixed} $\ell_1$ norm\footnote{A subclass of these codes where not only the $\ell_1$ norm, but also the composition $\comp(\bfx)$ of every codeword is fixed, plays a major role in information theory \cite{csiszar2011information}.}, viz. \eqref{eq:SimplexDef} and \cref{fig: SqN,fig: S44}. 
Define a distance $d_1$ on $\cS_{q,N}$ by setting 
  \begin{equation}\label{eq: d1}
  d_1(\underline{x},\underline{y}):=\frac{1}{2}\sum_{i=0}^{q-1}|x_i-y_i|
  \end{equation}
 and for a subset $\cC_{\ell_1} \subset \cS_{q,N}$ put
$
    d_1(\cC_{\ell_1})= \min_{\substack{\underline{x},\underline{y}\in \cC_{\ell_1}\\ \underline{x}\neq \underline{y}}}d_1(\underline{x},\underline{y}).
$
We call this distance the $\ell_1$ norm despite the 1/2 factor, which is inserted simply because for constant sum $|\ux|=N$, the true $\ell_1$ distance is always even\footnote{This follows because $\sum_{i=0}^{q-1}(x_i-y_i)=0$, and so $\sum_{i:x_i>y_i}(x_i-y_i)=\sum_{i:x_i<y_i}(y_i-x_i)$.}. Below, we call subsets of the discrete simplex 
{\em $\ell_1$-codes} if their distance is measured with respect to the metric $d_1$.

We present two constructions of $\ell_1$ codes, starting with a simple approach to obtain $\ell_1$ codes with large distance in the simplex $S_{q,N}$ with $N=q$. This assumption will play a role below when we use the constructed codes as a device to obtain Fock state codes. To justify it, note that for constant-excitation Fock state codes, the fidelity as given by \cref{eq: fidelity} does not
depend on the number of modes, $q$, and thus the constructed codes are optimized if we manage to minimize $N$ for a given $t$.

Furthermore, the codes that we construct have a particular size, which fits the requirements of the Fock state code construction.

\begin{theorem}\label{theorem:existence l1}
    Let $K,t\geq 2$ be integers and let $$N=q=(K-1)t(t+1).$$ There is an $\ell_1$ code $\cC_{\ell_1}\subset \cS_{N,N}$ of size
      $$
      |\cC_{\ell_1}|\ge (K-1)|\cS_{N,t}|+1
      $$
and distance $t+1$.    
\end{theorem}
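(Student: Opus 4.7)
The plan is to invoke a pigeonhole argument built around a Sidon set of order $t$ (a $B_t$-set), following the approach of Kovacevic and Tan~\cite{kovacevic2018multisets}. As the first ingredient, I would use the Bose-Chowla theorem to obtain a $B_t$-set $A=\{a_0,a_1,\dots,a_{N-1}\}\subset\Z_{\geq 0}$ of size $N$ with maximal element at most some $M=O(N^t)$ (and ideally close to the optimal value $\binom{N+t-1}{t}$). The defining property of $A$ is that any two multiset sums of at most $t$ elements of $A$ are distinct.

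The core step is to define a parity-check map $\phi:\cS_{N,N}\to\{0,1,\dots,NM\}$ by $\phi(\bfv)=\sum_{i=0}^{N-1}v_ia_i$ and to show that every fiber $\phi^{-1}(c)$ is an $\ell_1$-code of distance at least $t+1$. Suppose $\phi(\bfu)=\phi(\bfv)$ with $\bfu\neq\bfv$. I would decompose $\bfu-\bfv=\bfp-\bfn$ into nonnegative integer vectors $\bfp,\bfn$ of disjoint support satisfying $\|\bfp\|_1=\|\bfn\|_1=d_1(\bfu,\bfv)$. Then $\sum_i p_ia_i=\sum_i n_ia_i$; if $d_1(\bfu,\bfv)\leq t$, the $B_t$-property forces $\bfp=\bfn$, which combined with disjointness of supports gives $\bfp=\bfn=0$, contradicting $\bfu\neq\bfv$.

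The final step is a pigeonhole count: since the image of $\phi$ has at most $NM+1$ elements, some fiber has cardinality at least $|\cS_{N,N}|/(NM+1)=\binom{2N-1}{N-1}/(NM+1)$. The specific choice $N=q=(K-1)t(t+1)$ is engineered so that this lower bound exceeds the target $(K-1)|\cS_{N,t}|+1=(K-1)\binom{N+t-1}{t}+1$. The main obstacle is verifying this inequality at the boundary (smallest) parameters such as $K=t=2$, where the pigeonhole margin becomes tight; asymptotically the estimate holds easily because $|\cS_{N,N}|$ grows exponentially in $N$ while both the modulus $NM+1$ and the target $(K-1)|\cS_{N,t}|+1$ grow only polynomially. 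For borderline cases I would substitute a sharper Sidon construction (e.g., Singer's projective-plane $B_2$-sets, or dense $B_t$-sets in short intervals) in place of the generic Bose-Chowla estimate, and verify the arithmetic directly rather than via asymptotics.
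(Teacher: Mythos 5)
Your approach via Sidon sets and pigeonhole is genuinely different from the paper's, which instead gives a completely explicit construction: the paper takes $B = (t+1)\cS_{q,(K-1)t} \cup \{\mathbf{1}^q\}$, i.e., the simplex $\cS_{q,(K-1)t}$ scaled by $(t+1)$ together with the all-ones vector, checks directly that all pairwise $\ell_1$ distances exceed $t$, and then verifies via Vandermonde's convolution and a monotonicity estimate that $|B|=\binom{q+(K-1)t-1}{(K-1)t}+1 \ge (K-1)|\cS_{N,t}|+1$. Your Sidon-set/parity-check argument (a $B_t$-set $A$, the map $\phi(\bfv)=\sum_i v_i a_i$, and the decomposition $\bfu-\bfv=\bfp-\bfn$ with disjoint support forcing $d_1\ge t+1$ on fibers) is sound as a distance argument and is essentially Theorem~\ref{theorem:Kovacevic} of the paper, but your counting step falls short of the claimed bound precisely at the boundary you flag.

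Concretely, at $K=t=2$ you have $N=q=6$, so $|\cS_{6,6}|=\binom{11}{5}=462$ and the target is $(K-1)|\cS_{N,t}|+1 = \binom{7}{2}+1 = 22$. A $B_2$-set of size $6$ in a cyclic group $\Z_m$ forces $m \ge 6\cdot 5+1=31$ (the $30$ nonzero pairwise differences must be distinct), which is exactly what Bose--Chowla and Singer both achieve; there is no ``sharper'' construction below $m=31$. The pigeonhole then only guarantees a fiber of size $\lceil 462/31\rceil = 15 < 22$, and using $\{0,\dots,NM\}$ rather than $\Z_m$ only makes this worse. So the proposed remedy of substituting a denser Sidon set cannot close the gap, and at this parameter point your count is about $30\%$ short of the paper's target, which the explicit scaled-simplex construction hits with equality. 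For $K$ or $t$ above $2$ the exponential growth of $|\cS_{N,N}|$ does make your estimate work comfortably, but the theorem as stated includes $K=t=2$, and a uniform proof needs the explicit construction (or some other argument that handles this case).
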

\begin{proof}
Consider the set $Q=(t+1)\cS_{q,(K-1)t}$ formed of all tuples in the simplex scaled by the factor $(t+1)$.
Clearly, $Q\subset\cS_{N,N}$ and 
    $$
    d_1(Q)=(t+1) d_1(\cS_{q,(K-1)t})=t+1>t.
    $$
Now put
   $
        \cC_{\ell_1} = Q\cup\mathbf{1}^q,
$
where $\mathbf{1}^q$ is the all-ones vector. It can be shown that $d_1(\mathbf{1}^q,\ux)>t$ for all $\ux\in \cQ$.
Then, because the number of non-zero elements of $\bfc\in\cQ$ is less than or equal to $(K-1)t$, we have
    \begin{align*}
        \frac{1}{2}|\bfc-\mathbf{1}^q|_1\geq \frac{N-(K-1)t}{2} = \frac{(K-1)t^2}{2},
    \end{align*}
    which is always greater than $t$ as long as $K,t\geq 2$, and thus, $d_1(\cC_{\ell_1})>t$. 
Further,
    \begin{align*}
        |\cC_{\ell_1}| &= \binom{q+(K-1)t-1}{(K-1)t} + 1\\
        &= \binom{(K-1)t^2+2(K-1)t-1}{(K-1)t} + 1\\
        &= \sum_{i=0}^{(K-2)t}\binom{(K-1)t^2+Kt-1}{(K-1)t-i}\binom{(K-2)t}{i} + 1,
     \end{align*}
where on the last line we used Vandermonde convolution. Next observe that for $K,t\ge 2$,
    \begin{align*}
   (K-1)t^2+Kt-1-2Kt+2t=((K-1)t-K)t+2t-1>0 ,
   \end{align*}
implying that $\binom{(K-1)t^2+Kt-1}{(K-1)t-i}$ is a decreasing function of $i$ as it increases in the range $0\le i\le (K-2)t$.
Therefore, for all such $i$,
   $$
   \binom{(K-1)t^2+Kt-1}{(K-1)t-i}\ge \binom{(K-1)t^2+Kt-1}{t}
   $$
and there are $(K-2)t+1\ge K-1$ terms in the sum. Noting that $(K-1)t^2+Kt-1=(K-1)t(t+1)+t-1$, we obtain 
     \begin{align*}   
        |\cC_{\ell_1}|&\geq (K-1)\binom{(K-1)t(t+1)+t-1}{t} + 1,
    \end{align*}
which is precisely the claimed bound.
\end{proof}
\begin{remark}\label{remark: count}
    Our choice of the bound for the code size $|\cC_{\ell_1}|$ may seem arbitrary: indeed, the last inequality can be tightened with little effort. The reason for this choice is related to the count of coefficients in the error correction conditions
    (\hyperlink{C1}{C1})--(\hyperlink{C4}{C4}). This link will become apparent in the proof of \cref{prop:bound}
below. \end{remark}

Next, we mention another way of constructing $\ell_1$ codes in the simplex, based on a classic approach from the literature.
\begin{definition}
    Let $G$ be an Abelian group, written additively. A subset $B\subseteq G$ is a $t$-{\em Sidon set} if the sums
    \begin{align*}
        b_{i_1}+b_{i_2}+\ldots+b_{i_t}
    \end{align*}
    are all distinct for $0\leq i_1\leq i_2 \leq \ldots \leq i_{t}\leq |G|$. 
\end{definition}
A well-known way of constructing Sidon sets is provided by the Bose-Chowla theorem, which we cite in Sec.~\ref{sec: asymptotic} below.

The next result appears in many places in the literature; see \cite{barg2010codes} and its references. We cite it in the form given in \cite[Thm.~14]{kovacevic2018multisets}.
\begin{theorem}\label{theorem:Kovacevic}
     Let $G$ be an Abelian group that contains a $t$-Sidon set of cardinality $q$. Then, for every $q\geq2$ and $n>t\geq 1$, there exists an $\ell_1$ code $\cC_{\ell_1}\subset \cS_{q,N}$ with distance $\ge t+1$ such that
    \begin{align}
        |\cC_{\ell_1}| \geq \frac{|\cS_{q,N}|}{|G|}.   \label{eq: LowerBounBoseChowla}
    \end{align}
\end{theorem}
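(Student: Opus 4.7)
The plan is to apply a classical Gilbert--Varshamov-style pigeonhole argument, using the $t$-Sidon set $B=\{b_0,b_1,\ldots,b_{q-1}\}\subseteq G$ as a ``check function''. I would define the map
$$\phi:\cS_{q,N}\to G,\qquad \phi(\un)=\sum_{i=0}^{q-1} n_i b_i,$$
whose fibers partition the $|\cS_{q,N}|$-element simplex into at most $|G|$ nonempty classes. By pigeonhole, some fiber $\cC_{\ell_1}:=\phi^{-1}(g^\ast)$ has cardinality at least $|\cS_{q,N}|/|G|$, matching the lower bound in \cref{eq: LowerBounBoseChowla}.

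The heart of the proof is to show $d_1(\cC_{\ell_1})\ge t+1$. For distinct $\un,\un'\in\cC_{\ell_1}$, I would set $P=\{i:n_i>n'_i\}$ and $N_-=\{i:n_i<n'_i\}$ (disjoint) and rearrange $\phi(\un)=\phi(\un')$ as
$$\sum_{i\in P}(n_i-n'_i)\,b_i \;=\; \sum_{i\in N_-}(n'_i-n_i)\,b_i\qquad\text{in }G.$$
Because $\sum_i n_i=\sum_i n'_i=N$, the total multiplicities on the two sides coincide and equal $d_1(\un,\un')=:s$. Hence both sides represent multisets $M_+,M_-$ of elements of $B$ of the same size $s$ that sum to the same element of $G$. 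Assuming toward a contradiction that $s\le t$, I would pad both $M_+$ and $M_-$ with $t-s$ copies of $b_0$; the padded multisets have common size $t$ and equal sums, so by the $t$-Sidon property they are equal as multisets. Cancelling the padding yields $M_+=M_-$. But the supports of $M_+$ and $M_-$ are indexed by the disjoint sets $P$ and $N_-$, and the $b_i$'s are distinct elements of $G$, so both multisets must be empty, forcing $P=N_-=\varnothing$ and $\un=\un'$, a contradiction. Therefore $d_1(\un,\un')\ge t+1$ for all distinct pairs in $\cC_{\ell_1}$.

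The main technical obstacle I foresee is the reconciliation between the stated definition of a $t$-Sidon set (concerning sums of \emph{exactly} $t$ elements with repetition) and the setting of the argument, where one naturally encounters sums of at most $t$ elements on each side. The padding by a fixed element $b_0$ handles this smoothly: the common padding cancels out in the multiset equality regardless of whether $b_0$ itself appears in either $M_+$ or $M_-$, because cancellation only compares multiplicities. Once this reduction is in place, the disjointness of $P$ and $N_-$ delivers the contradiction with no further estimates required.
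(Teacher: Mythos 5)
Your proof is correct and follows essentially the same route as the paper: both construct the code as a fiber of the map $\un\mapsto\sum_i n_i b_i$ defined via the Sidon set, apply pigeonhole over the fibers to get the cardinality bound, and derive the distance from the Sidon property. The paper simply asserts that the distance bound ``plainly'' follows; your write-up supplies the missing detail (the rearrangement into two disjointly-supported multisets of equal size $s=d_1(\un,\un')$ and equal sum, followed by the $b_0$-padding to compare size-$t$ multisets), and that detail is accurate.
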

The construction of the code $\cC_{\ell_1}$ relies on the following greedy argument: given a $t$-Sidon set $\{g_1,\dots,g_q\}\subset G$,
consider the set of vectors $$\cC_{\ell_1,g}:=\Big\{\underline x\in \cS_{q,N} \mid\sum_{i=0}^{q-1} x_i g_i=g\Big\},$$ where $g\in G$ is some group element. Plainly, the code distance is $\ge t+1$, because the opposite inequality would violate the Sidon set condition. Further, the codes
$\cC_{\ell_1,g}, g\in G$ form a partition of $\cS_{q,N}$, so at least one of them is of size that satisfies \cref{eq: LowerBounBoseChowla}.

As a result of this argument, to construct codes of large size we therefore need small groups that host a $t$-Sidon set. This will be addressed in \cref{sec: asymptotic}, where we construct Fock state codes (as well as PI codes) with better parameter scaling than the known results. Our construction relies on the group $\Z_m$  and $t$-Sidon sets inside it, where
$m$ depends on $q$ and $t$.

\begin{remark}
Yet another way to prove the existence of $\ell_1$ codes with large distance is given by a greedy procedure commonly called 
the Gilbert--Varshamov bound (it does not yield explicit codes). In particular, it is known that there exist codes in the simplex $\cS_{q,N}$ of size at least $|\cS_{q,N}|/\overline{V}(t)$ and distance $t+1$, where $\overline{V}(t)$ is the {\em average} volume of the $\ell_1$-ball of radius $t$ in the space \cite{gu1993generalized,tolhuizen1997generalized}. A recent work \cite{goyal2024gilbert} performed an asymptotic analysis of this bound for large $N$ and $t=\tau N, \tau>0$. 
Although their results will likely yield the existence of Fock state codes whose distance scales linearly with $N$, they involve complicated expressions with not much insight into the properties of the codes, so we do not cite them here. 
\end{remark}

\subsection{From \texorpdfstring{$\ell_1$}{l1-2} to \texorpdfstring{$SU(q)$}{SUq}: PI, Fock state, and spin codes}\label{sec: l1-PI-Fock}
In this section, we introduce the framework for the code construction relying on the error correction conditions derived earlier for $SU(q)$ codes (\cref{fig: KL conditions}).
The underlying idea of
is to show that conditions (\hyperlink{C1}{C1})--(\hyperlink{C4}{C4}) translate into a set of linear equations for the basis coefficients $\alpha_{\un}^{(i)}$, \cref{eq: basis vectors,eq:DefFockStateCodes,eq:DefSpinCode}, where the coefficients that define the basis states are indexed by subsets of vectors of a classical code in $\cS_{q,N}$
that corrects errors in the $\ell_1$ metric. To identify those subsets, we rely on a 
classic result from convex geometry.

\remove{In the next proposition, we rephrase this lemma to match our context. For completeness, we provide an independent short proof.
\begin{proposition}\label{prop: Radon}
    Let $X=\{x_1,x_2,\ldots,x_n\}$ be a set of variables. Let $E=\{e_1,e_2,\ldots,e_m\}$. Define the real numbers $\alpha_{e,x}$, where $e\in E$ and $x\in X$. If $n\geq m+2$, then there exists a partition of $X$ into two nonempty disjoint subsets $X_1,X_2$ such that the set of equations with respect to $X$
    \begin{align*}
        \sum_{x\in X_1}x &= \sum_{x\in X_2}x\\
        \sum_{x\in X_1}xa_{e,x} &= \sum_{x\in X_2}xa_{e,x} \quad \text{for all } e\in E
    \end{align*}
    has a nonzero solution with $x_i\geq 0$ for $i\in\{1,2,
    \ldots,n\}$
\end{proposition}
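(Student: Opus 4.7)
The plan is to recognize this proposition as a direct instance of Radon's theorem from convex geometry. First, I would associate to each variable $x_i$ the point $p_i := (a_{e_1, x_i}, a_{e_2, x_i}, \ldots, a_{e_m, x_i}) \in \R^m$, yielding $n \ge m+2$ points in $\R^m$. Radon's theorem then produces a partition of $\{1,\ldots,n\}$ into two nonempty disjoint subsets $I_1, I_2$ whose associated convex hulls $\mathrm{conv}\{p_i : i \in I_1\}$ and $\mathrm{conv}\{p_j : j \in I_2\}$ share a common point. From any such point, one extracts nonnegative coefficients $\lambda_i$ ($i \in I_1$) with $\sum_{i \in I_1}\lambda_i = 1$ and $\mu_j$ ($j \in I_2$) with $\sum_{j \in I_2}\mu_j = 1$ satisfying $\sum_{i \in I_1} \lambda_i\, p_i = \sum_{j \in I_2} \mu_j\, p_j$.

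Next, I would set $x_i := \lambda_i$ for $i \in I_1$ and $x_j := \mu_j$ for $j \in I_2$, and take $X_1 = \{x_i : i \in I_1\}$, $X_2 = \{x_j : j \in I_2\}$. Both required equations then follow immediately: $\sum_{x \in X_1} x = 1 = \sum_{x \in X_2} x$ from the normalization of the convex combinations, and $\sum_{x \in X_1} x\, a_{e,x} = \sum_{x \in X_2} x\, a_{e,x}$ for every $e \in E$ by reading off the $e$-th coordinate of the vector identity $\sum_i \lambda_i p_i = \sum_j \mu_j p_j$. Nonnegativity $x_i \ge 0$ and nonzeroness of the solution (each side sums to $1$) are built into the construction.

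I do not anticipate a serious obstacle here, since the statement is essentially Radon's theorem rephrased in linear-algebraic language; the only cosmetic step is verifying that both parts of the Radon partition are nonempty, which is part of the standard formulation. This $K=2$ case is precisely the instance of Tverberg's theorem flagged earlier in the introduction, so the same geometric mechanism generalizes to yield the $K$-fold partitions needed when constructing $SU(q)$ codes of logical dimension $K \ge 2$ later in the paper.
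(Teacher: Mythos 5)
Your proposal is correct and matches the paper's approach in substance: both associate to each $x_i$ the point $(a_{e,x_i})_{e\in E}\in\R^m$ and obtain the partition and nonnegative solution from Radon's lemma. The only cosmetic difference is that the paper gives the standard one-line dimension-counting proof of Radon's lemma inline (find a nonzero vector in the null space of $m+1$ homogeneous equations in $n\ge m+2$ unknowns and split by sign), whereas you invoke Radon's theorem as a black box and extract the convex-combination weights; the geometric content is identical.
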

\begin{proof} For $i=1,2,\dots, n$, let $A_i=(a_{e,x}, e\in E)^\intercal\in \R^m.$ Consider the set of linear equations
   in   variables $x_1,\dots,x_{n}$:
    \begin{gather*}
    \sum_i x_iA_i=0,\quad \sum_i x_i=0.
    \end{gather*}
By dimension counting, there is a nonzero solution $\bfx$. Define two disjoint subsets, $P_1=\{i:x_i>0\}$ and $P_2=\{i:x_i<0\}$ and let $A^{(1)}=\{A_i:i\in P_1\}$ and $A^{(2)}=\{A_i:i\in P_2\}$.
We obtain
   $$
   \sum_{i\in P_1}x_iA_i=\sum_{i\in P_2}(-x_i)A_i, \quad \sum_{i\in P_1}x_i=\sum_{i\in P_2}(-x_i)
   $$ 
as claimed.\footnote{While we did not need the ``comvex combination'' part of Radon's lemma, getting there
amounts to normalizing the solution by $\beta:=\sum_{i\in P_1}x_i$.}
\end{proof}

\begin{lemma}[Radon's lemma, \cite{matousek2013lectures}]\label{Radon's Lemma}
    Let A be a set of at least $m+2$ points in $\mathbb{R}^m$. Then there exist two disjoint subsets $A^{(1)},A^{(2)}\subseteq A$ such that
    \begin{align*}
        \operatorname{conv}(A^{(1)})\cap\operatorname{conv}(A^{(2)})\neq \emptyset
    \end{align*}
\end{lemma}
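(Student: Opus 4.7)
The plan is to apply the standard affine-dependence argument from convex geometry, which is essentially the construction already sketched in the removed proposition just above. Label the points as $a_1, \ldots, a_n$ with $n \ge m+2$. First I would set up the homogeneous linear system
\[
\sum_{i=1}^n \lambda_i\, a_i = 0, \qquad \sum_{i=1}^n \lambda_i = 0,
\]
which comprises $m+1$ scalar equations (one for each coordinate of $\mathbb{R}^m$, plus the sum constraint) in $n \ge m+2$ unknowns $\lambda_i \in \mathbb{R}$. By dimension counting, the solution space has dimension at least one, so a nontrivial tuple $(\lambda_1, \ldots, \lambda_n) \ne 0$ exists.

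Next, I would partition the indices by sign: set $P = \{i : \lambda_i > 0\}$ and $Q = \{i : \lambda_i < 0\}$. Both are nonempty, since the $\lambda_i$ are not all zero yet they sum to zero. Let $s := \sum_{i \in P} \lambda_i = -\sum_{i \in Q} \lambda_i > 0$ and define
\[
x \;:=\; \sum_{i \in P} \frac{\lambda_i}{s}\, a_i \;=\; \sum_{i \in Q} \frac{-\lambda_i}{s}\, a_i.
\]
On the left, $x$ is written as a convex combination of $\{a_i : i \in P\}$, and on the right as a convex combination of $\{a_i : i \in Q\}$, since in both cases the coefficients are nonnegative and sum to $1$. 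Taking $A^{(1)} = \{a_i : i \in P\}$ and $A^{(2)} = \{a_i : i \in Q\}$ then yields two disjoint subsets of $A$ whose convex hulls share the point $x$, proving the statement.

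There is no real obstacle here — this is the textbook proof of Radon's lemma. The only points that need checking are that (i) the sign classes $P$ and $Q$ are both nonempty, which follows immediately from $\sum_i \lambda_i = 0$ together with nontriviality of $\lambda$, and (ii) the existence of the nonzero solution, which is guaranteed by having strictly more unknowns than equations. A cleaner conceptual viewpoint is to lift the $a_i$ to $\hat a_i := (a_i,1) \in \mathbb{R}^{m+1}$; then any $m+2$ such lifts are linearly dependent, and the sign split on that linear dependence produces exactly the Radon partition.
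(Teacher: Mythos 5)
Your proof is correct and follows essentially the same route as the paper's own (removed) proof and the sketch in its footnote to Remark~\ref{remark: Radon}: set up the homogeneous system $\sum_i \lambda_i a_i = 0$, $\sum_i \lambda_i = 0$ with $m+1$ equations in $n\ge m+2$ unknowns, use dimension counting to find a nontrivial solution, and split indices by the sign of $\lambda_i$; the normalization by $s=\sum_{i\in P}\lambda_i$ to get actual convex combinations is exactly the step the paper defers to its footnote. No gaps.
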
}

Recall that the convex combination of points $A_1,A_2,\ldots,A_n\in\mathbb{R}^m$ is defined as
\begin{align*}
    \operatorname{conv}(A_1,A_2,\ldots,A_n) = \Big\{\sum_{i=1}^n \beta_iA_i \mid \beta_i\geq 0, \sum_{i}\beta_i=1 \Big\}.
\end{align*}
The following statement is known as Tverberg's theorem,  \cite{tverberg1966generalization}, or see \cite[p.200]{matousek2013lectures}, \cite{barany2018tverberg}.

\begin{theorem}[Tverberg's theorem]\label{theo:Tverberg}
     Let A be a set of at least $(m+1)(K-1)+1$ points in $\mathbb{R}^m$. Then there exist $K$ pairwise disjoint subsets $A^{(1)},A^{(2)},\ldots,A^{(K)}\subset A$ such that
    \begin{align*}
        \operatorname{conv}(A^{(1)})\cap\operatorname{conv}(A^{(2)})\cap\ldots\cap \operatorname{conv}(A^{(K)})\neq \emptyset
    \end{align*}
\end{theorem}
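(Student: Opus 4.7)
The plan is to prove Tverberg's theorem via Sarkaria's tensor-product trick combined with the Colorful Carath\'eodory theorem. Set $N=(m+1)(K-1)+1$ and label the points $a_1,\dots,a_N\in\mathbb{R}^m$. The goal is to produce a partition $I_1,\dots,I_K$ of $\{1,\dots,N\}$ together with nonnegative coefficients $\lambda_{i,j}$ for $i\in I_j$, with $\sum_{i\in I_j}\lambda_{i,j}=1$, such that the point $\sum_{i\in I_j}\lambda_{i,j}a_i$ is the same for every $j$; that common point will be the Tverberg intersection point.

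First I would pick $K$ vectors $v_1,\dots,v_K\in\mathbb{R}^{K-1}$ in general position with $v_1+\dots+v_K=0$, e.g.\ the vertices of a regular simplex centered at the origin, which guarantees that any $K-1$ of them are linearly independent. Then, for each $i\in\{1,\dots,N\}$ and $j\in\{1,\dots,K\}$, form the tensor $\tilde a_i^{(j)}=(a_i,1)\otimes v_j\in\mathbb{R}^{(m+1)(K-1)}$, and let $S_i=\{\tilde a_i^{(j)}:j=1,\dots,K\}$. The key observation is that $0\in\operatorname{conv}(S_i)$ for every $i$, since $\tfrac{1}{K}\sum_j v_j=0$ implies $\tfrac{1}{K}\sum_j \tilde a_i^{(j)}=0$.

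Next I would invoke the Colorful Carath\'eodory theorem, applied to the $N=(m+1)(K-1)+1$ color classes $S_1,\dots,S_N$ in ambient dimension $(m+1)(K-1)$, to obtain a rainbow selection $p_i\in S_i$ with $0\in\operatorname{conv}(p_1,\dots,p_N)$. Each $p_i$ equals $\tilde a_i^{(j_i)}$ for some color $j_i$, so $I_j:=\{i:j_i=j\}$ is a partition of $\{1,\dots,N\}$. Writing out the convex combination as $\sum_i\lambda_i(a_i,1)\otimes v_{j_i}=0$ with $\lambda_i\geq 0$ and $\sum_i\lambda_i=1$, and regrouping by color, I obtain $\sum_{j=1}^K w_j\otimes v_j=0$, where $w_j:=\sum_{i\in I_j}\lambda_i(a_i,1)$. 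Using $v_K=-\sum_{j<K}v_j$ and the linear independence of $v_1,\dots,v_{K-1}$, this forces $w_1=\dots=w_K$. Equating the last coordinates shows the normalization factors $\sum_{i\in I_j}\lambda_i$ are all equal and nonzero, so after rescaling one obtains convex combinations $\sum_{i\in I_j}\mu_{i,j}a_i$ that coincide at the desired Tverberg point.

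The main obstacle is the Colorful Carath\'eodory theorem itself, which is non-trivial but classical: one can prove it by starting from an arbitrary rainbow simplex and performing a swap/limit argument that iteratively decreases the distance of its convex hull to the origin until it must contain the origin. A conceptually different route is Tverberg's original induction on $K$ (with the base case $K=2$ being Radon's lemma), but the inductive step requires delicate general-position perturbations followed by a continuity/limit argument, and is substantially more intricate than the tensor-product approach above.
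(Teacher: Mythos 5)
The paper does not prove Tverberg's theorem; it cites it as a classical result from the literature (Tverberg 1966, Matou\v{s}ek's \emph{Lectures on Discrete Geometry}, B\'ar\'any--Soberón). There is therefore no ``paper's own proof'' to compare against. Your sketch, however, is a faithful and correct outline of Sarkaria's tensor-product proof, which is the now-standard shortest route to the theorem: lift each $a_i$ to the $K$ points $(a_i,1)\otimes v_j$, use $\sum_j v_j=0$ to show the origin lies in every color class $S_i$, invoke Colorful Carath\'eodory in dimension $(m+1)(K-1)$ with $N=(m+1)(K-1)+1$ colors to pick a rainbow point containing the origin, and then unwind the tensor identity $\sum_j w_j\otimes v_j=0$. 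The regrouping step is sound: substituting $v_K=-\sum_{j<K}v_j$ gives $\sum_{j<K}(w_j-w_K)\otimes v_j=0$, and since the $v_j$ with $j<K$ are linearly independent this forces $w_1=\dots=w_K$; equating last coordinates shows each block mass equals $1/K>0$, so the rescaled combinations are genuine convex combinations meeting at a common Tverberg point. You correctly flag that the remaining load is carried by the Colorful Carath\'eodory theorem, whose usual proof is the swap/descent argument you mention. Compared with Tverberg's original induction-on-$K$ argument (base case Radon, plus perturbation and limiting), Sarkaria's approach buys a cleaner, non-inductive, finite-dimensional linear-algebra argument at the cost of importing one auxiliary combinatorial-convexity theorem; both are standard, and your choice is well suited to the ``rephrased for our needs'' style in which the paper then packages the theorem as Proposition~\ref{prop: tverberg}.
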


Rephrased for our needs, this theorem implies the following.
\begin{proposition}\label{prop: tverberg}
Let $\cX=\{x_1,x_2,\ldots,x_n\}$ be a set of variables. Let $E=\{e_1,e_2,\ldots,e_m\}$, and let
$(a_{e,x})_{e\in E,x\in \cX}$ be a set of real numbers. If $n\geq (m+1)(K-1)+1$, then there exists a partition 
of the set $\{1,2,\dots,n\}$, 
   \begin{equation}\label{eq: TP}
   {\mathscr I}_{n,K}=\bigsqcup_{j=1}^K I_j,
   \end{equation}
into a disjoint union such that the set of equations
   \begin{equation}\label{eq: K system}
    \begin{aligned}
        &\sum_{i\in I_1}x_i = \sum_{i\in I_2}x_i = \dots = \sum_{i\in I_K}x_i\\
        &\sum_{i\in I_1}x_ia_{e,i} = \sum_{i\in I_2}x_ia_{e,i}=\dots=\sum_{i\in I_K}x_ia_{e,i} \quad \forall e\in E
    \end{aligned}
   \end{equation}
    has a nontrivial solution $(x_1,\dots,x_n)\in \R^n_{\ge 0}$. 
 \end{proposition}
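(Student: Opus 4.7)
The plan is to deduce the proposition directly from Tverberg's theorem (Theorem~\ref{theo:Tverberg}) by packaging the coefficients $a_{e,i}$ into points in $\R^m$. For each $i\in\{1,\dots,n\}$, I would define
\[
A_i := (a_{e_1,i}, a_{e_2,i}, \dots, a_{e_m,i}) \in \R^m.
\]
Since $n \ge (m+1)(K-1)+1$, Tverberg's theorem yields $K$ pairwise disjoint subsets $I_1', I_2', \dots, I_K' \subseteq \{1,\dots,n\}$ together with a common point $p \in \bigcap_{j=1}^{K} \operatorname{conv}(\{A_i : i\in I_j'\})$. Expressing $p$ as a convex combination within each block furnishes nonnegative weights $\beta_i^{(j)} \ge 0$ for $i \in I_j'$ satisfying $\sum_{i \in I_j'} \beta_i^{(j)} = 1$ and $\sum_{i \in I_j'} \beta_i^{(j)} A_i = p$ for every $j = 1,\dots,K$.

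Next I would upgrade the pairwise disjoint $I_j'$ into an honest partition of $\{1,\dots,n\}$ as demanded by~\eqref{eq: TP}. Any indices not covered by $\bigcup_j I_j'$ are simply absorbed into one of the blocks without changing anything: set $I_1 := I_1' \cup \bigl(\{1,\dots,n\} \setminus \bigcup_j I_j'\bigr)$ and $I_j := I_j'$ for $j \ge 2$, and define
\[
x_i := \beta_i^{(j)} \text{ when } i \in I_j', \qquad x_i := 0 \text{ otherwise.}
\]
Nonnegativity $x_i \ge 0$ is immediate, and nontriviality follows from $\sum_{i \in I_j'} \beta_i^{(j)} = 1 > 0$ in every block (each $I_j'$ must be nonempty, since otherwise its convex hull would be empty and could not participate in the Tverberg intersection).

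Finally I would verify the two families of equations in~\eqref{eq: K system}. The first reads
\[
\sum_{i \in I_j} x_i \;=\; \sum_{i \in I_j'} \beta_i^{(j)} \;=\; 1,
\]
independent of $j$, so all block sums coincide. For the second family, reading off the $e$-th coordinate of the common Tverberg point gives
\[
\sum_{i \in I_j} x_i\, a_{e,i} \;=\; \sum_{i \in I_j'} \beta_i^{(j)}(A_i)_e \;=\; p_e
\]
for every $j$ and every $e \in E$, which is precisely the required cross-block equality.

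There is no real obstacle here: the proof is essentially a dictionary translating Tverberg's geometric statement into the algebraic language of the proposition. The only subtlety is to interpret the coefficient array column-wise, packaging one $\R^m$-vector per variable index $i$ rather than per equation index $e$, and to observe that the convex-combination normalization $\sum_{i\in I_j'} \beta_i^{(j)} = 1$ automatically supplies the equal-block-sums condition of~\eqref{eq: K system}. Thus a single invocation of Tverberg simultaneously captures both families of equations.
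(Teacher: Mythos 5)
Your proof is correct and follows essentially the same route as the paper's: you package the coefficients columnwise into points $A_i \in \R^m$, invoke Tverberg's theorem to obtain disjoint blocks with a common point in their convex hulls, and read off the nonnegative weights from the convex-combination representation, noting that $\sum_i \beta_i^{(j)} = 1$ gives the first family of equations for free. The only addition beyond the paper's argument is your explicit handling of indices not covered by any Tverberg block (absorbing them into $I_1$ with zero weight), which the paper glosses over but which is indeed needed to produce an honest partition of $\{1,\dots,n\}$.
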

\begin{proof}
For $i=1,2,\dots, n$, let $A_i=(a_{e,i}, e\in E)^\intercal\in \R^m$
and let $A:=\{A_1,A_2,\ldots,A_n\}$ be the set of $n$ points in Tverberg's theorem. There exist $K$ 
pairwise disjoint subsets $A^{(1)},\ldots,A^{(K)}\subset A$ such that the intersection of their 
convex hulls is not empty. Let $P\in \R^m$ be a point in this intersection and let $I_j: = \{i: A_i\in A^{(j)}\}, \quad j=1,2,\ldots,K$. There
exists a vector of coefficients $\bfx=(x_1,\dots,x_n)\ge 0$ such that
  $$
 \sum_{i\in I_j} x_i A_i=P, \quad\sum_{i\in I_j}x_i=1 \quad\text{for all }j=1,\dots,K.
   $$
Once we put $X_j:=\{x_i: i\in I_r\}$ for all $j$, this gives the desired nonnegative
solution of the system \eqref{eq: K system}.
\end{proof}
The condition $\sum_{i\in I_j}x_i=1$ is redundant for our needs, but equations \eqref{eq: K system} are homogeneous, and the vector of solutions $\bfx$ can be scaled by any positive factor.
Any partition ${\mathscr I}_{n,K}$ of the form \eqref{eq: TP} is called a {\em Tverberg partition} below.

The next statement forms the main technical result of our work, which ties together the code families and auxiliary results introduced above. The point that it makes that once we manage to solve, in any way, 
equations (\hyperlink{C1}{C1})--(\hyperlink{C4}{C4}) for the basis coefficients, we obtain
$SU(q)$ codes of all the three types that we consider.
 \begin{theorem}\label{prop:bound} Let $\cQ$ be one of {\rm \{PI code, spin code, Fock state code\}}.
If there exists an $\ell_1$ code $\cC_{\ell_1}\subset \cS_{q,N}$ with distance $d_1(\cC_{\ell_1})\ge t+1$ such that
     \begin{align}\label{eq: LowerBound}
         |\cC_{\ell_1}|\geq (K-1)\binom{q+t-1}{q-1} + 1,
     \end{align}
then there exists a code $\cQ$ with parameters $(N,K,q,t+1)$.
 \end{theorem}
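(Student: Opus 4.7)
Since Propositions~\ref{prop:equivalence} and~\ref{prop:equivalenceSpinPI} carry any $(N,K,q,t+1)$ PI code to Fock-state and spin codes with identical parameters via the simplex identifications $f$ and $\sigma$, it suffices to construct a PI code. Following the hint in Remark~\ref{remark: count}, I adopt a ``disjoint-support'' ansatz in which the coefficients $\alpha_{\un}^{(i)}$ are real and nonnegative, with supports contained in $K$ pairwise disjoint subsets $I_1,\dots,I_K$ of $\cC_{\ell_1}$. Writing $x_\un:=(\alpha_\un^{(i)})^2$ for $\un\in I_i$, condition (\hyperlink{C1}{C1}) is immediate from disjointness. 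For (\hyperlink{C3}{C3}) and the off-diagonal part of (\hyperlink{C4}{C4}), observe that
\begin{align*}
 d_1(\un,\un-\ue+\uf)=\tfrac{1}{2}\sum_{i=0}^{q-1}|e_i-f_i|\le t \quad\text{for all }\ue,\uf\in\cS_{q,t},
\end{align*}
so if both indices lie in $\cC_{\ell_1}$ then the hypothesis $d_1(\cC_{\ell_1})\ge t+1$ forces $\ue=\uf$. Hence (\hyperlink{C3}{C3}) and all off-diagonal instances of (\hyperlink{C4}{C4}) hold automatically, and only the diagonal case of (\hyperlink{C4}{C4}) remains.

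\textbf{The reduced system and Tverberg.} What survives are the equations
\begin{align*}
 \sum_{\un\in I_j}x_\un=C_0,\qquad \sum_{\un\in I_j}x_\un\,\gamma_\ue(\un)=C_\ue\quad(\ue\in\cS_{q,t}),
\end{align*}
with $\gamma_\ue(\un):=\binom{N-t}{\un-\ue}/\binom{N}{\un}$, to be satisfied with the same right-hand sides across $j=1,\dots,K$. I attach to each $\un\in\cC_{\ell_1}$ the vector $A_\un:=(\gamma_\ue(\un))_{\ue\in\cS_{q,t}}$ and invoke the multinomial Vandermonde identity (\cref{lemma:convolution}) to deduce $\sum_{\ue}\binom{t}{\ue}\gamma_\ue(\un)=1$, so that all $A_\un$ lie on a common affine hyperplane; this simultaneously absorbs (\hyperlink{C2}{C2}) into the affine normalization of the convex combinations. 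I then apply Tverberg's theorem (\cref{theo:Tverberg}, in the form of \cref{prop: tverberg}) inside this hyperplane: the size bound assumed on $|\cC_{\ell_1}|$ is tailored so that the Tverberg threshold is met, producing a partition $\cC_{\ell_1}=I_1\sqcup\cdots\sqcup I_K$ together with nonnegative weights $x_\un$ whose images $A_\un$ admit a common convex combination. Setting $\alpha_\un^{(i)}:=\sqrt{x_\un}$ for $\un\in I_i$ and zero otherwise, \cref{Theo:PIMainTheorem} certifies an $(N,K,q,t+1)$ PI code, and $f$, $\sigma$ transport it to Fock-state and spin codes of identical parameters.

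\textbf{Main obstacle.} The substantive step is the dimension bookkeeping in the Tverberg application: a direct use of the theorem in the ambient $\R^{|\cS_{q,t}|}$ would demand strictly more points than the hypothesis supplies. Recognizing that the Vandermonde identity pins every $A_\un$ to a codimension-one affine hyperplane is precisely what tightens the Tverberg threshold by one, and at the same time eliminates (\hyperlink{C2}{C2}) as a stand-alone constraint by folding it into the convex-combination normalization. Everything else---propagating the $\ell_1$-distance hypothesis through (\hyperlink{C1}{C1})--(\hyperlink{C4}{C4}) and passing from PI to Fock-state and spin codes through $f$ and $\sigma$---is straightforward thanks to the clean form of the conditions established in \cref{Theo:PIMainTheorem}, \cref{Theo:BosonMainTheoremAD}, and \cref{Theo:SpinMainTheorem}.
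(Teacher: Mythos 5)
Your proposal reproduces the paper's own proof essentially verbatim: the disjoint-support ansatz indexed by blocks of a Tverberg partition of $\cC_{\ell_1}$, the use of the $\ell_1$-distance hypothesis to kill (\hyperlink{C1}{C1}), (\hyperlink{C3}{C3}), and the $\ue\neq\uf$ instances of (\hyperlink{C4}{C4}), the Vandermonde identity $\sum_{\ue}\binom{t}{\ue}\gamma_\ue(\un)=1$ to drop the Tverberg points into a codimension-one affine hyperplane (which is exactly what tightens the point-count to $(K-1)|\cS_{q,t}|+1$), and the two Propositions~\ref{prop:equivalence} and~\ref{prop:equivalenceSpinPI} to export the PI code to Fock-state and spin codes. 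The only cosmetic difference is that you frame (\hyperlink{C2}{C2}) as being ``absorbed'' into the affine normalization of the convex coefficients, whereas the paper derives (\hyperlink{C2}{C2}) from the $\ue=\uf$ instances of (\hyperlink{C4}{C4}) by contracting against the vector $l=(\binom{t}{\ue})_\ue$; these are two descriptions of the same fact.
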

\begin{proof} 
The proof involves $\ell_1$ codes, used to construct three different types of $SU(q)$ codes.
Our goal is to show that, for the quantum codes that we are constructing, conditions (\hyperlink{C1}{C1})--(\hyperlink{C4}{C4})
are satisfied, implying the distance bound. Let us fix an $\ell_1$ code denoted by $B$ below in the proof. The way we construct $SU(q)$ codes will rely on a partition of the code $B$ into $K$ subsets. To satisfy conditions (\hyperlink{C1}{C1}), (\hyperlink{C3}{C3}), any partition will suffice, while 
conditions (\hyperlink{C2}{C2}), (\hyperlink{C4}{C4}) will rely on a Tverberg partition. Therefore, we will fix a Tverberg
partition from the outset, but first we need to match the parameters in the statement to \cref{prop: tverberg}. 

Let $\{1,2,\dots,B\}$ be the set of indices of the variables $x_i$ and let $\cS_{q,t}$ play the role of $E$ above. Consider the set of 
points $$\Big\{a_{\uh}=(a_{\ue,\uh})\in \reals^{|\cS_{q,t}|}, \uh\in B\Big\}$$ with coordinates
  \begin{equation}\label{eq: coefficients}
   a_{\ue,\uh}={\binom{N-t}{\uh-\ue}}\Big/{\binom{N}{\uh}}, \quad  \end{equation}
and let ${\mathscr I}_{|B|,K}$ be a corresponding Tverberg partition. Note that the points are
indexed by tuples (codewords) in $B$, so we will denote blocks of this partition by $B_i, i=0,1,\dots, K-1$. 

The following argument, phrased for PI codes, applies equally to Fock state and spin codes by \cref{prop:equivalence} and \cref{prop:equivalenceSpinPI}.
Consider a code $\cQ_{PI}$ defined by the basis
    \begin{align*}
        \ket{\bfc_i}=\sum_{\un\in\cS_{q,N}}\alpha^{(i)}_\un\ket{D_\un},\quad i=0,1,\dots,K-1,
    \end{align*}
where
    \begin{align*}
        \alpha_\un^{(i)} \remove{:= \1_{(\un\in B_i)}\mu_\un}:=\sum_{\uh\in B_i}\mu_\uh\delta_{\un,\uh},\quad i\in\{0,1,\ldots,K-1\}.
    \end{align*}
    
The parameters $\mu_\uh$ on the right are as yet undefined; below they will be related to the unknowns $x_\uh$ in \cref{eq: K system}. We will show that
it is possible to choose the set $(\mu_\uh)_\uh$ to satisfy the error correction conditions.
For the moment, we will argue that no matter what $\mu_\uh$ are, Conditions (\hyperlink{C1}{C1}), (\hyperlink{C3}{C3})
are satisfied with our definition of $\alpha_{\un}^{(i)}$.

First, let us verify (\hyperlink{C1}{C1}). It is trivially satisfied since
   $$
     (\alpha_\un^{(i)})^*\alpha_\un^{(j)}=\sum_{\uh\in B_i}\sum_{\uh'\in B_j}
    \mu_\uh^* \mu_{\uh'}\delta_{\un,\uh}\delta_{\uh,\uh'}=0,
   $$
where we used the fact $\delta_{\un,\uh}\delta_{\un,\uh^\prime}=\delta_{\un,\uh}\delta_{\uh,\uh^\prime}$ and $\delta_{\uh,\uh^\prime}=0$. Let us address (\hyperlink{C3}{C3}). Let $\ue,\uf\in \cS_{q,t}$ and observe that for any two distinct
tuples $\uh,\uh'\in B$, $h\ne h'+e-f$ by the assumption $d_1(B)\ge t+1$. Then note that    
      $$
    \delta_{\un,\uh}\delta_{\un-\ue+\uf,\uh^\prime}=\delta_{\un,\uh}\delta_{\uh,\uh^\prime+\ue-\uf}=0
      $$
and 
    \begin{align*}
        (\alpha^{(i)}_{\un})^*\alpha^{(j)}_{\un-\ue+\uf}=\sum_{\uh\in B_i}\sum_{\uh^\prime\in B_j}\mu_\uh^*\mu_{\uh^\prime}\delta_{\un,\uh}\delta_{\uh,\uh^\prime+\ue-\uf}=0
    \end{align*}
yielding (\hyperlink{C3}{C3}). 

Now we will show that there is a choice of the coefficients $\mu_\uh$ that makes Conditions (\hyperlink{C2}{C2}), (\hyperlink{C4}{C4}) turn into equalities. Let us start with rewriting (\hyperlink{C4}{C4}). First, observe that with $\ue,\uf,\uh,\uh'$ as above, $\delta_{\uh,\uh^\prime+\ue-\uf}\ne0$ only if $\ue=\uf$ and $\uh=\uh^\prime$. Therefore, 
    \begin{align*}
        (\alpha^{(i)}_{\un})^*\alpha^{(i)}_{\un-\ue+\uf}&=\sum_{\uh\in B_i}\sum_{\uh^\prime\in B_i}\mu_\uh^*\mu_{\uh^\prime}\delta_{\un,\uh} \delta_{\uh,\uh^\prime+\ue-\uf}\\
        &=\sum_{\uh\in B_i}|\mu_\uh|^2\delta_{\un,\uh}
    \end{align*}
Now put $x_\uh:=|\mu_\uh|^2\ge0$. Then, condition (\hyperlink{C4}{C4}) turns into the set of relations
\begin{align}\label{eq:conditionC4simplified1}
        \sum_{\uh\in B_0}x_\uh\frac{\binom{N-t}{\uh-\ue}}{\binom{N}{\uh}}=\ldots=\sum_{\uh\in B_{K-1}}x_\uh\frac{\binom{N-t}{\uh-\ue}}{\binom{N}{\uh}} \hspace*{.5em} \forall\ue\in\cS_{q,t}.
\end{align}

Let us address the last remaining condition, (\hyperlink{C2}{C2}). First, we rewrite it for our set of coefficients. Following the same steps as above, we find
    \begin{align*}
        |\alpha_{\un}^{(i)}|^2=\sum_{\uh\in B_i}|\mu_\uh|^2\delta_{\un,\uh},\quad i\in\{0,1,\ldots,K-1\}
    \end{align*}
and thus, Condition (\hyperlink{C2}{C2}) turns into
   \begin{align}\label{eq:conditionC2simplified}
        \sum_{\uh\in B_0}x_\uh=\sum_{\uh\in B_1}x_\uh=\ldots=\sum_{\uh\in B_{K-1}}x_\uh.
    \end{align}
Having written our error correction conditions \eqref{eq:conditionC4simplified1}, \eqref{eq:conditionC2simplified} in the form of \cref{eq: K system}, we deduce from Tverberg's theorem that this system has a nonnegative solution if 
$|B|\geq (|\cS_{q,t}|+1)(K-1)+1$ (at this point, the reader may recall our \cref{remark: count}).

Our assumption in \cref{eq: LowerBound} is weaker than this inequality. This relaxation is possible because our 
specific point set $(a_\uh)_\uh$, \cref{eq: coefficients}, in fact lives in a lower-dimensional subspace. Indeed, let $l=(l_\ue)_{\ue\in \cS_{q,t}},$ where $l_{\ue}=\binom t \ue$ for all $\ue$. Then $l^\intercal\cdot a_\uh=1$ for all $\uh\in B$ by \cref{lemma:convolution}. Thus, this point set is contained in an $(|\cS_{q,t}|-1)$-dimensional affine hyperplane, and Tverberg's theorem implies that $(K-1)|\cS_{q,t}|+1$ points suffice for a nonnegative solution to the error-correction conditions \eqref{eq:conditionC4simplified1}, \eqref{eq:conditionC2simplified}. This is precisely our claimed bound \eqref{eq: LowerBound}.

Even though this is not formally needed, it is easy to derive \cref{eq:conditionC2simplified} explicitly. Indeed,
from \cref{eq:conditionC4simplified1}, for any $i,j\in\{0,1,\ldots,K-1\}$, 
   $$
   \sum_{\uh\in B_i} a_\uh x_\uh=\sum_{\uh\in B_j} a_{\uh}x_\uh.
$$
Multiplying by $l$ on both sides and taking $l$ inside the sums by linearity, we obtain an equality in \cref{eq:conditionC2simplified}.

Let $(x_i, i=1,\dots,|B|)$ be a nonnegative solution of \cref{eq:conditionC4simplified1}. We have shown that the PI code 
with the basis
    \begin{equation}\label{eq: PI Tverberg}
    \ket{\bfc_i}=\sum_{\uh\in B_i}\sqrt{x_{\uh}}\ket{D_\uh}, \quad i=0,1,\dots,K-1
    \end{equation}
 has the parameters as in the statement of the theorem. By \cref{Theo:PIMainTheorem}, this shows our
 claim for PI codes, and \cref{Theo:BosonMainTheoremAD,Theo:SpinMainTheorem} imply it for the
 two remaining code families.
\end{proof}

\begin{remark}\label{remark: Radon} Elements of the idea of constructing codes employed in this section have earlier appeared in \cite{movassagh2024constructing}. To explain their result, recall that the version of Tverberg's theorem for $K=2$ is known as {\em Radon's lemma} \cite{matousek2013lectures}, which says that $m+2$ or more points in $\R^m$ can be partitioned into two subsets whose convex hulls have a nonempty intersection. In other words,
given a set $A=\{A_1,A_2,\dots,A_{m+2}\}\subset\R^m$, the system
   \begin{align*}
   \sum_{i\in A^{(1)}} x_i a_{j,i}+\sum_{i\in A^{(2)}} x_i a_{j,i}&=0, \quad j=1,\dots,m \\
   \sum_{i\in A^{(1)}} x_i+\sum_{i\in A^{(2)}}x_i&=0 
   \end{align*}
has a solution satisfying $A^{(1)}\sqcup A^{(2)}=A$ and $x_i\ge 0, i\in A^{(1)};x_i<0, i\in A^{(2)}$. This system is clearly equivalent to \cref{eq: K system}\footnote{To prove the lemma, observe that there are more $x_i$'s than equations, so this system has a nonzero solution, and the partition is naturally formed by the indices of the positive and negative $x_i$. The proof of Tverberg's theorem is much more involved.}.
The authors of \cite{movassagh2024constructing} essentially rediscovered Radon's lemma, designing their code construction algorithm based on it. Our enhanced formalism involving PI codes enabled us to design much more general constructions with better parameter estimates. 

In a related work, \cite{OuyangADCode}, the authors proved the existence of Fock state codes of dimension $K=2$ under the assumption for $\ell_1$ codes of the form 
  $$
   |\cC_{\ell_1}|\geq \sum_{i=0}^t\binom{q+i-1}{q-1} - \binom{t}{2}.
  $$
Our result relies on a less stringent requirement, \cref{eq: LowerBound}, representing an improvement over \cite{OuyangADCode}.
\end{remark}

\begin{remark} \label{remark: EarlierWork} 
As previously mentioned, a lower bound for the size of $K$-dimensional Fock state codes appears in an early work, \cite{chuangADCode}. 
Their bound is weaker than the bound in \cref{eq: LowerBound} in the sense that it relies on larger-size $\ell_1$ codes, which yield Fock-state codes whose codewords are more difficult to realize.
In addition, the argument in that work does not account for the requirement
of a positive solution to the equations for the error-correcting conditions, and therefore appears incomplete.
\end{remark}

\subsection{PI, Fock-state, and spin code\\ 
asymptotics from Sidon sets} \label{sec: asymptotic}

In this section, we analyze sequences of $SU(q)$ codes obtained from the proposed construction. 
Using the connection between these codes and $\ell_1$ codes established above, we begin by finding a sufficiently large-size
$\ell_1$ code with large distance. This will follow by \cref{theorem:Kovacevic} once we bring in the following classic result.
\begin{theorem}[Bose--Chowla, \cite{Chowla1962}]\label{theorem:Bose-Chowla}
     Let $p$ be a prime, $q=p^r$, and $m=(q^{t+1}-1)/(q-1)$. Then, there exist $q+1$ integers, all less than $m$,
    \begin{align*}
        d_0=0, d_1=1, d_2,\ldots, d_q
    \end{align*}
    such that the sums
    \begin{align*}
        d_{i_1}+d_{i_2}+\ldots+d_{i_t}
    \end{align*}
    with $0\leq {i_1}\leq {i_2}\leq \ldots\leq {i_t}\leq q$ are all distinct modulo $m$.
\end{theorem}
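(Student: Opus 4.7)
The plan is to realize the claimed Sidon set via the classical finite-field construction and then verify the $B_t$ property using unique factorization in $\ff_q[X]$. First I would work inside the degree-$(t+1)$ extension $\ff_{q^{t+1}}$; its multiplicative group is cyclic of order $q^{t+1}-1$, and $\ff_q^\ast$ is a subgroup of order $q-1$, so the quotient $\ff_{q^{t+1}}^\ast/\ff_q^\ast$ is cyclic of order $m$. Fixing a generator $\alpha$ of $\ff_{q^{t+1}}^\ast$, the discrete logarithm $\log_\alpha$ descends to a bijection between these cosets and $\Z/m\Z$, so picking $q+1$ well-chosen cosets will supply the desired residues $d_0,\ldots,d_q$.

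Next I would exhibit those cosets explicitly. Enumerate $\ff_q=\{\gamma_0=0,\gamma_1,\ldots,\gamma_{q-1}\}$, set $D_0:=1$ and $D_j:=\alpha+\gamma_{j-1}$ for $1\le j\le q$, and let $d_j:=\log_\alpha(D_j)\bmod m$. Each $D_j$ lies in $\ff_{q^{t+1}}^\ast$ because $\alpha\notin\ff_q$ (its minimal polynomial over $\ff_q$ has degree $t+1>1$), and the $q+1$ cosets $D_j\ff_q^\ast$ are pairwise distinct by a one-line check: $(\alpha+\gamma)=c(\alpha+\gamma')$ with $c\in\ff_q^\ast$ forces $c=1$ and $\gamma=\gamma'$. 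Consequently $d_0,\ldots,d_q$ are well-defined residues modulo $m$ with $d_0=0$ and $d_1=\log_\alpha(\alpha)=1$, matching the statement.

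The heart of the proof is the $B_t$ property. Suppose two weakly increasing tuples in $\{0,1,\ldots,q\}^t$ yield the same sum modulo $m$; lifting through $\log_\alpha$, this is equivalent to
\[
\prod_{k=1}^t D_{i_k}\;=\;c\prod_{k=1}^t D_{j_k}\qquad\text{for some }c\in\ff_q^\ast.
\]
Both sides are values at $X=\alpha$ of polynomials $P(X),Q(X)\in\ff_q[X]$ of degree at most $t$ (each $D_0=1$ contributes degree $0$, every other factor contributes degree $1$). Because $1,\alpha,\ldots,\alpha^t$ are $\ff_q$-linearly independent, the field identity $P(\alpha)=c\,Q(\alpha)$ promotes to the polynomial identity $P(X)=c\,Q(X)$ in $\ff_q[X]$. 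Matching leading coefficients of the monic non-trivial factor parts forces $c=1$, and unique factorization in $\ff_q[X]$ then forces the multisets of roots to coincide; combined with matching counts of $D_0$ factors, this yields $(i_1,\ldots,i_t)=(j_1,\ldots,j_t)$ as sorted tuples.

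The main delicate point I expect is the degree bookkeeping that permits promoting a field equality to a polynomial equality. The hypothesis $i_k\le q$ for all $k$ is precisely what guarantees that each factor is at most linear in $\alpha$, keeping the combined degree at most $t$ and safely below $t+1$, the point at which the minimal polynomial of $\alpha$ would inject a nontrivial relation and break linear independence. This accounts for both the shape of the Sidon condition and the sharpness of the Bose--Chowla parameters.
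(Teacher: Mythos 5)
The paper states Theorem~\ref{theorem:Bose-Chowla} as a classical result with a citation to Bose--Chowla's original paper and does not supply a proof, so there is no in-paper argument to compare against. Your reconstruction is the standard finite-field proof and it is correct: take a primitive element $\alpha$ of $\ff_{q^{t+1}}$, form the $q+1$ cosets of $\ff_q^\ast$ represented by $1$ and $\alpha+\gamma$ for $\gamma\in\ff_q$, pass to $\Z/m\Z$ via the discrete logarithm (valid because $\ff_q^\ast=\langle\alpha^m\rangle$), and use that $1,\alpha,\dots,\alpha^t$ are $\ff_q$-linearly independent to promote the field identity $P(\alpha)=cQ(\alpha)$ to a polynomial identity, whence unique factorization pins down the multiset of factors. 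One step you leave implicit but should state: once $P=cQ$ in $\ff_q[X]$ with $c\ne 0$, the degrees of $P$ and $Q$ must agree, which equates the counts of $D_0=1$ factors on each side; together with both polynomials being monic this forces $c=1$ and then the factor multisets coincide. With that small gap closed, the argument is complete, and your observation that the hypothesis $t$ (rather than $t+1$) terms in each sum is exactly what keeps both products at degree at most $t$, strictly below $\deg_{\ff_q}\alpha=t+1$, correctly identifies why the Bose--Chowla parameters are sharp.
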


Using this theorem, we obtain sequences of $\ell_1$ codes with the parameters as given next.
\begin{proposition}\label{prop:asymptotics}
Consider $\ell_1$ codes in $\cS_{N,N}$. As long as 
  \begin{equation}\label{eq: ll}
  t(1+\log N)+\log K-1\le N,
  \end{equation}
there exists an $\ell_1$ code $|\cC_{\ell_1}|\subset\cS_{N,N}$ with distance $d_1(\cC_{\ell_1})\ge t+1$ and size that satisfies the bound 
    \eqref{eq: LowerBound}.
In particular, if $N\to\infty$ and $t\log N+\log K = o(N)$, there exists a sequence of \ $\ell_1$ codes
that support the conclusion of \cref{prop:bound}.
\end{proposition}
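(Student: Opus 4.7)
The plan is to invoke the Bose--Chowla construction to obtain a $t$-Sidon set in a small cyclic group, feed this into \cref{theorem:Kovacevic} to produce an $\ell_1$ code in $\cS_{N,N}$, and then verify that the size of this code exceeds the threshold in \cref{eq: LowerBound} under the hypothesis \cref{eq: ll}.

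First, apply Bertrand's postulate to pick a prime $p$ with $N-1\le p\le 2N$. \cref{theorem:Bose-Chowla} then gives a $t$-Sidon set of cardinality $p+1\ge N$ in the cyclic group $\Z_m$ with $m=(p^{t+1}-1)/(p-1)\le 2\,p^t\le 2^{t+1}N^t$. Restricting to any $N$-element subset produces a $t$-Sidon set of the required size $q=N$, and \cref{theorem:Kovacevic} yields an $\ell_1$ code $\cC_{\ell_1}\subset \cS_{N,N}$ with $d_1(\cC_{\ell_1})\ge t+1$ and
\[
|\cC_{\ell_1}|\;\ge\;\frac{|\cS_{N,N}|}{m}\;=\;\frac{1}{m}\binom{2N-1}{N-1}.
\]

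Second, the task is to estimate this lower bound and compare it with \cref{eq: LowerBound}. Using the crude bounds $\binom{2N-1}{N-1}\ge 4^{N-1}/N$ and, for $t\le N$, $\binom{N+t-1}{N-1}=\binom{N+t-1}{t}\le (2N)^t/t!$, it suffices to verify
\[
\frac{4^{N-1}}{N\,m}\;\ge\;(K-1)\,\frac{(2N)^t}{t!}+1.
\]
Substituting $m\le 2^{t+1}N^t$ and taking logarithms, this reduces to an inequality of the form $2N\ge 2t\log N+\log K+O(t+\log N)$, which is implied by the hypothesis $t(1+\log N)+\log K-1\le N$ once the lower-order terms are absorbed. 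The asymptotic claim then follows: if $t\log N+\log K=o(N)$, the hypothesis \cref{eq: ll} holds for all sufficiently large $N$, so the $\ell_1$ codes constructed above meet the premise of \cref{prop:bound}, which in turn yields the desired sequence of qudit PI, Fock state, and spin codes.

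The main obstacle is not the existence argument itself, which follows cleanly from Bose--Chowla combined with \cref{theorem:Kovacevic}, but the quantitative bookkeeping needed to squeeze the bound into the form \cref{eq: ll} with the precise constants stated. Two places demand care: (i) the slack introduced by Bertrand's postulate when $N$ is not a prime power, which forces $m$ to carry an extra factor of roughly $2^{t+1}$ beyond the naive $N^t$; and (ii) matching the $t!$ in the denominator of $\binom{N+t-1}{t}$ against the $N$ in the denominator of $\binom{2N-1}{N-1}/4^{N-1}$, to confirm that the $O(t+\log N)$ residual really is dominated by the slack $N - t(1+\log N) - \log K + 1\ge 0$ afforded by the hypothesis.
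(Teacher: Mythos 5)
Your route coincides with the paper's: Bose--Chowla gives a $t$-Sidon set in a cyclic group of size $m\approx N^t$, \cref{theorem:Kovacevic} then yields an $\ell_1$ code in $\cS_{N,N}$ of size at least $|\cS_{N,N}|/m$, and elementary binomial estimates compare this against the threshold in \eqref{eq: LowerBound}. Two minor observations. First, your invocation of Bertrand's postulate to handle non-prime-power $N$ is a detail the paper glosses over --- it silently takes $q=N$ inside \cref{theorem:Bose-Chowla} --- and is worth making explicit, though the resulting factor of roughly $2^{t+1}$ in $m$ only costs an additive $O(t)$ in the exponent. Second, your hesitation about ``squeezing the bound into the form \eqref{eq: ll} with the precise constants'' is well founded: the paper's own chain of estimates (using $\binom{2N}{N}\ge 4^N/(2\sqrt{N})$, $m\le 2N^t$, and $(K-1)\binom{N+t-1}{N-1}+1\le K2^{N+t}$) yields a ratio whose base-$2$ logarithm is closer to $N - t(1+\log N) - \log N - \log K - 2$ than to $N - t(1+\log N) - \log K + 1$, so the exact $-1$ in \eqref{eq: ll} does not fall out cleanly of either computation. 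Since the asymptotic claim only needs this exponent to eventually be nonnegative under $t\log N + \log K = o(N)$, the $O(t + \log N)$ residual is harmless, and your argument establishes the same conclusion once that slack is explicitly absorbed.
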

\begin{proof} \cref{theorem:Bose-Chowla} says that the group $\Z_m$ contains Sidon sets of size $q$. 
    By \cref{theorem:Kovacevic} with $G=\Z_m$, there is an $\ell_1$ code $\cC_{\ell_1}$ with $d_1(\cC_{\ell_1})\ge t+1$ and size 
    \begin{align*}
        |\cC_{\ell_1}| \geq \frac{|\cS_{q,N}|}{m}.
    \end{align*}
As remarked above (and also implied by the statement), we assume that $q=N$. We will determine for which $K,N,t$ the right-hand side exceeds the number of points needed for a Tverberg
    partition to exist; cf.~\cref{eq: LowerBound}.
We have 
   $$
   \frac{|\cS_{N,N}|}m=\frac{\binom{2N-1}{N-1}}{m}\ge\frac{\binom {2N}{N}}{2(N^t-(1/N))}\ge\frac{4^{N-1}}{N^{t+1}}
   $$
(using $\binom {2N}{N}\ge 4^N/(2\sqrt N)$). At the same time,
  $$
  (K-1)\binom{N+t-1}{N-1}\le K 2^{N+t}.
  $$
The quotient of these two estimates is
  \begin{align*}
      \frac{4^{N-1}}{K N^t 2^{N+t}}=2^{N-t(1+\log N)-\log K-1}.
  \end{align*}
If the exponent in this expression satisfies \eqref{eq: ll}, there exists a code $\cC_{\ell_1}$ with the stated properties. This proves the first part of our claim.
 Taking $N\to\infty,$ we conclude that, as long as $$\limsup \frac {t\log N+\log K}N<1,$$  there exists a sequence of $\ell_1$ codes
as stated in the proposition (where the above condition is slightly weakened for better readability).
 \end{proof}

Turning to asymptotics, we note that this proposition gives rise to several possible choices of the scaling of $t$ and $K$ that can be used to construct PI codes, Fock state, and spin codes. For instance, the following result is immediately true.
\begin{theorem}\label{theorem:asymptotic}  Let $\cQ$ be one of {\rm \{PI code, spin code, Fock state code\}}.
For $N\to\infty$ and any $K, d$ that satisfy
  $$
  K =o(2^N) \text{ and } d =o\left(\frac N{\log N}\right),
  $$
there exists a sequence of \(\cQ(N,K,q=N,d)\) codes, where $d$ is $d_s,d_b,$ or the quantum PI code distance as appropriate.
\end{theorem}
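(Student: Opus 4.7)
The plan is to deduce this theorem directly from the existence statement for $\ell_1$ codes in \cref{prop:asymptotics} combined with the code-construction machinery of \cref{prop:bound}. Setting $t = d-1$, the goal reduces to verifying that the hypotheses on $K$ and $d$ imply the inequality \eqref{eq: ll}, namely $t(1+\log N)+\log K - 1 \le N$, for all sufficiently large $N$.

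First I would translate the growth assumptions. The hypothesis $d = o(N/\log N)$ gives $t\log N = o(N)$, and hence $t(1+\log N) = o(N)$ as well, since $t \le d = o(N/\log N) = o(N)$. The hypothesis $K = o(2^N)$ yields $\log K = o(N)$. Summing the two bounds, $t(1+\log N) + \log K - 1 = o(N)$, so this quantity is certainly at most $N$ once $N$ is large enough. Thus \cref{eq: ll} is satisfied asymptotically.

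Applying \cref{prop:asymptotics} with these parameters produces, for all sufficiently large $N$, an $\ell_1$ code $\cC_{\ell_1} \subset \cS_{N,N}$ with distance $d_1(\cC_{\ell_1}) \ge t+1 = d$ and size at least $(K-1)\binom{N+t-1}{q-1}+1$ (where $q=N$), which is exactly the lower bound required in \eqref{eq: LowerBound}. Invoking \cref{prop:bound} then yields a $\cQ(N,K,N,d)$ code in whichever of the three families $\{\text{PI codes, spin codes, Fock state codes}\}$ we choose, with the interpretation of $d$ as the PI distance, spin distance $d_s$, or bosonic distance $d_b$ coming from \cref{Theo:PIMainTheorem}, \cref{Theo:SpinMainTheorem}, and \cref{Theo:BosonMainTheoremAD} respectively.

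Since the argument is essentially a chained application of the two previous results, there is no serious obstacle. The only mildly delicate point is keeping track of the slight discrepancy between the sufficient condition derived in \cref{prop:asymptotics}, namely $t(1+\log N)+\log K-1 \le N$, and the cleaner asymptotic statement $t\log N + \log K = o(N)$ in the theorem: one needs to note that $t = o(N/\log N)$ absorbs the extra additive $t$ and the $-1$ is harmless. Once this is observed, the conclusion follows immediately.
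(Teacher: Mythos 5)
Your approach mirrors what the paper intends: chain \cref{prop:asymptotics} with \cref{prop:bound} by setting $t = d-1$ and verifying that the asymptotic hypotheses force \eqref{eq: ll} for all sufficiently large $N$. However, there is a genuine flaw in your translation of the hypothesis on $K$. You assert that ``$K = o(2^N)$ yields $\log K = o(N)$,'' but this implication is false: take $K(N) = 2^N/N$, which is $o(2^N)$ since $K/2^N = 1/N \to 0$, yet $\log K = N - \log N$, so $\log K / N \to 1 \neq 0$. With this $K$ and, say, $d = \sqrt{N}$ (which is $o(N/\log N)$), the left side of \eqref{eq: ll} is $\approx \sqrt{N}\log N + N - \log N - 1$, which exceeds $N$ for all large $N$. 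So the hypotheses of the theorem as you have interpreted them do not imply \eqref{eq: ll}, and the chain breaks at its first link.

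The statement that would make your argument go through is $K = 2^{o(N)}$ (equivalently $\log K = o(N)$), or the weaker $\limsup_{N\to\infty}(\log K)/N < 1$ used in the proof of \cref{prop:asymptotics}. It is worth noting that the paper's own phrasing of \cref{theorem:asymptotic}, asserted to be ``immediately true'' from \cref{prop:asymptotics}, appears to carry the same imprecision, so you have faithfully reproduced the intended route; but the deduction as written is still a wrong step. Once $K = o(2^N)$ is replaced by $K = 2^{o(N)}$, the rest of your proof --- the bookkeeping for $t(1+\log N)$, the invocation of \cref{prop:asymptotics} to produce the $\ell_1$ code, and the application of \cref{prop:bound} with \cref{Theo:PIMainTheorem}, \cref{Theo:SpinMainTheorem}, \cref{Theo:BosonMainTheoremAD} --- is correct and essentially the argument the paper has in mind.
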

Constructions of Fock state codes for AD errors were previously studied in \cite{chuangADCode}, \cite{VanLoock}, \cite{OuyangADCode,movassagh2024constructing}. An early work by Chuang et al.~\cite{chuangADCode} presented several examples of codes and claimed the existence of a code family with distance $d_b\propto N^{1/3}$ with an incomplete proof.  The work of Bergmann and van Loock \cite{VanLoock} constructed Fock state codes with distance $d_b\propto \sqrt N$. Finally, Ouyang \cite{ouyangQudit}
gave a construction of qudit PI codes with $d\propto \sqrt{N}$.

\section{Explicit Codes}\label{sec: Examples}

We define several explicit codes as special cases of our general constructions, yielding new code families and encapsulating codes previously defined in the literature. In doing so, we will implement the transitions shown in the diagram in \cref{fig: relations}.

\subsection{Fock state and spin codes from existing PI codes}
\subsubsection{Fock state codes}

Starting from \cref{prop:equivalence}, we can construct numerous new two-mode Fock state codes by leveraging existing families of PI codes. Certain families of two-dimensional PI qubit codes have received considerable attention in the literature.
In particular, Ruskai and Pollatsek \cite{ruskai-polatsek,ruskaiExchange} introduced a family of PI codes that can correct a single error. 
Ouyang \cite{ouyangPI} later constructed a family of qubit PI codes that can correct an arbitrary number of errors. The codes he introduced are defined by integer parameters $g,n$, and $u$, and hence are called $gnu$ codes. 
Reducing the number of physical qubits (the code length) improves the efficiency of the codes, and the shortest
$t$-error-correcting codes from this family have length $(2t+1)^2$. Generalizing this approach, in \cite{aydin2023family}, we introduced another family of combinatorial PI codes correcting $t$ errors, which yields shorter PI codes. In particular, a subclass of codes from that family, \cite[Thm.~5.3]{aydin2023family}, requires $(2t+1)^2-2t$ physical qubits to correct $t$ errors. 
Combining this construction with \cref{prop:equivalence} yields the following family of two-dimensional, two-mode Fock state codes.

\begin{construction}\label{constructionGmdeltaBoson}
 Let $g,m,\delta $ be nonnegative integers and let $\epsilon\in\{-1,+1\}$. Define a two-mode Fock state code $\cQ^{(b)}_{g,m,\delta,\epsilon}$ via its logical computational basis
    \begin{align*}
        &\ket{\bfc_0} =\sum_{\substack{\text{$l$ {\rm even}}\\0\leq l \leq m}} \gamma b_l\ket{gl,n-gl}_b + 
        \sum_{\substack{\text{$l$ {\rm odd}}\\0\leq l \leq m}} \gamma b_l\ket{n-gl,gl}_b,\\
        &\ket{\bfc_1} = \sum_{\substack{\text{$l$ {\rm odd}}\\0\leq l \leq m}} \gamma b_l\ket{gl,n-gl}_b +\epsilon
        \sum_{\substack{\text{$l$ {\rm even}}\\0\leq l \leq m}} \gamma b_l\ket{n-gl,gl}_b,
    \end{align*}
    where $ n=2gm+\delta+1, $
    $
    b_l=\sqrt{{\binom{m}{l}}/{\binom{n/g-l}{m+1} }},
    $
and $ \gamma =  \sqrt{\binom{n/(2g)}{m} \frac{n-2gm}{g(m+1)} } $ is the normalizing factor. 
\end{construction}

\begin{theorem}\label{theoremGMdeltaBoson}
   Let $ t $ be a nonnegative integer and let $ m\geq \left\lceil\frac{t}{2}\right\rceil $ and $\delta\geq t$. If
    $$
(   g\ge t, \epsilon=-1) \text{ or }(g\ge t+1,\epsilon=+1),
   $$
then the code $\cQ^{(b)}_{m,l,\delta,\epsilon}$ has bosonic distance $d_b=t+1$ and total excitation $N=2gm+\delta+1$.
\end{theorem}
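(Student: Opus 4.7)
The plan is to reduce the theorem to the corresponding qubit PI code result in \cite{aydin2023family} via the Fock-to-PI correspondence of \cref{prop:equivalence}. First I would verify the easy part: the total excitation. Every basis Fock state $\ket{gl,n-gl}_b$ or $\ket{n-gl,gl}_b$ appearing in $\ket{\bfc_0}$ and $\ket{\bfc_1}$ has excitation numbers summing to $n=2gm+\delta+1$, so both logical states lie in the fixed-excitation subspace $\cS_{2,N}$ with $N=n$, establishing the claim $N=2gm+\delta+1$.

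Next I would apply the inverse map $f^{-1}$ of \cref{eq:mappingf}, sending $\ket{\un}_b \mapsto \ket{D_\un}$. Under $f^{-1}$ the two-mode Fock state $\ket{gl,n-gl}_b$ becomes the qubit Dicke state with composition $(gl,n-gl)$ (Hamming weight $n-gl$), and $\ket{n-gl,gl}_b$ becomes the Dicke state of weight $gl$. Thus $\cQ^{(b)}_{g,m,\delta,\epsilon}$ is mapped to a two-dimensional qubit PI code of length $n$ whose basis is a linear combination of Dicke states indexed by the arithmetic progressions $\{gl,n-gl: 0\le l\le m\}$ with coefficients $\gamma b_l$ and sign pattern determined by the parity of $l$ and by $\epsilon$. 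This is precisely (up to relabeling) the shortened $gnu$-type family of \cite[Thm.~5.3]{aydin2023family}, whose design was engineered so that conditions (\hyperlink{C1}{C1})--(\hyperlink{C4}{C4}) of \cref{fig: KL conditions} for $q=2$ are satisfied.

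By \cite[Thm.~5.3]{aydin2023family}, in the parameter regime $m\ge \lceil t/2\rceil$, $\delta\ge t$, and $(g\ge t,\,\epsilon=-1)$ or $(g\ge t+1,\,\epsilon=+1)$, this PI code has distance $t+1$. Invoking \cref{Theo:PIMainTheorem} we obtain that it satisfies (\hyperlink{C1}{C1})--(\hyperlink{C4}{C4}); by \cref{Theo:BosonMainTheoremAD} (or equivalently by \cref{prop:equivalence}) the same basis coefficients define a Fock state code with bosonic distance $d_b=t+1$. Combining with the excitation count from the first step completes the proof.

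The main potential obstacle is bookkeeping: one must check that our labeling of Dicke states by compositions $(n_0,n_1)$ is consistent with the Hamming-weight convention of \cite{aydin2023family}, and that the sign pattern obtained from the even/odd parity of $l$ together with $\epsilon$ matches the one in that reference. This is a routine translation and contains no essential difficulty, since the correspondence $f$ is an isometry between the two Hilbert spaces that respects the KL conditions by \cref{Theo:PIMainTheorem,Theo:BosonMainTheoremAD}.
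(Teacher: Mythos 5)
Your proof is correct and follows essentially the same route as the paper's, which simply invokes \cref{prop:equivalence} together with the corresponding PI code result from \cite{aydin2023family}. One small bookkeeping note: the paper's proof cites Proposition~5.5 of \cite{aydin2023family} (the explicit $(g,m,\delta,\epsilon)$ PI code family), not Theorem~5.3 (the $(2t+1)^2-2t$ length statement referenced elsewhere in the text), so you may want to double-check which reference carries the precise distance claim you need.
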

\begin{proof}
    By \cref{prop:equivalence} and Proposition 5.5 in \cite{aydin2023family}.
\end{proof}

Let us compare this construction with existing results, in particular, those of Bergman--Van Loock \cite{VanLoock}. 
The codes they construct have total excitation $N=(t+1)^2$, which we improve to $N=(t+1)^2-t$ for any odd number of errors.
For even $t$, the parameters of the two proposals coincide.

We conclude this section with a few examples obtained using Construction \ref{constructionGmdeltaBoson}.
\begin{example}\label{example:N7Fockstate}
    Suppose $g=\delta=2$, $m=1$, and $\epsilon=-1$. Then the code $\cQ^{(b)}_{g,m,\delta,\epsilon}$ with the basis states
    \begin{align*}
        &\ket{\bfc_0} = \sqrt{\frac{3}{10}}\ket{0,7}_b + \sqrt{\frac{7}{10}}\ket{5,2}_b\\ 
        &\ket{\bfc_1} = \sqrt{\frac{7}{10}}\ket{2,5}_b-\sqrt{\frac{3}{10}}\ket{7,0}_b
    \end{align*}
    with total excitation $N=7$ has bosonic distance $d_b=3$.
\end{example}

\begin{example}
    Suppose $g=\delta=4$, $m=2$, and $\epsilon=-1$. Then the code $\cQ^{(b)}_{g,m,\delta,\epsilon}$ with the basis states
    \begin{align*}
        &\ket{\bfc_0} = \sqrt{\frac{5}{68}}\ket{0,21}_b + \sqrt{\frac{7}{12}}\ket{8,13}_b+\sqrt{\frac{35}{102}}\ket{17,4}_b\\ 
        &\ket{\bfc_1} = \sqrt{\frac{35}{102}}\ket{4,17}_b-\sqrt{\frac{7}{12}}\ket{13,8}_b-\sqrt{\frac{5}{68}}\ket{21,0}_b
    \end{align*}
    with total excitation $N=21$ has bosonic distance $d_b=5$.
\end{example}

Taking $\epsilon=+1, g=t+1$ and $m=t$, we obtain Fock state codes from $gnu$ codes relying on \cref{prop:equivalence}.  
Consider the following example, which reproduces Example 7 from \cite{chuangADCode}.
\begin{example}\label{example:9qubit}
    Suppose $g=3$, $m=1$, $\delta=2$ and $\epsilon=+1$. Then the code $\cQ^{(b)}_{g,m,\delta,\epsilon}$ with the basis states
    \begin{align*}
        \ket{\bfc_0}=\frac{1}{2}\ket{9,0}_b+\frac{\sqrt{3}}{2}\ket{3,6}_b\\
        \ket{\bfc_1}=\frac{\sqrt{3}}{2}\ket{6,3}_b+\frac{1}{2}\ket{0,9}_b
    \end{align*}
    has total excitation $N=9$ and can correct $2$-AD errors. 
    This is the two-mode Fock-state version of the \(((9,2,3))\) Ruskai code~\cite{ruskaiExchange}.
\end{example}

 For another example, an explicit construction of $K$-dimensional qudit PI codes with alphabet size $q$ is introduced in \cite[Thm.~5.2]{ouyangQudit}. In this construction, the authors present distance $d=t+1$ codes of length $N\geq(K-1)(t+1)^2$. Using this code construction, we can obtain explicit $K$-dimensional $q$-mode Fock state codes with a total excitation of at least $N=(K-1)(t+1)^2$ by \cref{prop:equivalence}. 

 \begin{example}
  We construct a $3$-dimensional Fock state code using the mapping $f$ in \cref{eq:mappingf} and the code in \cite[Example~6.4]{ouyangHigherDimensions}. This $2$-mode Fock state code, with the basis states
 \begin{align*}
     &\ket{\bfc_0} = \frac{1}{3}\ket{18,0}_b+\frac{\sqrt{7}}{3}\ket{9,9}_b+\frac{1}{3}\ket{0,18}_b\\
     &\ket{\bfc_1}=\frac{\sqrt{3}}{3}\ket{15,3}_b+\frac{\sqrt{6}}{3}\ket{6,12}_b\\
     &\ket{\bfc_2}=\frac{\sqrt{6}}{3}\ket{12,6}_b+\frac{\sqrt{3}}{3}\ket{3,15}_b~,
 \end{align*}
has distance $d_b=3$ and total excitation $N=18$.
    
 \end{example}
 
\subsubsection{Spin codes}
Considerations of the previous section apply to spin codes as well. In particular, \cref{theoremGMdeltaBoson} gives a family of $(N,2,2,t+1)$ spin codes. Importantly, we can also construct
$K$-dimensional spin codes hosted by irreps of $SU(q)$ for $K,q>2$, 
which yields a new way to encode quantum information in such few-level systems that does not rely on \(SU(2)\).
For instance, the qudit PI codes constructed in \cite{ouyangQudit}
produce a family of spin codes for arbitrary $K$ and $q$. All the examples presented in the previous section can also be cast as spin codes on discrete simplices via the mapping \(|\overline{n}\rangle_b \to |\overline{n}\rangle_s\).

\subsection{Covariant Fock state codes\\
~~~~~from PI and spin codes}\label{sec: JordanSchinger}

Identification of all three systems --- pure ``spin'' systems, Dicke-state spaces, and constant-excitation Fock-state spaces --- with a discrete simplex \(\cS_{q,N}\) allows us to inter-convert states and codes between any pair.
Furthermore, this correspondence also allows inter-conversion of certain logical operations. 

Basis states labeled by simplex points are in one-to-one correspondence with monomials in \(q\) variables of degree \(N\), i.e., the space Sym\(^N(\mathbb{C}^q)\).
As introduced in Sec.~\ref{subsec:js}, each such space admits an irrep of \(SU(q)\).
This irrep can be used to construct gates for its corresponding codes.

In the case of Fock state codes, group transformations are done by passive linear-optical transformations, whose Lie algebra is expressed by quadratic bosonic operators via the JS map~\eqref{eq:js_lie-algebra}.

If we instead switch to the labeling of Dicke states of \(N\) qudits of dimension \(q\), then we know that \(SU(q)\) transformations on this subspace are generated by the ``global'' qudit Lie algebra.
Letting \(J\) be some element of the Lie algebra in the fundamental irrep, its global representation is a sum of the \(N\) local terms, i.e., \(\sum_{i=1}^N \hat J^{(i)}\), where each qudit is acted on by its local generator 
$$ \hat J^{(i)} = \underbrace{I\otimes\dots\otimes I}_{i-1}\otimes~ J \otimes
       \underbrace{I\otimes\dots\otimes I}_{N-i}~.$$
       
Since both the Fock and Dicke labels are labeling the same irrep, we have the correspondence
\begin{equation}\label{eq:equivalence-lie}
   \left.\frac{1}{\sqrt{N}}\sum_{i=1}^{N}\hat  J^{(i)}\right|_{\text{Sym}^{N}(\mathbb{C}^{q})}\cong\left.\sum_{j,k=1}^{q}\ah_{j}^{\dagger}J_{jk}\ah_{k}\right|_{\text{Sym}^{N}(\mathbb{C}^{q})}
\end{equation}
when both representations are restricted to the same \(\text{Sym}^{N}(\mathbb{C}^{q})\) irrep of \(\mathfrak{u}(q)\).
This allows us to take any set of transversal (i.e., tensor-product) gates on a PI code and convert them to act as passive linear-optical transformations on its corresponding Fock state code, and visa versa.

Let $G$ be a subgroup of $SU(q)$, and let $\lambda$ be an irreducible representation of $G$. 
We define group elements \(g \in G\) as \(q\)-dimensional matrices represented by the fundamental irrep of \(SU(q)\).
We say that a PI code is $G$-{\em covariant} if the global (tensor-product or transversal) \(SU(q)\) representation, $g^{\otimes N}$, implements logical $\lambda(g)$ on the code for each $g\in G$. 
In other words, the action of $g^{\otimes N}$ preserves the code space of a $G$-covariant PI code for each $g\in G$, and the code space transforms as the \(\lambda\) irrep. 

We can define a $G$-covariant Fock state code in a similar manner.
For these codes, we use the passive linear optical representation of a group element \(g\),
$$D(g)=\exp\left(i\sum_{j,k=1}^{q}a_{j}^{\dagger} M_{jk}(g)a_{k}\right)~,$$
where \(M(g)\) is the Lie algebra element satisfying \(g = e^{iM}\).
A Fock state code is covariant if the physical transformation \(D(g)\) realizes the logical transformation \(\lambda(g)\) on the codespace.

We summarize the discussion so far as follows.
\begin{proposition}
    Let $G$ be a subgroup of $SU(q)$. The mapping $f$ defined in \eqref{eq:mappingf}
    sends a $G$-covariant $(N,K,q,t+1)$ PI code to a $G$-covariant $(N,K,q,t+1)$ Fock state code.
\end{proposition}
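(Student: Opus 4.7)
The plan is to verify that the simplex-indexing map $f\colon|D_\un\rangle\mapsto|\un\rangle_b$ intertwines the transversal $SU(q)$ action on Dicke states with the passive linear-optical action on constant-excitation Fock states, after which $G$-covariance transfers for free and the parameter claim follows directly from \cref{prop:equivalence}.

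First, I would recognize that both $\mathrm{span}\{|D_\un\rangle:\un\in\cS_{q,N}\}$ and $\mathrm{span}\{|\un\rangle_b:\un\in\cS_{q,N}\}$ realize the same completely symmetric irrep $\mathrm{Sym}^N(\complex^q)$ of $SU(q)$: the Dicke side carries the transversal action $g\mapsto g^{\otimes N}$, while the Fock side carries the passive linear-optical action $g\mapsto D(g)$ defined just before the statement. The equivalence \cref{eq:equivalence-lie} asserts that the corresponding Lie-algebra representations coincide once restricted to this irrep. Exponentiating this identification yields the intertwining relation $f\circ g^{\otimes N}=D(g)\circ f$ on $\mathrm{Sym}^N(\complex^q)$, valid for every $g\in SU(q)$ and in particular for every $g\in G$.

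Second, I would invoke \cref{prop:equivalence}, which already ensures that $f$ carries the logical basis of any $(N,K,q,t+1)$ PI code $\cQ_{PI}$ to the logical basis of an $(N,K,q,t+1)$ Fock state code $\cQ_f$ with identical parameters. If $\cQ_{PI}$ is $G$-covariant, then by definition $g^{\otimes N}\ket{\bfc_i}=\sum_j \lambda(g)_{ji}\ket{\bfc_j}$ for all $g\in G$. Applying $f$ to both sides and using the intertwining relation from the previous paragraph rewrites this as $D(g)\,f(\ket{\bfc_i})=\sum_j \lambda(g)_{ji}\,f(\ket{\bfc_j})$, which is exactly the definition of $G$-covariance for $\cQ_f$.

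The main obstacle is making the intertwining identity $f\circ g^{\otimes N}=D(g)\circ f$ rigorous: it is morally packaged in \cref{eq:equivalence-lie}, but one should check the normalization carefully and either exponentiate a basis of the Lie algebra or appeal to Schur's lemma. Concretely, $g^{\otimes N}$ restricted to the Dicke subspace and $D(g)$ restricted to the total-excitation-$N$ Fock subspace both define the irreducible $\mathrm{Sym}^N(\complex^q)$-representation of $SU(q)$, so they must agree under $f$ up to a scalar, and that scalar is pinned to $1$ by testing on the highest-weight pair $|D_{(N,0,\ldots,0)}\rangle\leftrightarrow|N,0,\ldots,0\rangle_b$. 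After this checkpoint, the remainder of the argument is a one-line recombination of earlier results.
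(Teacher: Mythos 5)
Your proposal follows the paper's proof exactly: the paper also invokes \cref{eq:equivalence-lie} to transfer $G$-covariance and \cref{prop:equivalence} to transfer the code parameters, so the two arguments are identical in substance. The only difference is that you spell out the exponentiation/Schur step that promotes the Lie-algebra identification to a group-level intertwiner, a detail the paper leaves implicit.
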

\begin{proof}
    Conservation of $G$-covariance and the distance follow by \eqref{eq:equivalence-lie} and \cref{prop:equivalence}, respectively.
\end{proof}

In \cite{IanEricDihedralIEEE}, the authors constructed PI codes that are $BD_{2b}$-covariant. Using their 
construction, we obtain the Fock state code in the following example.

\begin{example}
    The two-mode Fock state code defined by the basis states
     \begin{align*}
        &\ket{\bfc_0}=\frac{\sqrt{5}}{4}\ket{0,11}_b+\frac{\sqrt{11}}{4}\ket{8,3}_b\\
        &\ket{\bfc_1}=\frac{\sqrt{11}}{4}\ket{3,8}_b+\frac{\sqrt{5}}{4}\ket{11,0}_b
    \end{align*}
    can correct $2$ AD errors and implements all gates from the group $BD_8=\langle X,T\rangle$ using passive linear optics.
\end{example}

The \(\lambda\)-twisted \(t\)-group PI codes of Refs.~\cite{kubischtaUnitaryt,kubischta2025quantum} are \(\lambda\)-covariant.
Moreover, these codes have automatic error protection due to the nature of their particular irreps.
Any \(\lambda\)-twisted unitary \(t\)-group \(G \subset SU(q)\) admits PI codes of distance \(t+1\) inside its \(\lambda\)-irreps~\cite{kubischtaUnitaryt}. 
The distance defined in said papers is the same as the PI distance defined in our work, and the distance of the output Fock state code is the same as that per Prop.~\ref{prop:equivalence}.

The works~\cite{kubischtaUnitaryt,kubischta2025quantum} subsume the earlier work \cite{exoticGates} on covariant PI codes for \(G = 2I\), the binary icosahedral group.
The 7-qubit binary icosahedral PI code converts to the \(N=7\) two-mode Fock state code from Example~\ref{example:N7Fockstate}.
Our \(q>2\) extension yields new covariant codes defined on three or more modes.

\begin{example}
The group \(\Sigma(360\phi) \subset U(3)\) is a \(\chi_4\)-twisted 1-group, meaning that subspaces defined by its irrep \(\chi_4\) are automatically \(\chi_4\)-covariant PI codes with distance two~\cite{kubischtaUnitaryt}. 
Our results can convert these to distance-two Fock state codes that are \(\chi_4\)-covariant with respect to the passive linear-optical representation of \(\Sigma(360\phi)\).
The smallest such PI code is a $(5,3,3,2)$ code, i.e., a five-qutrit code encoding a single logical qutrit
 (a \(((5,3,2))_3\) PI code in standard notation).
This converts to a \((q=3)\)-mode 3D Fock state code in the space of Fock states of total excitation \(N=5\).
\end{example}

Along similar lines, the PI codes of Ref.~\cite{UySUd} can be mapped into Fock state codes.

Our mapping allows us to map codewords and logical operations between any pair of spaces by simply relabeling them, but we can preserve code distances only when going from PI to Fock codes or spin codes.
The reverse spin-to-Fock map is not guaranteed to preserve the code distance for general \(q\).
However, it has been shown for the \(SU(2)\) case by using Dicke states as intermediaries. 

Starting with a basis state of a spin-$J$ system, construct a 2-mode Fock state using the mapping
 \begin{align}\label{eq:mappingSpintoFock}
        \ket{J,m}\overset{\sigma^{-1}}\mapsto \ket{D_{(J+m,J-m)}}\overset{f}\mapsto \ket{J+m,J-m}_b,
    \end{align}
where the two component mappings, $\sigma^{-1}$ and $f$, are defined in 
\cref{eq:eqMapSpinPI} and \cref{eq:mappingf}, respectively.
\begin{proposition}\label{prop: Spin to Fock}
    Let $G$ be a subgroup of $SU(2)$. The composite mapping $\sigma^{-1}\circ f$ defined in \eqref{eq:mappingSpintoFock}
    sends a $G$-covariant $(N,K,2,t+1)$ spin code to a $G$-covariant $(N,K,2,t+1)$ Fock state code.
\end{proposition}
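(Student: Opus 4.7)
The plan is to realize the composite map in \eqref{eq:mappingSpintoFock} as a concatenation of two previously-established maps and check that each factor preserves both the code distance and the $G$-covariance. First I would dispatch the distance claim as an immediate corollary: Lemma~\ref{lemma:ericandian} sends an $(N,K,2,t+1)$ spin code to an $(N,K,2,t+1)$ PI code, and Proposition~\ref{prop:equivalence} (specialized to $q=2$) then sends that PI code to an $(N,K,2,t+1)$ Fock state code. The real work lies entirely in tracking the $G$-action through each of the two arrows.

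For the first arrow $\sigma^{-1}$, I would invoke the standard fact that the spin-$N/2$ irrep of $SU(2)$ is precisely the completely symmetric irrep $\mathrm{Sym}^{N}(\mathbb{C}^{2})$. Under the identification $\ket{J,m}\leftrightarrow\ket{D_{(J+m,J-m)}}$ with $J=N/2$, the spin-$N/2$ action of any $g\in SU(2)$ coincides with the restriction of the transversal qubit action $g^{\otimes N}$ to the Dicke subspace. On the level of Lie algebras this is exactly the $q=2$ specialization of \eqref{eq:equivalence-lie}, after identifying the angular-momentum operators $J_x,J_y,J_z$ with the generators of $\mathfrak{su}(2)$ in the fundamental irrep. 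Consequently any spin codespace that transforms as $\lambda(g)$ under the spin-$N/2$ representation of $G$ becomes a PI codespace that transforms as $\lambda(g)$ under the transversal $g^{\otimes N}$ action, and is therefore a $G$-covariant PI code.

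The second arrow $f$ is handled directly by the proposition immediately preceding the one being proved, which is exactly the assertion that $f$ turns a $G$-covariant PI code into a $G$-covariant Fock state code via the passive linear-optical representation $D(g)$. Chaining this together with the preceding paragraph then gives the proposition.

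The hard part, and indeed the reason for the $q=2$ restriction, will be the irrep-identification step in the first arrow. For $SU(2)$, every Hilbert-space dimension pins down a unique irrep, so the spin-$N/2$ system is automatically $\mathrm{Sym}^{N}(\mathbb{C}^{2})$ and the matching with $g^{\otimes N}$ is canonical. For $SU(q)$ with $q>2$, a Hilbert space of the same dimension $\binom{N+q-1}{q-1}$ may carry several inequivalent irreps, and only the completely symmetric one is compatible with the transversal and passive-linear-optical realizations used here. Any attempt to extend the spin-to-Fock direction beyond $SU(2)$ would have to supply this extra data by hand, which is the main obstacle hidden in the statement.
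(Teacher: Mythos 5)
Your proof is correct and takes essentially the same route as the paper: the distance claim is dispatched by composing Lemma~\ref{lemma:ericandian} (spin $\to$ PI, only available for $q=2$) with Proposition~\ref{prop:equivalence} (PI $\to$ Fock), and the covariance claim by tracking the $G$-action through the intertwiner identifications. The paper's own proof invokes the Jordan--Schwinger map directly for covariance rather than factoring explicitly through the Dicke labels, but since the spin-label representation is \emph{defined} via the JS map from the Fock space and \eqref{eq:equivalence-lie} then equates the Fock and Dicke realizations, your PI detour is mathematically the same step spelled out.

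Your closing diagnosis of why the statement is confined to $q=2$, however, misidentifies the obstruction. Within the paper's framework there is no irrep ambiguity: for every $q$, the spin code space is defined from the outset as the completely symmetric irrep $\text{Sym}^N(\mathbb{C}^q)$ (the basis in \eqref{eq:DefSpinCode} is indexed by $\cS_{q,N}$, and Sec.~\ref{subsec:js} identifies this with the JS image of the constant-excitation Fock space). So the ``irrep identification'' step of your first arrow is canonical for all $q$; nothing needs to be supplied by hand. What genuinely fails to generalize is the \emph{distance-preservation direction}. Theorem~\ref{Theo:SpinMainTheorem} shows (C1)--(C4) $\Rightarrow$ spin distance $t+1$, but the converse --- that a spin code of distance $t+1$ yields a PI code of distance $t+1$ --- is exactly what Lemma~\ref{lemma:ericandian} (the Dicke bootstrap of~\cite{IanEricDihedralIEEE}) supplies, and it is proved only for $q=2$. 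The paper says this explicitly just before the proposition: ``The reverse spin-to-Fock map is not guaranteed to preserve the code distance for general $q$. However, it has been shown for the $SU(2)$ case by using Dicke states as intermediaries.'' The covariance half of your argument goes through for any $q$; it is the isometry half, which you correctly describe as an ``immediate corollary,'' that is actually the limiting ingredient.
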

\begin{proof}
Jordan-Schwinger map sends $G$-covariant spin codes to $G$-covariant Fock state codes. Furthermore, 
\cref{lemma:ericandian} and \cref{prop:equivalence} used in succession establish the isometry claim in the statement.
\end{proof}
The authors of \cite{gross,gross2} constructed spin codes that are $G$-covariant for $G=2O$ (the binary octahedral, or Clifford, group), 2T (the binary tetrahedral group), and $2I$ (the binary icosahedral group). 
Using \cref{prop: Spin to Fock}, we can obtain two-mode Fock state codes for which logical unitaries from these subgroups of $SU(2)$ can be implemented using only beam splitters.

For example, in \cite{gross}, Gross constructed a $2O$-covariant spin $J=13/2$ code with spin distance $d_s=3$. Transforming this code into a Fock state code using the mapping in \eqref{eq:mappingSpintoFock}, we obtain a two-mode Fock state code capable of correcting $2$ AD errors. All logical gates from the Clifford group for this code can be implemented using beam splitters. The same paper introduced another $d_s=3$ spin code that is $2I$-covariant. By mapping this code to a Fock state code using \eqref{eq:mappingSpintoFock}, we obtain the code in Example \ref{example:N7Fockstate}.
Equivalently, this code can be mapped to the PI space to yield the \(2I\)-covariant PI code on 7 qubits from Ref.~\cite{exoticGates}.

\subsection{PI, Fock state, and spin codes from Tverberg partitions of \texorpdfstring{$\ell_1$}{ell1-2} codes}

In this section, we list a few examples of quantum codes constructed using partitions of classical $\ell_1$ codes. We note that
each of these examples is common to the three classes of quantum codes, which is supported by relating all the three systems
to a discrete simplex \(\cS_{q,N}\). The specific transformations between the basis elements
are given in \cref{eq:mappingf,eq:smap,eq:eqMapSpinPI}, and the coefficients $\alpha_\un$ are shared between the three expansions of the code states. We will limit our discussion to PI and Fock state codes.

\subsubsection{Fock state codes}
 Fock state codes in higher modes have previously been studied in \cite{chuangADCode,VanLoock,OuyangADCode}. In \cite{chuangADCode}, the authors constructed Fock state codes for small distances using a search algorithm. In the same paper, they introduced several examples of $K$-dimensional $q$-mode Fock state codes. The authors of \cite{OuyangADCode} studied two-dimensional Fock state codes encoded to more than two modes of Fock states. They especially constructed examples of Fock state codes with small distances that are PI. In \cite{VanLoock}, an explicit family of the $K$-dimensional Fock state code with bosonic distance $d_b>t$ is given. \remove{The code they  constructed can be defined by the basis states
 \begin{align}\label{eq:VanLoockCode}
     \ket{\bfc_r}=\frac{1}{\sqrt{K}}\sum_{j=0}^{K-1}\exp(2\pi i rj/K)\ket{t^{K,t+1}_j}^{\otimes (t+1)}.
 \end{align}
 Here
 \begin{align*}
    \ket{t^{K,a}_0}=S_d\ket{a00\ldots 0}_b,\ldots, \ket{t^{K,a}_{K-1}}=S_d\ket{(00\ldots 0a}_b,    
 \end{align*}
 where $S_d$ is a $d$-splitter. Note that the code in \eqref{eq:VanLoockCode} is a $K$-dimensional, $q=K(t+1)$-mode, distance $d_b=t+1$ Fock state code with total excitation $N=(t+1)^2$.}
 Although this construction is efficient in terms of total excitation, it does not include a large number of codes due to the restrictive structure of its construction. Note that for a $K$-dimensional $d_b=t+1$ Fock state code in \cite{VanLoock}, the number of modes is fixed to $q=K(t+1)$. However, starting with an $\ell_1$ code and using the recipe described in Section \ref{sec: l1-PI-Fock}, it is possible to construct a large number of Fock state codes for any number of modes.
Similar Fock state code construction methods are introduced in references \cite{OuyangADCode, chuangADCode}. However, our improved existence bound \eqref{eq: LowerBound} suggests searching for an $\ell_1$ code with a smaller cardinality, which potentially yields more efficient Fock state codes.

To find an explicit code, we need to find a Tverberg partition and solve the linear system in \cref{eq:conditionC4simplified1}. This task
is easy when $K=2$ and studied in \cite{OuyangADCode}. As above, start with an $\ell_1$ code $B$ of total norm $N$ with $d_1(B)\ge t+1$ and consider the system of equations
   \begin{align*}
        \sum_{\uh\in\cC} \frac{\binom{N-t}{\uh-\ue}}{\binom{N}{\uh}}y_\uh=0\quad \text{for all $\ue\in\cS_{q,t}$}.
    \end{align*}
Solving it for $(y_\uh)$, we define 
         \begin{equation}\label{eq:conditionC4K2}
        x_\uh=
     \begin{cases}
            y_\uh,\quad \text{if $y_\uh>0$}\\
            -y_\uh,\quad \text{if $y_\uh<0$}.
        \end{cases}
    \end{equation}
Using \cref{prop:bound}, we can construct Fock state codes as shown in the following examples.
\begin{example}\label{example:N3Code}
    Let us take $N=q=3$ and $t=1$. Consider an $\ell_1$ code given by
    \begin{align*}
        B=\{ (3,0,0),(0,3,0),(0,0,3),(1,1,1)\}
    \end{align*}
    Observe that $|B|=4$ and $d_1(B)=2$, which matches the lower bound \eqref{eq: LowerBound} and implies that
    there exists a bosonic code with total excitation $N=3$ and distance $d_b=2$. Let us construct it explicitly.
    The matrix of coefficients of the system for $(y_\uh)$ 
    \begin{align*}
        \begin{bmatrix}
            1&0&0&\frac{1}{3}\\
            0&1&0&\frac{1}{3}\\
            0&0&1&\frac{1}{3}
        \end{bmatrix}
    \end{align*}
 yields a solution $(1/3,1/3,1/3,-1)$. This in turn produces a Fock state code with the basis
    \begin{align*}
        &\ket{\bfc_0} = \sqrt{\frac{1}{3}}(\ket{3,0,0}_b + \ket{0,3,0}_b + \ket{0,0,3}_b) \\  
        &\ket{\bfc_1}=\ket{1,1,1}_b
    \end{align*}
that has bosonic distance $d_b=2$ and corrects a single AD error. Observe that this example
recovers the Wasilewski-Banaczek code \cite{Wasilewski}.
\end{example}

\begin{example}
    \begin{figure}[ht]
    \includegraphics[width=.9\linewidth]{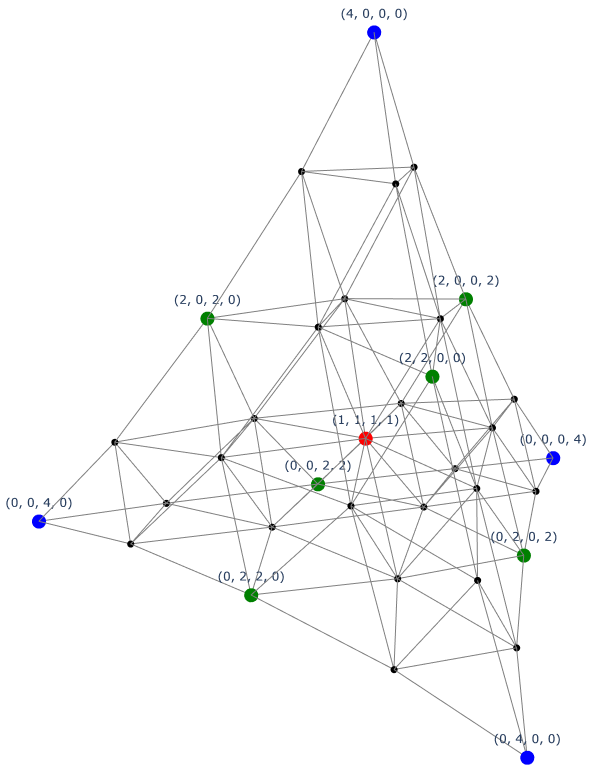}
    \caption{The simplex $\cS_{4,4}$. The colored set of 11 vertices represents a $3$-dimensional code $B$ with $\ell_1$ distance 2 in the simplex. Each color shows a subset of code points $\bfx$ of fixed composition $\comp(\bfx)$. 
There are 3 such subsets, $B_0,B_1,B_2$, which form a Tverberg partition ${\mathscr I}_{11,3}$
of the codeword set with respect to the set of points/equations of the form \eqref{eq: K system} with coefficients given in \eqref{eq: coefficients}.} \label{fig: S44}
    \end{figure}
Let us consider a more complicated case of larger $K$. In this example we construct an $(N=4,K=3,q=4,d_b=2)$ Fock state code. According to our recipe above, we begin with an
$\ell_1$ code $B$ of size 11 and distance 2, shown in \cref{fig: S44} as a colored subset of $\cS_{q,N}$. Note that $|B|=11$, hence obeying the lower bound in \eqref{eq: LowerBound}. Consider the following partition of the code $B$:
\begin{align*}
    B_0&=\{4000,0400,0040,0004\}\\
    B_1&=\{2200,2020,2002,0220,0202,0022\}\\
    B_3&=\{1111\}.
\end{align*}
The code corrects any single $\ell_1$ error, i.e., any vector from the set $1000,0100,0010,0001$. 
To find the coefficients of the code basis, we solve the equations
  $$
  \sum_{\uc\in C_0}\frac{\binom{N-t}{\uc-\ue}}{\binom N\uc}x_{\uc}=\sum_{\uc \in C_1}
  \frac{\binom{N-t}{\uc-\ue}}{\binom N\uc}x_{\uc}=
  \sum_{\uc \in C_2} \frac{\binom{N-t}{\uc-\ue}}{\binom N\uc}x_{\uc}
  $$
for all $\ue\in \cS_{4,1}$. This yields
   $$
   x_{\uc}=\begin{cases}
       1/4&\text{if }\uc\in B_0\\
       1/6&\text{if }\uc\in B_1\\
       1&\text{if }\uc\in B_2.
   \end{cases}
  $$  
\end{example}
The basis vectors of the target Fock state code take the form
\begin{align*}
    \ket{\bfc_0}&=\frac12(\ket{4,0,0,0}_b+\ket{0,4,0,0}_b+\ket{0,0,4,0}_b+\ket{0,0,0,4}_b\\
    \ket{\bfc_1}&=\frac1{\sqrt 6}(\ket{2,2,0,0}_b+\ket{2,0,2,0}_b+\ket{2,0,0,2}_b+\ket{0,2,2,0}_b\\
    &\hspace*{.2in}+\ket{0,2,0,2}_b+\ket{0,0,2,2}_b)\\
    \ket{\bfc_2}&=\ket{1,1,1,1}_b.
\end{align*}
Generally speaking, finding a Tverberg partition for a given set of points has polynomial complexity if the dimension $m$ is fixed and is difficult otherwise \cite{barany2018tverberg}. However, in small examples such as this one, it can be found with few complications.

\subsubsection{Qudit PI codes}

 We can also construct new qudit PI codes using Tverberg partitions of $\ell_1$ codes as in the previous section that dealt with Fock state codes. The following example is constructed using the same Tverberg 
 partition as in \cref{example:N3Code}.
\begin{example}\label{example:3qutrits}
    The PI code defined by the basis
    \begin{align*}
        &\ket{\bfc_0} = \sqrt{\frac{1}{3}}\left(\ket{000}+\ket{111}+\ket{222}\right)\\
        &\ket{\bfc_1}=\sqrt{\frac{1}{6}}\left(\ket{012}+\ket{021}+\ket{102}+\ket{120}+\ket{201}+\ket{210}\right)
    \end{align*}
    has alphabet size $q=3$, length $N=3$, and distance $d=2$. Note that it is shorter than all previously known PI codes with distance $d=2$.
Observe that this code is the \([[3,1,2]]_3\) three-qutrit stabilizer code \cite{cleve1999share} projected into the permutation-invariant subspace of three qutrits, i.e., 
the trivial irrep of \(S_3\) acting by permutations on three factors of \(\mathbb{C}^3\).
\end{example}

\begin{example}\label{example:PI-N6}
    Let us take $N=q=6$ and $t=2$. Let $B_1=(1^6), B_2=\pi{(3^20^4)}, B_3=\pi(60^5)$,
    where $\pi(\cdot)$ denotes the set of tuples formed of all permutations of the argument. Consider an $\ell_1$ code 
$
        B=\bigcup_{i=1}^3 B_i.
$
The code $B$ has size $22$, which meets the bound \eqref{eq: LowerBound} for $K=2$ with equality. Therefore, there must be a qudit PI code with distance $d=3$. Following the same steps as in \cref{example:N3Code}, we obtain the PI code defined by the basis
    \begin{align*}
        &\ket{\bfc_0}=\sqrt{\frac{1}{15}}\sum_{\un \in B_3}\ket{D_\un} + \sqrt{\frac{3}{5}}\ket{D_{(111111)}},\\
        &\ket{\bfc_1}=\sqrt{\frac{1}{15}}\sum_{\un \in B_2}\ket{D_\un}.
    \end{align*}
\end{example}
Previously, the shortest PI code that can correct a single error has alphabet size $q=2$ and length $N=7$. Observe that the code over the alphabet of size $q=6$ in the example above has length $N=6$, which is less than the best known PI code with similar error correction properties. Furthermore, mapping the PI code in Example \ref{example:PI-N6} to a Fock state code using \eqref{eq:mappingf}, we recover the Fock state code in Example 2 in \cite{OuyangADCode}.

\subsubsection{Existence of efficient \texorpdfstring{$SU(q)$}{} codes}
We conclude this section by formulating general existence conditions for the three types of codes.
In addition to yielding more flexibility, codes obtained by way of $\ell_1$ codes may also produce more efficient codes in terms of the length (PI codes), total excitation (Fock state code), or total spin (spin codes). 
\begin{theorem}\label{theorem:existence}  Let $\cQ(N,K,N,d)$ be one of {\rm \{PI code, spin code, Fock state code\}}.
     Let $K,t\geq 2$ be integers. Then there exists an explicitly constructible $K$-dimensional Fock state code $\cQ$ with $\{$length, total excitation, total spin$\}$ $N=(K-1)t(t+1)$, and distance $d=t+1$. 
\end{theorem}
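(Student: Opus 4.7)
The plan is a direct composition of the two main existence results already established in the paper, with the parameters chosen so that the required size bound is met with equality. The intended pipeline is
\[
\text{Sidon-free explicit $\ell_1$ code}\ \xrightarrow{\text{Thm.~\ref{theorem:existence l1}}}\ \text{$\ell_1$ code in }\cS_{N,N}\ \xrightarrow{\text{Thm.~\ref{prop:bound}}}\ \cQ(N,K,N,t+1),
\]
and then the equivalences in \cref{prop:equivalence,prop:equivalenceSpinPI} transport the code across the three families.

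First, apply \cref{theorem:existence l1} with the given values $K,t\geq 2$ and set $N=q=(K-1)t(t+1)$. This yields an explicitly constructible $\ell_1$ code
\[
\cC_{\ell_1} \;=\; (t+1)\cS_{q,(K-1)t} \;\cup\; \{\mathbf{1}^q\}\;\subset\;\cS_{N,N}
\]
with $d_1(\cC_{\ell_1})\geq t+1$ and
\[
|\cC_{\ell_1}|\;\geq\;(K-1)\,|\cS_{N,t}|+1.
\]

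Second, verify the hypothesis of \cref{prop:bound}. Since $q=N$, we have
\[
\binom{N+t-1}{q-1}=\binom{N+t-1}{N-1}=|\cS_{N,t}|,
\]
so the lower bound $(K-1)\binom{N+t-1}{q-1}+1$ demanded in \eqref{eq: LowerBound} coincides exactly with the size estimate furnished by \cref{theorem:existence l1}. Hence $\cC_{\ell_1}$ satisfies the hypothesis of \cref{prop:bound}.

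Finally, invoke \cref{prop:bound} to obtain a $\cQ(N,K,q=N,t+1)$ PI code; the Fock state and spin versions follow by the identifications $f$ of \eqref{eq:mappingf} and $\sigma$ of \eqref{eq:eqMapSpinPI}, together with \cref{prop:equivalence,prop:equivalenceSpinPI}. Explicitness is preserved throughout: $\cC_{\ell_1}$ is given in closed form, a Tverberg partition of its $(K-1)|\cS_{N,t}|+1$ points can be exhibited (e.g., grouping by the congruence class used in \cref{theorem:existence l1}), and the basis coefficients $\mu_{\uh}=\sqrt{x_{\uh}}$ in \eqref{eq: PI Tverberg} are the nonnegative solution to the linear system \eqref{eq:conditionC4simplified1} whose solvability is guaranteed by the Tverberg hyperplane argument at the end of the proof of \cref{prop:bound}. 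There is no real obstacle beyond the bookkeeping of matching $\binom{N+t-1}{N-1}$ with $|\cS_{N,t}|$; the only mildly delicate point is that the particular $\ell_1$ code produced by \cref{theorem:existence l1} meets the size bound of \cref{prop:bound} tightly, so one cannot weaken the ambient dimension $q=N=(K-1)t(t+1)$ without redoing the count.
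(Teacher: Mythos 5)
Your proof is correct and takes essentially the same approach as the paper's own one-line proof, which simply invokes \cref{prop:bound} together with \cref{theorem:existence l1}; you have usefully spelled out the bookkeeping, in particular the key observation that with $q=N$ one has $\binom{N+t-1}{q-1}=\binom{N+t-1}{N-1}=|\cS_{N,t}|$, so the size delivered by \cref{theorem:existence l1} matches the hypothesis of \cref{prop:bound} exactly. One minor slip: the parenthetical suggestion to build the Tverberg partition ``by the congruence class used in \cref{theorem:existence l1}'' is off, since that theorem's construction is the scaled simplex $(t+1)\cS_{q,(K-1)t}$ together with $\mathbf{1}^q$, not a congruence-class partition (which appears instead in \cref{theorem:Kovacevic}); this aside does not affect the validity of the argument.
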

\begin{proof}
    By combining \cref{prop:bound} and \cref{theorem:existence l1}. \qedhere
\end{proof}
Note that when $K=2$, \cref{theorem:existence} states that there exists Fock state codes with distance $d_b=t+1$ and total excitation $N= t(t+1)$. The best previously known codes  \cite{VanLoock} have total excitation $N=(t+1)^2$. Hence, our lower bound for the total excitation is $t+1$ less than the previously best known bound for a code with distance $d_b=t+1$.

In \cite{OuyangADCode}, the authors constructed explicit $2$-dimensional codes which saturates the bound in \cref{theorem:existence} for $t=2,3,4,5$. Here we show that there exists a two-dimensional code with $N=t(t+1)$ for any $t\geq 2$. Furthermore, we extend the result from $2$-dimensional codes to $K$-dimensional codes in general.

Previously best known lower bound for the length of a qudit $K$-dimensional PI code with distance $d=t+1$ is $N\geq (K-1)(t+1)^2$ \cite{OuyangADCode}. Therefore, the result of \cref{theorem:existence} improves
the best previously known result for the parameters of qudit PI codes.

\section{Concluding remarks}

We unify and extend a broad class of codes compatible with nuclear manifolds of atomic and molecular systems, the permutation-invariant (PI) space of multiple qudits, and the constant-excitation Fock subspace of multiple bosonic modes.
This unification is possible because basis states for all three state spaces are in one-to-one correspondence with points in the discrete simplex, and noise models on all three state spaces are related by the fact that all three spaces house irreps of the Lie group \(SU(q)\) for some \(q\).
This unification extends previous results for \(q=2\), and we construct new codes that are simultaneously applicable to all three state spaces using classical codes on the discrete simplex and partitioning results from convex geometry. 
In particular, this yields examples
of code sequences with improved asymptotics of the number of correctable errors in all the three spaces.

We conclude by discussing further extensions.

{\bf Approximate codes: }
Recall our use of Tverberg partitions for proving the error correction conditions for Fock state codes and PI codes in \cref{prop:bound} (also \cref{Theo:BosonMainTheoremAD}), where the common point in the convex hulls of the point
subsets yielded a set of basis coefficients that satisfy the error correction requirements in \hyperlink{C3}{(C3)},
\hyperlink{(C4)}{(C4)}. For such a point to exist, we need to limit the dimension $m$ of the real space
that hosts the Tverberg points, and this constrains the scaling of the code parameters (rephrasing, given $m$ and $K$, we need to have sufficiently many points). Since the inequality in the
Tverberg theorem is known to be tight, removing the assumption on $m$ is not possible without changing the statement.
It turns out that if, instead of seeking that the convex hulls have a common point, we require only that they all intersect
a fixed ball of small radius, then the dimension can be dropped from the statement of the theorem \cite{adiprasito2020theorems}. For the coding problem at hand,
this would result in some version of approximate error correction, addressing the number of errors larger than we are able to
guarantee with exact recovery. There are further conditions to be met to implement this plan, and we leave this as an interesting future direction. The general idea of approximate error correction has been discussed in earlier works \cite{beny2010general,brandao2019quantum,gschwendtner2019quantum} as well as in \cite{movassagh2024constructing},
however, the approach presented here suggests a concrete direction for potentially constructing such codes.

{\bf Code symmetrization: }
Conventional multi-qudit block codes are generally not PI, but it was noticed early on that a well-known 9-qubit PI code~\cite{ruskaiExchange} can be obtained from the Shor 9-qubit code by projecting the latter into the PI subspace of 9 qubits (see \cref{example:9qubit}).
Similarly, we identify a PI/Fock/spin code that is a projection of the \([[3,1,2]]_3\) three-qutrit stabilizer code onto the PI subspace (see. \cref{example:3qutrits}).
Projecting other established non-PI stabilizer codes will likely yield interesting and unique PI codes, which in turn may be convertible into spin or Fock state codes via our mapping.
It would be interesting to determine when such projections and mappings preserve the distances of the original stabilizer codes.

\begin{acknowledgements}

V.V.A.~acknowledges Andrea Morello for stimulating discussions. The research of A.A. and A.B. was partially supported by NSF grant CCF-2330909. V.V.A.~acknowledges NSF grant OMA2120757 (QLCI).
\end{acknowledgements}

\bibliography{PI}

\begin{thebibliography}{73}%
\makeatletter
\providecommand \@ifxundefined [1]{%
 \@ifx{#1\undefined}
}%
\providecommand \@ifnum [1]{%
 \ifnum #1\expandafter \@firstoftwo
 \else \expandafter \@secondoftwo
 \fi
}%
\providecommand \@ifx [1]{%
 \ifx #1\expandafter \@firstoftwo
 \else \expandafter \@secondoftwo
 \fi
}%
\providecommand \natexlab [1]{#1}%
\providecommand \enquote  [1]{``#1''}%
\providecommand \bibnamefont  [1]{#1}%
\providecommand \bibfnamefont [1]{#1}%
\providecommand \citenamefont [1]{#1}%
\providecommand \href@noop [0]{\@secondoftwo}%
\providecommand \href [0]{\begingroup \@sanitize@url \@href}%
\providecommand \@href[1]{\@@startlink{#1}\@@href}%
\providecommand \@@href[1]{\endgroup#1\@@endlink}%
\providecommand \@sanitize@url [0]{\catcode `\\12\catcode `\$12\catcode
  `\&12\catcode `\#12\catcode `\^12\catcode `\_12\catcode `\%12\relax}%
\providecommand \@@startlink[1]{}%
\providecommand \@@endlink[0]{}%
\providecommand \url  [0]{\begingroup\@sanitize@url \@url }%
\providecommand \@url [1]{\endgroup\@href {#1}{\urlprefix }}%
\providecommand \urlprefix  [0]{URL }%
\providecommand \Eprint [0]{\href }%
\providecommand \doibase [0]{https://doi.org/}%
\providecommand \selectlanguage [0]{\@gobble}%
\providecommand \bibinfo  [0]{\@secondoftwo}%
\providecommand \bibfield  [0]{\@secondoftwo}%
\providecommand \translation [1]{[#1]}%
\providecommand \BibitemOpen [0]{}%
\providecommand \bibitemStop [0]{}%
\providecommand \bibitemNoStop [0]{.\EOS\space}%
\providecommand \EOS [0]{\spacefactor3000\relax}%
\providecommand \BibitemShut  [1]{\csname bibitem#1\endcsname}%
\let\auto@bib@innerbib\@empty
\bibitem [{\citenamefont {Harrow}(2013)}]{harrow2013church}%
  \BibitemOpen
  \bibfield  {author} {\bibinfo {author} {\bibfnamefont {A.~W.}\ \bibnamefont
  {Harrow}},\ }\bibfield  {title} {\bibinfo {title} {The church of the
  symmetric subspace},\ }\href@noop {} {\bibfield  {journal} {\bibinfo
  {journal} {arXiv preprint arXiv:1308.6595}\ } (\bibinfo {year}
  {2013})}\BibitemShut {NoStop}%
\bibitem [{\citenamefont {Ruskai}(2000)}]{ruskaiExchange}%
  \BibitemOpen
  \bibfield  {author} {\bibinfo {author} {\bibfnamefont {M.~B.}\ \bibnamefont
  {Ruskai}},\ }\href@noop {} {\bibinfo {title} {Pauli exchange and quantum
  error correction}} (\bibinfo {year} {2000}),\ \bibinfo {note} {arXiv
  quant-ph/0006008}\BibitemShut {NoStop}%
\bibitem [{\citenamefont {Pollatsek}\ and\ \citenamefont
  {Ruskai}(2004)}]{ruskai-polatsek}%
  \BibitemOpen
  \bibfield  {author} {\bibinfo {author} {\bibfnamefont {H.}~\bibnamefont
  {Pollatsek}}\ and\ \bibinfo {author} {\bibfnamefont {M.~B.}\ \bibnamefont
  {Ruskai}},\ }\bibfield  {title} {\bibinfo {title} {Permutationally invariant
  codes for quantum error correction},\ }\href
  {https://doi.org/https://doi.org/10.1016/j.laa.2004.06.014} {\bibfield
  {journal} {\bibinfo  {journal} {Linear Alg. Appl.}\ }\textbf {\bibinfo
  {volume} {392}},\ \bibinfo {pages} {255} (\bibinfo {year}
  {2004})}\BibitemShut {NoStop}%
\bibitem [{\citenamefont {Chaudhury}\ \emph {et~al.}(2007)\citenamefont
  {Chaudhury}, \citenamefont {Merkel}, \citenamefont {Herr}, \citenamefont
  {Silberfarb}, \citenamefont {Deutsch},\ and\ \citenamefont
  {Jessen}}]{chaudhury2007quantum}%
  \BibitemOpen
  \bibfield  {author} {\bibinfo {author} {\bibfnamefont {S.}~\bibnamefont
  {Chaudhury}}, \bibinfo {author} {\bibfnamefont {S.}~\bibnamefont {Merkel}},
  \bibinfo {author} {\bibfnamefont {T.}~\bibnamefont {Herr}}, \bibinfo {author}
  {\bibfnamefont {A.}~\bibnamefont {Silberfarb}}, \bibinfo {author}
  {\bibfnamefont {I.~H.}\ \bibnamefont {Deutsch}},\ and\ \bibinfo {author}
  {\bibfnamefont {P.~S.}\ \bibnamefont {Jessen}},\ }\bibfield  {title}
  {\bibinfo {title} {Quantum control of the hyperfine spin of a cs atom
  ensemble},\ }\href@noop {} {\bibfield  {journal} {\bibinfo  {journal}
  {Physical Review Letters}\ }\textbf {\bibinfo {volume} {99}},\ \bibinfo
  {pages} {163002} (\bibinfo {year} {2007})}\BibitemShut {NoStop}%
\bibitem [{\citenamefont {Haas}\ \emph {et~al.}(2014)\citenamefont {Haas},
  \citenamefont {Volz}, \citenamefont {Gehr}, \citenamefont {Reichel},\ and\
  \citenamefont {Est{\`e}ve}}]{haas2014entangled}%
  \BibitemOpen
  \bibfield  {author} {\bibinfo {author} {\bibfnamefont {F.}~\bibnamefont
  {Haas}}, \bibinfo {author} {\bibfnamefont {J.}~\bibnamefont {Volz}}, \bibinfo
  {author} {\bibfnamefont {R.}~\bibnamefont {Gehr}}, \bibinfo {author}
  {\bibfnamefont {J.}~\bibnamefont {Reichel}},\ and\ \bibinfo {author}
  {\bibfnamefont {J.}~\bibnamefont {Est{\`e}ve}},\ }\bibfield  {title}
  {\bibinfo {title} {Entangled states of more than 40 atoms in an optical fiber
  cavity},\ }\href@noop {} {\bibfield  {journal} {\bibinfo  {journal}
  {Science}\ }\textbf {\bibinfo {volume} {344}},\ \bibinfo {pages} {180}
  (\bibinfo {year} {2014})}\BibitemShut {NoStop}%
\bibitem [{\citenamefont {Strobel}\ \emph {et~al.}(2014)\citenamefont
  {Strobel}, \citenamefont {Muessel}, \citenamefont {Linnemann}, \citenamefont
  {Zibold}, \citenamefont {Hume}, \citenamefont {Pezz{\`e}}, \citenamefont
  {Smerzi},\ and\ \citenamefont {Oberthaler}}]{strobel2014fisher}%
  \BibitemOpen
  \bibfield  {author} {\bibinfo {author} {\bibfnamefont {H.}~\bibnamefont
  {Strobel}}, \bibinfo {author} {\bibfnamefont {W.}~\bibnamefont {Muessel}},
  \bibinfo {author} {\bibfnamefont {D.}~\bibnamefont {Linnemann}}, \bibinfo
  {author} {\bibfnamefont {T.}~\bibnamefont {Zibold}}, \bibinfo {author}
  {\bibfnamefont {D.~B.}\ \bibnamefont {Hume}}, \bibinfo {author}
  {\bibfnamefont {L.}~\bibnamefont {Pezz{\`e}}}, \bibinfo {author}
  {\bibfnamefont {A.}~\bibnamefont {Smerzi}},\ and\ \bibinfo {author}
  {\bibfnamefont {M.~K.}\ \bibnamefont {Oberthaler}},\ }\bibfield  {title}
  {\bibinfo {title} {Fisher information and entanglement of non-gaussian spin
  states},\ }\href@noop {} {\bibfield  {journal} {\bibinfo  {journal}
  {Science}\ }\textbf {\bibinfo {volume} {345}},\ \bibinfo {pages} {424}
  (\bibinfo {year} {2014})}\BibitemShut {NoStop}%
\bibitem [{\citenamefont {L{\"u}cke}\ \emph {et~al.}(2014)\citenamefont
  {L{\"u}cke}, \citenamefont {Peise}, \citenamefont {Vitagliano}, \citenamefont
  {Arlt}, \citenamefont {Santos}, \citenamefont {T{\'o}th},\ and\ \citenamefont
  {Klempt}}]{lucke2014detecting}%
  \BibitemOpen
  \bibfield  {author} {\bibinfo {author} {\bibfnamefont {B.}~\bibnamefont
  {L{\"u}cke}}, \bibinfo {author} {\bibfnamefont {J.}~\bibnamefont {Peise}},
  \bibinfo {author} {\bibfnamefont {G.}~\bibnamefont {Vitagliano}}, \bibinfo
  {author} {\bibfnamefont {J.}~\bibnamefont {Arlt}}, \bibinfo {author}
  {\bibfnamefont {L.}~\bibnamefont {Santos}}, \bibinfo {author} {\bibfnamefont
  {G.}~\bibnamefont {T{\'o}th}},\ and\ \bibinfo {author} {\bibfnamefont
  {C.}~\bibnamefont {Klempt}},\ }\bibfield  {title} {\bibinfo {title}
  {Detecting multiparticle entanglement of dicke states},\ }\href@noop {}
  {\bibfield  {journal} {\bibinfo  {journal} {Physical review letters}\
  }\textbf {\bibinfo {volume} {112}},\ \bibinfo {pages} {155304} (\bibinfo
  {year} {2014})}\BibitemShut {NoStop}%
\bibitem [{\citenamefont {McConnell}\ \emph {et~al.}(2015)\citenamefont
  {McConnell}, \citenamefont {Zhang}, \citenamefont {Hu}, \citenamefont
  {{\'C}uk},\ and\ \citenamefont {Vuleti{\'c}}}]{mcconnell2015entanglement}%
  \BibitemOpen
  \bibfield  {author} {\bibinfo {author} {\bibfnamefont {R.}~\bibnamefont
  {McConnell}}, \bibinfo {author} {\bibfnamefont {H.}~\bibnamefont {Zhang}},
  \bibinfo {author} {\bibfnamefont {J.}~\bibnamefont {Hu}}, \bibinfo {author}
  {\bibfnamefont {S.}~\bibnamefont {{\'C}uk}},\ and\ \bibinfo {author}
  {\bibfnamefont {V.}~\bibnamefont {Vuleti{\'c}}},\ }\bibfield  {title}
  {\bibinfo {title} {Entanglement with negative wigner function of almost 3,000
  atoms heralded by one photon},\ }\href@noop {} {\bibfield  {journal}
  {\bibinfo  {journal} {Nature}\ }\textbf {\bibinfo {volume} {519}},\ \bibinfo
  {pages} {439} (\bibinfo {year} {2015})}\BibitemShut {NoStop}%
\bibitem [{\citenamefont {Pezze}\ \emph {et~al.}(2018)\citenamefont {Pezze},
  \citenamefont {Smerzi}, \citenamefont {Oberthaler}, \citenamefont {Schmied},\
  and\ \citenamefont {Treutlein}}]{pezze2018quantum}%
  \BibitemOpen
  \bibfield  {author} {\bibinfo {author} {\bibfnamefont {L.}~\bibnamefont
  {Pezze}}, \bibinfo {author} {\bibfnamefont {A.}~\bibnamefont {Smerzi}},
  \bibinfo {author} {\bibfnamefont {M.~K.}\ \bibnamefont {Oberthaler}},
  \bibinfo {author} {\bibfnamefont {R.}~\bibnamefont {Schmied}},\ and\ \bibinfo
  {author} {\bibfnamefont {P.}~\bibnamefont {Treutlein}},\ }\bibfield  {title}
  {\bibinfo {title} {Quantum metrology with nonclassical states of atomic
  ensembles},\ }\href@noop {} {\bibfield  {journal} {\bibinfo  {journal}
  {Reviews of Modern Physics}\ }\textbf {\bibinfo {volume} {90}},\ \bibinfo
  {pages} {035005} (\bibinfo {year} {2018})}\BibitemShut {NoStop}%
\bibitem [{\citenamefont {Gross}(2021)}]{gross}%
  \BibitemOpen
  \bibfield  {author} {\bibinfo {author} {\bibfnamefont {J.~A.}\ \bibnamefont
  {Gross}},\ }\bibfield  {title} {\bibinfo {title} {Designing codes around
  interactions: The case of a spin},\ }\href
  {https://doi.org/10.1103/PhysRevLett.127.010504} {\bibfield  {journal}
  {\bibinfo  {journal} {Phys. Rev. Lett.}\ }\textbf {\bibinfo {volume} {127}},\
  \bibinfo {pages} {010504} (\bibinfo {year} {2021})}\BibitemShut {NoStop}%
\bibitem [{\citenamefont {Asaad}\ \emph {et~al.}(2020)\citenamefont {Asaad},
  \citenamefont {Mourik}, \citenamefont {Joecker}, \citenamefont {Johnson},
  \citenamefont {Baczewski}, \citenamefont {Firgau}, \citenamefont
  {M{a}dzik}, \citenamefont {Schmitt}, \citenamefont {Pla}, \citenamefont
  {Hudson} \emph {et~al.}}]{asaad2020coherent}%
  \BibitemOpen
  \bibfield  {author} {\bibinfo {author} {\bibfnamefont {S.}~\bibnamefont
  {Asaad}}, \bibinfo {author} {\bibfnamefont {V.}~\bibnamefont {Mourik}},
  \bibinfo {author} {\bibfnamefont {B.}~\bibnamefont {Joecker}}, \bibinfo
  {author} {\bibfnamefont {M.~A.}\ \bibnamefont {Johnson}}, \bibinfo {author}
  {\bibfnamefont {A.~D.}\ \bibnamefont {Baczewski}}, \bibinfo {author}
  {\bibfnamefont {H.~R.}\ \bibnamefont {Firgau}}, \bibinfo {author}
  {\bibfnamefont {M.~T.}\ \bibnamefont {M{a}dzik}}, \bibinfo {author}
  {\bibfnamefont {V.}~\bibnamefont {Schmitt}}, \bibinfo {author} {\bibfnamefont
  {J.~J.}\ \bibnamefont {Pla}}, \bibinfo {author} {\bibfnamefont {F.~E.}\
  \bibnamefont {Hudson}}, \emph {et~al.},\ }\bibfield  {title} {\bibinfo
  {title} {Coherent electrical control of a single high-spin nucleus in
  silicon},\ }\href@noop {} {\bibfield  {journal} {\bibinfo  {journal}
  {Nature}\ }\textbf {\bibinfo {volume} {579}},\ \bibinfo {pages} {205}
  (\bibinfo {year} {2020})}\BibitemShut {NoStop}%
\bibitem [{\citenamefont {Fern{\'a}ndez~de Fuentes}\ \emph
  {et~al.}(2024)\citenamefont {Fern{\'a}ndez~de Fuentes}, \citenamefont
  {Botzem}, \citenamefont {Johnson}, \citenamefont {Vaartjes}, \citenamefont
  {Asaad}, \citenamefont {Mourik}, \citenamefont {Hudson}, \citenamefont
  {Itoh}, \citenamefont {Johnson}, \citenamefont {Jakob} \emph
  {et~al.}}]{fernandez2024navigating}%
  \BibitemOpen
  \bibfield  {author} {\bibinfo {author} {\bibfnamefont {I.}~\bibnamefont
  {Fern{\'a}ndez~de Fuentes}}, \bibinfo {author} {\bibfnamefont
  {T.}~\bibnamefont {Botzem}}, \bibinfo {author} {\bibfnamefont {M.~A.}\
  \bibnamefont {Johnson}}, \bibinfo {author} {\bibfnamefont {A.}~\bibnamefont
  {Vaartjes}}, \bibinfo {author} {\bibfnamefont {S.}~\bibnamefont {Asaad}},
  \bibinfo {author} {\bibfnamefont {V.}~\bibnamefont {Mourik}}, \bibinfo
  {author} {\bibfnamefont {F.~E.}\ \bibnamefont {Hudson}}, \bibinfo {author}
  {\bibfnamefont {K.~M.}\ \bibnamefont {Itoh}}, \bibinfo {author}
  {\bibfnamefont {B.~C.}\ \bibnamefont {Johnson}}, \bibinfo {author}
  {\bibfnamefont {A.~M.}\ \bibnamefont {Jakob}}, \emph {et~al.},\ }\bibfield
  {title} {\bibinfo {title} {Navigating the 16-dimensional hilbert space of a
  high-spin donor qudit with electric and magnetic fields},\ }\href@noop {}
  {\bibfield  {journal} {\bibinfo  {journal} {Nature communications}\ }\textbf
  {\bibinfo {volume} {15}},\ \bibinfo {pages} {1380} (\bibinfo {year}
  {2024})}\BibitemShut {NoStop}%
\bibitem [{\citenamefont {Low}\ \emph {et~al.}(2025)\citenamefont {Low},
  \citenamefont {White},\ and\ \citenamefont {Senko}}]{low2025control}%
  \BibitemOpen
  \bibfield  {author} {\bibinfo {author} {\bibfnamefont {P.~J.}\ \bibnamefont
  {Low}}, \bibinfo {author} {\bibfnamefont {B.}~\bibnamefont {White}},\ and\
  \bibinfo {author} {\bibfnamefont {C.}~\bibnamefont {Senko}},\ }\bibfield
  {title} {\bibinfo {title} {Control and readout of a 13-level trapped ion
  qudit},\ }\href@noop {} {\bibfield  {journal} {\bibinfo  {journal} {npj
  Quantum Information}\ }\textbf {\bibinfo {volume} {11}},\ \bibinfo {pages}
  {85} (\bibinfo {year} {2025})}\BibitemShut {NoStop}%
\bibitem [{\citenamefont {DeBry}\ \emph {et~al.}(2025)\citenamefont {DeBry},
  \citenamefont {Meister}, \citenamefont {Martinez}, \citenamefont {Bruzewicz},
  \citenamefont {Shi}, \citenamefont {Reens}, \citenamefont {McConnell},
  \citenamefont {Chuang},\ and\ \citenamefont {Chiaverini}}]{debry2025error}%
  \BibitemOpen
  \bibfield  {author} {\bibinfo {author} {\bibfnamefont {K.}~\bibnamefont
  {DeBry}}, \bibinfo {author} {\bibfnamefont {N.}~\bibnamefont {Meister}},
  \bibinfo {author} {\bibfnamefont {A.~V.}\ \bibnamefont {Martinez}}, \bibinfo
  {author} {\bibfnamefont {C.~D.}\ \bibnamefont {Bruzewicz}}, \bibinfo {author}
  {\bibfnamefont {X.}~\bibnamefont {Shi}}, \bibinfo {author} {\bibfnamefont
  {D.}~\bibnamefont {Reens}}, \bibinfo {author} {\bibfnamefont
  {R.}~\bibnamefont {McConnell}}, \bibinfo {author} {\bibfnamefont {I.~L.}\
  \bibnamefont {Chuang}},\ and\ \bibinfo {author} {\bibfnamefont
  {J.}~\bibnamefont {Chiaverini}},\ }\bibfield  {title} {\bibinfo {title}
  {Error correction of a logical qubit encoded in a single atomic ion},\
  }\href@noop {} {\bibfield  {journal} {\bibinfo  {journal} {arXiv preprint
  arXiv:2503.13908}\ } (\bibinfo {year} {2025})}\BibitemShut {NoStop}%
\bibitem [{\citenamefont {Yu}\ \emph {et~al.}(2025)\citenamefont {Yu},
  \citenamefont {Wilhelm}, \citenamefont {Holmes}, \citenamefont {Vaartjes},
  \citenamefont {Schwienbacher}, \citenamefont {Nurizzo}, \citenamefont
  {Kringh{\o}j}, \citenamefont {Blankenstein}, \citenamefont {Jakob},
  \citenamefont {Gupta} \emph {et~al.}}]{yu2025schrodinger}%
  \BibitemOpen
  \bibfield  {author} {\bibinfo {author} {\bibfnamefont {X.}~\bibnamefont
  {Yu}}, \bibinfo {author} {\bibfnamefont {B.}~\bibnamefont {Wilhelm}},
  \bibinfo {author} {\bibfnamefont {D.}~\bibnamefont {Holmes}}, \bibinfo
  {author} {\bibfnamefont {A.}~\bibnamefont {Vaartjes}}, \bibinfo {author}
  {\bibfnamefont {D.}~\bibnamefont {Schwienbacher}}, \bibinfo {author}
  {\bibfnamefont {M.}~\bibnamefont {Nurizzo}}, \bibinfo {author} {\bibfnamefont
  {A.}~\bibnamefont {Kringh{\o}j}}, \bibinfo {author} {\bibfnamefont
  {M.~R.~v.}\ \bibnamefont {Blankenstein}}, \bibinfo {author} {\bibfnamefont
  {A.~M.}\ \bibnamefont {Jakob}}, \bibinfo {author} {\bibfnamefont
  {P.}~\bibnamefont {Gupta}}, \emph {et~al.},\ }\bibfield  {title} {\bibinfo
  {title} {Schr{\"o}dinger cat states of a nuclear spin qudit in silicon},\
  }\href@noop {} {\bibfield  {journal} {\bibinfo  {journal} {Nature Physics}\
  }\textbf {\bibinfo {volume} {21}},\ \bibinfo {pages} {362} (\bibinfo {year}
  {2025})}\BibitemShut {NoStop}%
\bibitem [{\citenamefont {Ringbauer}\ \emph {et~al.}(2022)\citenamefont
  {Ringbauer}, \citenamefont {Meth}, \citenamefont {Postler}, \citenamefont
  {Stricker}, \citenamefont {Blatt}, \citenamefont {Schindler},\ and\
  \citenamefont {Monz}}]{ringbauer2022universal}%
  \BibitemOpen
  \bibfield  {author} {\bibinfo {author} {\bibfnamefont {M.}~\bibnamefont
  {Ringbauer}}, \bibinfo {author} {\bibfnamefont {M.}~\bibnamefont {Meth}},
  \bibinfo {author} {\bibfnamefont {L.}~\bibnamefont {Postler}}, \bibinfo
  {author} {\bibfnamefont {R.}~\bibnamefont {Stricker}}, \bibinfo {author}
  {\bibfnamefont {R.}~\bibnamefont {Blatt}}, \bibinfo {author} {\bibfnamefont
  {P.}~\bibnamefont {Schindler}},\ and\ \bibinfo {author} {\bibfnamefont
  {T.}~\bibnamefont {Monz}},\ }\bibfield  {title} {\bibinfo {title} {A
  universal qudit quantum processor with trapped ions},\ }\href@noop {}
  {\bibfield  {journal} {\bibinfo  {journal} {Nature Physics}\ }\textbf
  {\bibinfo {volume} {18}},\ \bibinfo {pages} {1053} (\bibinfo {year}
  {2022})}\BibitemShut {NoStop}%
\bibitem [{\citenamefont {Roy}\ \emph {et~al.}(2025)\citenamefont {Roy},
  \citenamefont {Senanian}, \citenamefont {Wang}, \citenamefont {Wetherbee},
  \citenamefont {Zhang}, \citenamefont {Cole}, \citenamefont {Larson},
  \citenamefont {Yelton}, \citenamefont {Arora}, \citenamefont {McMahon} \emph
  {et~al.}}]{roy2025synthetic}%
  \BibitemOpen
  \bibfield  {author} {\bibinfo {author} {\bibfnamefont {S.}~\bibnamefont
  {Roy}}, \bibinfo {author} {\bibfnamefont {A.}~\bibnamefont {Senanian}},
  \bibinfo {author} {\bibfnamefont {C.~S.}\ \bibnamefont {Wang}}, \bibinfo
  {author} {\bibfnamefont {O.~C.}\ \bibnamefont {Wetherbee}}, \bibinfo {author}
  {\bibfnamefont {L.}~\bibnamefont {Zhang}}, \bibinfo {author} {\bibfnamefont
  {B.}~\bibnamefont {Cole}}, \bibinfo {author} {\bibfnamefont {C.}~\bibnamefont
  {Larson}}, \bibinfo {author} {\bibfnamefont {E.}~\bibnamefont {Yelton}},
  \bibinfo {author} {\bibfnamefont {K.}~\bibnamefont {Arora}}, \bibinfo
  {author} {\bibfnamefont {P.~L.}\ \bibnamefont {McMahon}}, \emph {et~al.},\
  }\bibfield  {title} {\bibinfo {title} {Synthetic high angular momentum spin
  dynamics in a microwave oscillator},\ }\href@noop {} {\bibfield  {journal}
  {\bibinfo  {journal} {Physical Review X}\ }\textbf {\bibinfo {volume} {15}},\
  \bibinfo {pages} {021009} (\bibinfo {year} {2025})}\BibitemShut {NoStop}%
\bibitem [{\citenamefont {Champion}\ \emph {et~al.}(2025)\citenamefont
  {Champion}, \citenamefont {Wang}, \citenamefont {Parker},\ and\ \citenamefont
  {Blok}}]{champion2025efficient}%
  \BibitemOpen
  \bibfield  {author} {\bibinfo {author} {\bibfnamefont {E.}~\bibnamefont
  {Champion}}, \bibinfo {author} {\bibfnamefont {Z.}~\bibnamefont {Wang}},
  \bibinfo {author} {\bibfnamefont {R.~W.}\ \bibnamefont {Parker}},\ and\
  \bibinfo {author} {\bibfnamefont {M.~S.}\ \bibnamefont {Blok}},\ }\bibfield
  {title} {\bibinfo {title} {Efficient control of a transmon qudit using
  effective spin-7/2 rotations},\ }\href@noop {} {\bibfield  {journal}
  {\bibinfo  {journal} {Physical Review X}\ }\textbf {\bibinfo {volume} {15}},\
  \bibinfo {pages} {021096} (\bibinfo {year} {2025})}\BibitemShut {NoStop}%
\bibitem [{\citenamefont {Jain}\ \emph
  {et~al.}(2024{\natexlab{a}})\citenamefont {Jain}, \citenamefont {Iosue},
  \citenamefont {Barg},\ and\ \citenamefont {Albert}}]{jain2024quantum}%
  \BibitemOpen
  \bibfield  {author} {\bibinfo {author} {\bibfnamefont {S.~P.}\ \bibnamefont
  {Jain}}, \bibinfo {author} {\bibfnamefont {J.~T.}\ \bibnamefont {Iosue}},
  \bibinfo {author} {\bibfnamefont {A.}~\bibnamefont {Barg}},\ and\ \bibinfo
  {author} {\bibfnamefont {V.~V.}\ \bibnamefont {Albert}},\ }\bibfield  {title}
  {\bibinfo {title} {Quantum spherical codes},\ }\href@noop {} {\bibfield
  {journal} {\bibinfo  {journal} {Nature Physics}\ }\textbf {\bibinfo {volume}
  {20}},\ \bibinfo {pages} {1300} (\bibinfo {year}
  {2024}{\natexlab{a}})}\BibitemShut {NoStop}%
\bibitem [{\citenamefont {Xu}\ \emph {et~al.}(2024)\citenamefont {Xu},
  \citenamefont {Wang}, \citenamefont {Vuillot},\ and\ \citenamefont
  {Albert}}]{xu2024letting}%
  \BibitemOpen
  \bibfield  {author} {\bibinfo {author} {\bibfnamefont {Y.}~\bibnamefont
  {Xu}}, \bibinfo {author} {\bibfnamefont {Y.}~\bibnamefont {Wang}}, \bibinfo
  {author} {\bibfnamefont {C.}~\bibnamefont {Vuillot}},\ and\ \bibinfo {author}
  {\bibfnamefont {V.~V.}\ \bibnamefont {Albert}},\ }\bibfield  {title}
  {\bibinfo {title} {Letting the tiger out of its cage: bosonic coding without
  concatenation},\ }\href@noop {} {\bibfield  {journal} {\bibinfo  {journal}
  {arXiv preprint arXiv:2411.09668}\ } (\bibinfo {year} {2024})}\BibitemShut
  {NoStop}%
\bibitem [{\citenamefont {Kubischta}\ and\ \citenamefont
  {Teixeira}(2023{\natexlab{a}})}]{exoticGates}%
  \BibitemOpen
  \bibfield  {author} {\bibinfo {author} {\bibfnamefont {E.}~\bibnamefont
  {Kubischta}}\ and\ \bibinfo {author} {\bibfnamefont {I.}~\bibnamefont
  {Teixeira}},\ }\bibfield  {title} {\bibinfo {title} {Family of quantum codes
  with exotic transversal gates},\ }\href
  {https://doi.org/10.1103/PhysRevLett.131.240601} {\bibfield  {journal}
  {\bibinfo  {journal} {Phys. Rev. Lett.}\ }\textbf {\bibinfo {volume} {131}},\
  \bibinfo {pages} {240601} (\bibinfo {year} {2023}{\natexlab{a}})}\BibitemShut
  {NoStop}%
\bibitem [{\citenamefont {Shibayama}\ and\ \citenamefont
  {Ouyang}(2021)}]{ouyangEquivalence}%
  \BibitemOpen
  \bibfield  {author} {\bibinfo {author} {\bibfnamefont {T.}~\bibnamefont
  {Shibayama}}\ and\ \bibinfo {author} {\bibfnamefont {Y.}~\bibnamefont
  {Ouyang}},\ }\bibfield  {title} {\bibinfo {title} {The equivalence between
  correctability of deletions and insertions of separable states in quantum
  codes},\ }in\ \href {https://doi.org/10.1109/ITW48936.2021.9611450} {\emph
  {\bibinfo {booktitle} {2021 IEEE Information Theory Workshop (ITW)}}}\
  (\bibinfo {year} {2021})\ pp.\ \bibinfo {pages} {1--6}\BibitemShut {NoStop}%
\bibitem [{\citenamefont {Shibayama}\ and\ \citenamefont
  {Hagiwara}(2021)}]{hagiwaraDeletion}%
  \BibitemOpen
  \bibfield  {author} {\bibinfo {author} {\bibfnamefont {T.}~\bibnamefont
  {Shibayama}}\ and\ \bibinfo {author} {\bibfnamefont {M.}~\bibnamefont
  {Hagiwara}},\ }\bibfield  {title} {\bibinfo {title} {Permutation-invariant
  quantum codes for deletion errors},\ }in\ \href@noop {} {\emph {\bibinfo
  {booktitle} {2021 IEEE International Symposium on Information Theory
  (ISIT)}}}\ (\bibinfo {organization} {IEEE},\ \bibinfo {year} {2021})\ pp.\
  \bibinfo {pages} {1493--1498}\BibitemShut {NoStop}%
\bibitem [{\citenamefont {Ouyang}(2014)}]{ouyangPI}%
  \BibitemOpen
  \bibfield  {author} {\bibinfo {author} {\bibfnamefont {Y.}~\bibnamefont
  {Ouyang}},\ }\bibfield  {title} {\bibinfo {title} {Permutation-invariant
  quantum codes},\ }\href {https://doi.org/10.1103/PhysRevA.90.062317}
  {\bibfield  {journal} {\bibinfo  {journal} {Phys. Rev. A}\ }\textbf {\bibinfo
  {volume} {90}},\ \bibinfo {pages} {062317} (\bibinfo {year}
  {2014})}\BibitemShut {NoStop}%
\bibitem [{\citenamefont {Aydin}\ \emph {et~al.}(2024)\citenamefont {Aydin},
  \citenamefont {Alekseyev},\ and\ \citenamefont {Barg}}]{aydin2023family}%
  \BibitemOpen
  \bibfield  {author} {\bibinfo {author} {\bibfnamefont {A.}~\bibnamefont
  {Aydin}}, \bibinfo {author} {\bibfnamefont {M.~A.}\ \bibnamefont
  {Alekseyev}},\ and\ \bibinfo {author} {\bibfnamefont {A.}~\bibnamefont
  {Barg}},\ }\bibfield  {title} {\bibinfo {title} {A family of permutationally
  invariant quantum codes},\ }\href
  {https://doi.org/10.22331/q-2024-04-30-1321} {\bibfield  {journal} {\bibinfo
  {journal} {Quantum}\ }\textbf {\bibinfo {volume} {8}},\ \bibinfo {pages}
  {1321} (\bibinfo {year} {2024})},\ \Eprint {https://arxiv.org/abs/2310.05358}
  {arXiv:2310.05358} \BibitemShut {NoStop}%
\bibitem [{\citenamefont {Jain}\ \emph
  {et~al.}(2024{\natexlab{b}})\citenamefont {Jain}, \citenamefont {Hudson},
  \citenamefont {Campbell},\ and\ \citenamefont {Albert}}]{jain2024absorption}%
  \BibitemOpen
  \bibfield  {author} {\bibinfo {author} {\bibfnamefont {S.~P.}\ \bibnamefont
  {Jain}}, \bibinfo {author} {\bibfnamefont {E.~R.}\ \bibnamefont {Hudson}},
  \bibinfo {author} {\bibfnamefont {W.~C.}\ \bibnamefont {Campbell}},\ and\
  \bibinfo {author} {\bibfnamefont {V.~V.}\ \bibnamefont {Albert}},\ }\bibfield
   {title} {\bibinfo {title} {Absorption-emission codes for atomic and
  molecular quantum information platforms},\ }\href@noop {} {\bibfield
  {journal} {\bibinfo  {journal} {Phys. Rev. Lett.}\ }\textbf {\bibinfo
  {volume} {133}},\ \bibinfo {pages} {260601} (\bibinfo {year}
  {2024}{\natexlab{b}})}\BibitemShut {NoStop}%
\bibitem [{\citenamefont {Aydin}\ and\ \citenamefont
  {Barg}(2025)}]{aydin2025class}%
  \BibitemOpen
  \bibfield  {author} {\bibinfo {author} {\bibfnamefont {A.}~\bibnamefont
  {Aydin}}\ and\ \bibinfo {author} {\bibfnamefont {A.}~\bibnamefont {Barg}},\
  }\bibfield  {title} {\bibinfo {title} {Class of codes correcting absorptions
  and emissions},\ }\href@noop {} {\bibfield  {journal} {\bibinfo  {journal}
  {Physical Review A}\ }\textbf {\bibinfo {volume} {111}},\ \bibinfo {pages}
  {022415} (\bibinfo {year} {2025})}\BibitemShut {NoStop}%
\bibitem [{\citenamefont {Omanakuttan}\ and\ \citenamefont
  {Gross}(2023)}]{gross2}%
  \BibitemOpen
  \bibfield  {author} {\bibinfo {author} {\bibfnamefont {S.}~\bibnamefont
  {Omanakuttan}}\ and\ \bibinfo {author} {\bibfnamefont {J.~A.}\ \bibnamefont
  {Gross}},\ }\bibfield  {title} {\bibinfo {title} {Multispin clifford codes
  for angular momentum errors in spin systems},\ }\href
  {https://doi.org/10.1103/PhysRevA.108.022424} {\bibfield  {journal} {\bibinfo
   {journal} {Phys. Rev. A}\ }\textbf {\bibinfo {volume} {108}},\ \bibinfo
  {pages} {022424} (\bibinfo {year} {2023})}\BibitemShut {NoStop}%
\bibitem [{\citenamefont {Kubischta}\ and\ \citenamefont
  {Teixeira}(2023{\natexlab{b}})}]{kubischta2023notsosecret}%
  \BibitemOpen
  \bibfield  {author} {\bibinfo {author} {\bibfnamefont {E.}~\bibnamefont
  {Kubischta}}\ and\ \bibinfo {author} {\bibfnamefont {I.}~\bibnamefont
  {Teixeira}},\ }\href@noop {} {\bibinfo {title} {The not-so-secret fourth
  parameter of quantum codes}} (\bibinfo {year} {2023}{\natexlab{b}}),\ \Eprint
  {https://arxiv.org/abs/2310.17652} {arXiv:2310.17652 [quant-ph]} \BibitemShut
  {NoStop}%
\bibitem [{\citenamefont {Ouyang}(2017)}]{ouyangQudit}%
  \BibitemOpen
  \bibfield  {author} {\bibinfo {author} {\bibfnamefont {Y.}~\bibnamefont
  {Ouyang}},\ }\bibfield  {title} {\bibinfo {title} {Permutation-invariant
  qudit codes from polynomials},\ }\href
  {https://doi.org/https://doi.org/10.1016/j.laa.2017.06.031} {\bibfield
  {journal} {\bibinfo  {journal} {Linear Algebra and its Applications}\
  }\textbf {\bibinfo {volume} {532}},\ \bibinfo {pages} {43} (\bibinfo {year}
  {2017})}\BibitemShut {NoStop}%
\bibitem [{\citenamefont {Ouyang}\ and\ \citenamefont
  {Chao}(2020)}]{OuyangADCode}%
  \BibitemOpen
  \bibfield  {author} {\bibinfo {author} {\bibfnamefont {Y.}~\bibnamefont
  {Ouyang}}\ and\ \bibinfo {author} {\bibfnamefont {R.}~\bibnamefont {Chao}},\
  }\bibfield  {title} {\bibinfo {title} {Permutation-invariant
  constant-excitation quantum codes for amplitude damping},\ }\href
  {https://doi.org/10.1109/TIT.2019.2956142} {\bibfield  {journal} {\bibinfo
  {journal} {IEEE Trans. Inform. Theory}\ }\textbf {\bibinfo {volume} {66}},\
  \bibinfo {pages} {2921} (\bibinfo {year} {2020})}\BibitemShut {NoStop}%
\bibitem [{\citenamefont {Movassagh}\ and\ \citenamefont
  {Ouyang}(2024)}]{movassagh2024constructing}%
  \BibitemOpen
  \bibfield  {author} {\bibinfo {author} {\bibfnamefont {R.}~\bibnamefont
  {Movassagh}}\ and\ \bibinfo {author} {\bibfnamefont {Y.}~\bibnamefont
  {Ouyang}},\ }\bibfield  {title} {\bibinfo {title} {Constructing quantum codes
  from any classical code and their embedding in ground space of local
  {H}amiltonians},\ }\href {https://doi.org/10.22331/q-2024-11-27-1541}
  {\bibfield  {journal} {\bibinfo  {journal} {{Quantum}}\ }\textbf {\bibinfo
  {volume} {8}},\ \bibinfo {pages} {1541} (\bibinfo {year} {2024})}\BibitemShut
  {NoStop}%
\bibitem [{\citenamefont {Kova{\v c}evi\'c}\ and\ \citenamefont
  {Tan}(2018)}]{kovacevic2018multisets}%
  \BibitemOpen
  \bibfield  {author} {\bibinfo {author} {\bibfnamefont {M.}~\bibnamefont
  {Kova{\v c}evi\'c}}\ and\ \bibinfo {author} {\bibfnamefont {V.~Y.~F.}\
  \bibnamefont {Tan}},\ }\bibfield  {title} {\bibinfo {title} {Codes in the
  space of multisets---{C}oding for permutation channels with impairments},\
  }\href {https://doi.org/10.1109/TIT.2017.2789292} {\bibfield  {journal}
  {\bibinfo  {journal} {IEEE Trans. Inform. Theory}\ }\textbf {\bibinfo
  {volume} {64}},\ \bibinfo {pages} {5156} (\bibinfo {year}
  {2018})}\BibitemShut {NoStop}%
\bibitem [{\citenamefont {Knill}\ \emph {et~al.}(2000)\citenamefont {Knill},
  \citenamefont {Laflamme},\ and\ \citenamefont {Viola}}]{knill2000theory}%
  \BibitemOpen
  \bibfield  {author} {\bibinfo {author} {\bibfnamefont {E.}~\bibnamefont
  {Knill}}, \bibinfo {author} {\bibfnamefont {R.}~\bibnamefont {Laflamme}},\
  and\ \bibinfo {author} {\bibfnamefont {L.}~\bibnamefont {Viola}},\ }\bibfield
   {title} {\bibinfo {title} {Theory of quantum error correction for general
  noise},\ }\href@noop {} {\bibfield  {journal} {\bibinfo  {journal} {Physical
  Review Letters}\ }\textbf {\bibinfo {volume} {84}},\ \bibinfo {pages} {2525}
  (\bibinfo {year} {2000})}\BibitemShut {NoStop}%
\bibitem [{\citenamefont {Bumgardner}(2011)}]{bumgardner2012codes}%
  \BibitemOpen
  \bibfield  {author} {\bibinfo {author} {\bibfnamefont {C.}~\bibnamefont
  {Bumgardner}},\ }\emph {\bibinfo {title} {Codes in {$W^\ast$}-metric Spaces:
  {T}heory and Examples}},\ \href@noop {} {\bibinfo {type} {{PhD thesis}}},\
  \bibinfo  {school} {University of California}, \bibinfo {address} {Davis}
  (\bibinfo {year} {2011}),\ \bibinfo {note} {arXiv preprint
  arXiv:1205.4517}\BibitemShut {NoStop}%
\bibitem [{\citenamefont {Cao}\ \emph {et~al.}(2021)\citenamefont {Cao},
  \citenamefont {Kribs}, \citenamefont {Li}, \citenamefont {Nelson},
  \citenamefont {Poon},\ and\ \citenamefont {Zeng}}]{cao2021higher}%
  \BibitemOpen
  \bibfield  {author} {\bibinfo {author} {\bibfnamefont {N.}~\bibnamefont
  {Cao}}, \bibinfo {author} {\bibfnamefont {D.~W.}\ \bibnamefont {Kribs}},
  \bibinfo {author} {\bibfnamefont {C.-K.}\ \bibnamefont {Li}}, \bibinfo
  {author} {\bibfnamefont {M.~I.}\ \bibnamefont {Nelson}}, \bibinfo {author}
  {\bibfnamefont {Y.-T.}\ \bibnamefont {Poon}},\ and\ \bibinfo {author}
  {\bibfnamefont {B.}~\bibnamefont {Zeng}},\ }\bibfield  {title} {\bibinfo
  {title} {Higher rank matricial ranges and hybrid quantum error correction},\
  }\href@noop {} {\bibfield  {journal} {\bibinfo  {journal} {Linear and
  Multilinear Algebra}\ }\textbf {\bibinfo {volume} {69}},\ \bibinfo {pages}
  {827} (\bibinfo {year} {2021})}\BibitemShut {NoStop}%
\bibitem [{\citenamefont {Bond}\ \emph {et~al.}(2025)\citenamefont {Bond},
  \citenamefont {Minář}, \citenamefont {Safavi-Naini}, \citenamefont
  {Ozols},\ and\ \citenamefont {Visnevskyi}}]{vlad}%
  \BibitemOpen
  \bibfield  {author} {\bibinfo {author} {\bibfnamefont {L.~J.}\ \bibnamefont
  {Bond}}, \bibinfo {author} {\bibfnamefont {J.}~\bibnamefont {Minář}},
  \bibinfo {author} {\bibfnamefont {A.}~\bibnamefont {Safavi-Naini}}, \bibinfo
  {author} {\bibfnamefont {M.}~\bibnamefont {Ozols}},\ and\ \bibinfo {author}
  {\bibfnamefont {V.}~\bibnamefont {Visnevskyi}}} (\bibinfo {year} {2025}),\
  \bibinfo {note} {in preparation}\BibitemShut {NoStop}%
\bibitem [{\citenamefont {Kubischta}\ and\ \citenamefont
  {Teixeira}(2025{\natexlab{a}})}]{IanEricDihedralIEEE}%
  \BibitemOpen
  \bibfield  {author} {\bibinfo {author} {\bibfnamefont {E.}~\bibnamefont
  {Kubischta}}\ and\ \bibinfo {author} {\bibfnamefont {I.}~\bibnamefont
  {Teixeira}},\ }\bibfield  {title} {\bibinfo {title} {Permutation-invariant
  quantum codes with transversal generalized phase gates},\ }\href
  {https://doi.org/10.1109/TIT.2024.3487964} {\bibfield  {journal} {\bibinfo
  {journal} {IEEE Trans. Inform. Theory}\ }\textbf {\bibinfo {volume} {71}},\
  \bibinfo {pages} {485} (\bibinfo {year} {2025}{\natexlab{a}})}\BibitemShut
  {NoStop}%
\bibitem [{\citenamefont {Csisz{\'a}r}(1998)}]{csiszar1998method}%
  \BibitemOpen
  \bibfield  {author} {\bibinfo {author} {\bibfnamefont {I.}~\bibnamefont
  {Csisz{\'a}r}},\ }\bibfield  {title} {\bibinfo {title} {The method of
  types},\ }\href@noop {} {\bibfield  {journal} {\bibinfo  {journal} {IEEE
  Trans. Inform. Theory}\ }\textbf {\bibinfo {volume} {44}},\ \bibinfo {pages}
  {2505} (\bibinfo {year} {1998})}\BibitemShut {NoStop}%
\bibitem [{\citenamefont {Dicke}(1954)}]{Dicke1}%
  \BibitemOpen
  \bibfield  {author} {\bibinfo {author} {\bibfnamefont {R.~H.}\ \bibnamefont
  {Dicke}},\ }\bibfield  {title} {\bibinfo {title} {Coherence in spontaneous
  radiation processes},\ }\href {https://doi.org/10.1103/PhysRev.93.99}
  {\bibfield  {journal} {\bibinfo  {journal} {Phys. Rev.}\ }\textbf {\bibinfo
  {volume} {93}},\ \bibinfo {pages} {99} (\bibinfo {year} {1954})}\BibitemShut
  {NoStop}%
\bibitem [{\citenamefont {Sekhar~Mukherjee}\ \emph {et~al.}(2020)\citenamefont
  {Sekhar~Mukherjee}, \citenamefont {Maitra}, \citenamefont {Gaurav},\ and\
  \citenamefont {Roy}}]{sekhar2020actual}%
  \BibitemOpen
  \bibfield  {author} {\bibinfo {author} {\bibfnamefont {C.}~\bibnamefont
  {Sekhar~Mukherjee}}, \bibinfo {author} {\bibfnamefont {S.}~\bibnamefont
  {Maitra}}, \bibinfo {author} {\bibfnamefont {V.}~\bibnamefont {Gaurav}},\
  and\ \bibinfo {author} {\bibfnamefont {D.}~\bibnamefont {Roy}},\ }\bibfield
  {title} {\bibinfo {title} {On actual preparation of {D}icke state on a
  quantum computer},\ }\href@noop {} {\bibfield  {journal} {\bibinfo  {journal}
  {arXiv e-prints}\ ,\ \bibinfo {pages} {arXiv}} (\bibinfo {year}
  {2020})}\BibitemShut {NoStop}%
\bibitem [{\citenamefont {Albert}(2025)}]{albert2025bosonic}%
  \BibitemOpen
  \bibfield  {author} {\bibinfo {author} {\bibfnamefont {V.~V.}\ \bibnamefont
  {Albert}},\ }\bibfield  {title} {\bibinfo {title} {Bosonic codes:
  {I}ntroduction and use cases},\ }in\ \href@noop {} {\emph {\bibinfo
  {booktitle} {Proceedings of the International School of Physics ``Enrico
  Fermi", Volume 209: Quantum Fluids of Light and Matter}}}\ (\bibinfo
  {publisher} {IOS Press},\ \bibinfo {year} {2025})\ pp.\ \bibinfo {pages}
  {79--107}\BibitemShut {NoStop}%
\bibitem [{\citenamefont {Fabre}\ and\ \citenamefont
  {Treps}(2020)}]{fabre2020modes}%
  \BibitemOpen
  \bibfield  {author} {\bibinfo {author} {\bibfnamefont {C.}~\bibnamefont
  {Fabre}}\ and\ \bibinfo {author} {\bibfnamefont {N.}~\bibnamefont {Treps}},\
  }\bibfield  {title} {\bibinfo {title} {Modes and states in quantum optics},\
  }\href@noop {} {\bibfield  {journal} {\bibinfo  {journal} {Reviews of Modern
  Physics}\ }\textbf {\bibinfo {volume} {92}},\ \bibinfo {pages} {035005}
  (\bibinfo {year} {2020})}\BibitemShut {NoStop}%
\bibitem [{\citenamefont {Terhal}\ \emph {et~al.}(2020)\citenamefont {Terhal},
  \citenamefont {Conrad},\ and\ \citenamefont {Vuillot}}]{terhal2020towards}%
  \BibitemOpen
  \bibfield  {author} {\bibinfo {author} {\bibfnamefont {B.~M.}\ \bibnamefont
  {Terhal}}, \bibinfo {author} {\bibfnamefont {J.}~\bibnamefont {Conrad}},\
  and\ \bibinfo {author} {\bibfnamefont {C.}~\bibnamefont {Vuillot}},\
  }\bibfield  {title} {\bibinfo {title} {Towards scalable bosonic quantum error
  correction},\ }\href@noop {} {\bibfield  {journal} {\bibinfo  {journal}
  {Quantum Science and Technology}\ }\textbf {\bibinfo {volume} {5}},\ \bibinfo
  {pages} {043001} (\bibinfo {year} {2020})}\BibitemShut {NoStop}%
\bibitem [{\citenamefont {Michael}\ \emph {et~al.}(2016)\citenamefont
  {Michael}, \citenamefont {Silveri}, \citenamefont {Brierley}, \citenamefont
  {Albert}, \citenamefont {Salmilehto}, \citenamefont {Jiang},\ and\
  \citenamefont {Girvin}}]{michael2016new}%
  \BibitemOpen
  \bibfield  {author} {\bibinfo {author} {\bibfnamefont {M.~H.}\ \bibnamefont
  {Michael}}, \bibinfo {author} {\bibfnamefont {M.}~\bibnamefont {Silveri}},
  \bibinfo {author} {\bibfnamefont {R.}~\bibnamefont {Brierley}}, \bibinfo
  {author} {\bibfnamefont {V.~V.}\ \bibnamefont {Albert}}, \bibinfo {author}
  {\bibfnamefont {J.}~\bibnamefont {Salmilehto}}, \bibinfo {author}
  {\bibfnamefont {L.}~\bibnamefont {Jiang}},\ and\ \bibinfo {author}
  {\bibfnamefont {S.~M.}\ \bibnamefont {Girvin}},\ }\bibfield  {title}
  {\bibinfo {title} {New class of quantum error-correcting codes for a bosonic
  mode},\ }\href@noop {} {\bibfield  {journal} {\bibinfo  {journal} {Physical
  Review X}\ }\textbf {\bibinfo {volume} {6}},\ \bibinfo {pages} {031006}
  (\bibinfo {year} {2016})}\BibitemShut {NoStop}%
\bibitem [{\citenamefont {Albert}\ \emph {et~al.}(2018)\citenamefont {Albert},
  \citenamefont {Noh}, \citenamefont {Duivenvoorden}, \citenamefont {Young},
  \citenamefont {Brierley}, \citenamefont {Reinhold}, \citenamefont {Vuillot},
  \citenamefont {Li}, \citenamefont {Shen}, \citenamefont {Girvin} \emph
  {et~al.}}]{albert2018performance}%
  \BibitemOpen
  \bibfield  {author} {\bibinfo {author} {\bibfnamefont {V.~V.}\ \bibnamefont
  {Albert}}, \bibinfo {author} {\bibfnamefont {K.}~\bibnamefont {Noh}},
  \bibinfo {author} {\bibfnamefont {K.}~\bibnamefont {Duivenvoorden}}, \bibinfo
  {author} {\bibfnamefont {D.~J.}\ \bibnamefont {Young}}, \bibinfo {author}
  {\bibfnamefont {R.}~\bibnamefont {Brierley}}, \bibinfo {author}
  {\bibfnamefont {P.}~\bibnamefont {Reinhold}}, \bibinfo {author}
  {\bibfnamefont {C.}~\bibnamefont {Vuillot}}, \bibinfo {author} {\bibfnamefont
  {L.}~\bibnamefont {Li}}, \bibinfo {author} {\bibfnamefont {C.}~\bibnamefont
  {Shen}}, \bibinfo {author} {\bibfnamefont {S.~M.}\ \bibnamefont {Girvin}},
  \emph {et~al.},\ }\bibfield  {title} {\bibinfo {title} {Performance and
  structure of single-mode bosonic codes},\ }\href@noop {} {\bibfield
  {journal} {\bibinfo  {journal} {Physical Review A}\ }\textbf {\bibinfo
  {volume} {97}},\ \bibinfo {pages} {032346} (\bibinfo {year}
  {2018})}\BibitemShut {NoStop}%
\bibitem [{\citenamefont {Chuang}\ \emph {et~al.}(1997)\citenamefont {Chuang},
  \citenamefont {Leung},\ and\ \citenamefont {Yamamoto}}]{chuangADCode}%
  \BibitemOpen
  \bibfield  {author} {\bibinfo {author} {\bibfnamefont {I.~L.}\ \bibnamefont
  {Chuang}}, \bibinfo {author} {\bibfnamefont {D.~W.}\ \bibnamefont {Leung}},\
  and\ \bibinfo {author} {\bibfnamefont {Y.}~\bibnamefont {Yamamoto}},\
  }\bibfield  {title} {\bibinfo {title} {Bosonic quantum codes for amplitude
  damping},\ }\href {https://doi.org/10.1103/PhysRevA.56.1114} {\bibfield
  {journal} {\bibinfo  {journal} {Phys. Rev. A}\ }\textbf {\bibinfo {volume}
  {56}},\ \bibinfo {pages} {1114} (\bibinfo {year} {1997})}\BibitemShut
  {NoStop}%
\bibitem [{\citenamefont {Ouyang}\ and\ \citenamefont
  {Fitzsimons}(2016)}]{ouyangHigherDimensions}%
  \BibitemOpen
  \bibfield  {author} {\bibinfo {author} {\bibfnamefont {Y.}~\bibnamefont
  {Ouyang}}\ and\ \bibinfo {author} {\bibfnamefont {J.}~\bibnamefont
  {Fitzsimons}},\ }\bibfield  {title} {\bibinfo {title} {Permutation-invariant
  codes encoding more than one qubit},\ }\href
  {https://doi.org/10.1103/PhysRevA.93.042340} {\bibfield  {journal} {\bibinfo
  {journal} {Phys. Rev. A}\ }\textbf {\bibinfo {volume} {93}},\ \bibinfo
  {pages} {042340} (\bibinfo {year} {2016})}\BibitemShut {NoStop}%
\bibitem [{\citenamefont {Kubischta}\ and\ \citenamefont
  {Teixeira}(2024)}]{kubischtaUnitaryt}%
  \BibitemOpen
  \bibfield  {author} {\bibinfo {author} {\bibfnamefont {E.}~\bibnamefont
  {Kubischta}}\ and\ \bibinfo {author} {\bibfnamefont {I.}~\bibnamefont
  {Teixeira}},\ }\bibfield  {title} {\bibinfo {title} {Quantum codes from
  twisted unitary $t$-groups},\ }\href
  {https://doi.org/10.1103/PhysRevLett.133.030602} {\bibfield  {journal}
  {\bibinfo  {journal} {Phys. Rev. Lett.}\ }\textbf {\bibinfo {volume} {133}},\
  \bibinfo {pages} {030602} (\bibinfo {year} {2024})}\BibitemShut {NoStop}%
\bibitem [{\citenamefont {Georgi}(2000)}]{georgi2000lie}%
  \BibitemOpen
  \bibfield  {author} {\bibinfo {author} {\bibfnamefont {H.}~\bibnamefont
  {Georgi}},\ }\href@noop {} {\emph {\bibinfo {title} {Lie algebras in particle
  physics: from isospin to unified theories}}}\ (\bibinfo  {publisher} {Taylor
  \& Francis},\ \bibinfo {year} {2000})\BibitemShut {NoStop}%
\bibitem [{\citenamefont {Klein}\ and\ \citenamefont
  {Marshalek}(1991)}]{klein1991boson}%
  \BibitemOpen
  \bibfield  {author} {\bibinfo {author} {\bibfnamefont {A.}~\bibnamefont
  {Klein}}\ and\ \bibinfo {author} {\bibfnamefont {E.}~\bibnamefont
  {Marshalek}},\ }\bibfield  {title} {\bibinfo {title} {Boson realizations of
  {L}ie algebras with applications to nuclear physics},\ }\href@noop {}
  {\bibfield  {journal} {\bibinfo  {journal} {Reviews of Modern Physics}\
  }\textbf {\bibinfo {volume} {63}},\ \bibinfo {pages} {375} (\bibinfo {year}
  {1991})}\BibitemShut {NoStop}%
\bibitem [{\citenamefont {Gitman}\ and\ \citenamefont
  {Shelepin}(1993)}]{gitman1993coherent}%
  \BibitemOpen
  \bibfield  {author} {\bibinfo {author} {\bibfnamefont {D.~M.}\ \bibnamefont
  {Gitman}}\ and\ \bibinfo {author} {\bibfnamefont {A.}~\bibnamefont
  {Shelepin}},\ }\bibfield  {title} {\bibinfo {title} {Coherent states of su
  (n) groups},\ }\href@noop {} {\bibfield  {journal} {\bibinfo  {journal}
  {Journal of Physics A: Mathematical and General}\ }\textbf {\bibinfo {volume}
  {26}},\ \bibinfo {pages} {313} (\bibinfo {year} {1993})}\BibitemShut
  {NoStop}%
\bibitem [{\citenamefont {Vilenkin}\ and\ \citenamefont
  {Klimyk}(1993)}]{VilenkinKlimyk1993}%
  \BibitemOpen
  \bibfield  {author} {\bibinfo {author} {\bibfnamefont {N.~J.}\ \bibnamefont
  {Vilenkin}}\ and\ \bibinfo {author} {\bibfnamefont {A.~U.}\ \bibnamefont
  {Klimyk}},\ }\href {https://doi.org/10.1007/978-94-017-2883-6} {\emph
  {\bibinfo {title} {Representation of Lie Groups and Special Functions, Volume
  2}}},\ Mathematics and its Applications\ (\bibinfo  {publisher} {Springer,
  Dordrecht},\ \bibinfo {year} {1993})\BibitemShut {NoStop}%
\bibitem [{\citenamefont {Bergmann}\ and\ \citenamefont {van
  Loock}(2016)}]{VanLoock}%
  \BibitemOpen
  \bibfield  {author} {\bibinfo {author} {\bibfnamefont {M.}~\bibnamefont
  {Bergmann}}\ and\ \bibinfo {author} {\bibfnamefont {P.}~\bibnamefont {van
  Loock}},\ }\bibfield  {title} {\bibinfo {title} {Quantum error correction
  against photon loss using {NOON} states},\ }\href
  {https://doi.org/10.1103/PhysRevA.94.012311} {\bibfield  {journal} {\bibinfo
  {journal} {Phys. Rev. A}\ }\textbf {\bibinfo {volume} {94}},\ \bibinfo
  {pages} {012311} (\bibinfo {year} {2016})}\BibitemShut {NoStop}%
\bibitem [{\citenamefont {Varshamov}(1973)}]{varshamov1973class}%
  \BibitemOpen
  \bibfield  {author} {\bibinfo {author} {\bibfnamefont {R.}~\bibnamefont
  {Varshamov}},\ }\bibfield  {title} {\bibinfo {title} {A class of codes for
  asymmetric channels and a problem from the additive theory of numbers},\
  }\href@noop {} {\bibfield  {journal} {\bibinfo  {journal} {IEEE Trans.
  Inform. Theory}\ }\textbf {\bibinfo {volume} {19}},\ \bibinfo {pages} {92}
  (\bibinfo {year} {1973})}\BibitemShut {NoStop}%
\bibitem [{\citenamefont {Tallini}\ and\ \citenamefont
  {Bose}(2011)}]{tallini2011}%
  \BibitemOpen
  \bibfield  {author} {\bibinfo {author} {\bibfnamefont {L.~G.}\ \bibnamefont
  {Tallini}}\ and\ \bibinfo {author} {\bibfnamefont {B.}~\bibnamefont {Bose}},\
  }\bibfield  {title} {\bibinfo {title} {On $l_1$-distance error control
  codes},\ }in\ \href@noop {} {\emph {\bibinfo {booktitle} {2011 IEEE
  International Symposium on Information Theory Proceedings}}}\ (\bibinfo
  {organization} {IEEE},\ \bibinfo {year} {2011})\ pp.\ \bibinfo {pages}
  {1061--1065}\BibitemShut {NoStop}%
\bibitem [{\citenamefont {Barg}\ and\ \citenamefont
  {Mazumdar}(2010)}]{barg2010codes}%
  \BibitemOpen
  \bibfield  {author} {\bibinfo {author} {\bibfnamefont {A.}~\bibnamefont
  {Barg}}\ and\ \bibinfo {author} {\bibfnamefont {A.}~\bibnamefont
  {Mazumdar}},\ }\bibfield  {title} {\bibinfo {title} {Codes in permutations
  and error correction for rank modulation},\ }\href
  {https://doi.org/10.1109/TIT.2010.2048455} {\bibfield  {journal} {\bibinfo
  {journal} {IEEE Trans. Inform. Theory}\ }\textbf {\bibinfo {volume} {56}},\
  \bibinfo {pages} {3158} (\bibinfo {year} {2010})}\BibitemShut {NoStop}%
\bibitem [{\citenamefont {Goyal}\ \emph {et~al.}(2025)\citenamefont {Goyal},
  \citenamefont {Dao}, \citenamefont {Kova{\v{c}}evi{\'c}},\ and\ \citenamefont
  {Kiah}}]{goyal2024gilbert}%
  \BibitemOpen
  \bibfield  {author} {\bibinfo {author} {\bibfnamefont {K.}~\bibnamefont
  {Goyal}}, \bibinfo {author} {\bibfnamefont {D.~T.}\ \bibnamefont {Dao}},
  \bibinfo {author} {\bibfnamefont {M.}~\bibnamefont {Kova{\v{c}}evi{\'c}}},\
  and\ \bibinfo {author} {\bibfnamefont {H.~M.}\ \bibnamefont {Kiah}},\
  }\bibfield  {title} {\bibinfo {title} {{Gilbert--Varshamov} bound for codes
  in $l_1$ metric using multivariate analytic combinatorics},\ }\href@noop {}
  {\bibfield  {journal} {\bibinfo  {journal} {IEEE Trans. Inform. Theory}\
  }\textbf {\bibinfo {volume} {71}},\ \bibinfo {pages} {244} (\bibinfo {year}
  {2025})}\BibitemShut {NoStop}%
\bibitem [{\citenamefont {Csisz{\'a}r}\ and\ \citenamefont
  {K{\"o}rner}(2011)}]{csiszar2011information}%
  \BibitemOpen
  \bibfield  {author} {\bibinfo {author} {\bibfnamefont {I.}~\bibnamefont
  {Csisz{\'a}r}}\ and\ \bibinfo {author} {\bibfnamefont {J.}~\bibnamefont
  {K{\"o}rner}},\ }\href@noop {} {\emph {\bibinfo {title} {Information
  Theory}}}\ (\bibinfo  {publisher} {Cambridge University Press},\ \bibinfo
  {year} {2011})\BibitemShut {NoStop}%
\bibitem [{\citenamefont {Gu}\ and\ \citenamefont
  {Fuja}(1993)}]{gu1993generalized}%
  \BibitemOpen
  \bibfield  {author} {\bibinfo {author} {\bibfnamefont {J.}~\bibnamefont
  {Gu}}\ and\ \bibinfo {author} {\bibfnamefont {T.}~\bibnamefont {Fuja}},\
  }\bibfield  {title} {\bibinfo {title} {A generalized {Gilbert-Varshamov}
  bound derived via analysis of a code-search algorithm},\ }\href@noop {}
  {\bibfield  {journal} {\bibinfo  {journal} {IEEE Trans. Inform. Theory}\
  }\textbf {\bibinfo {volume} {39}},\ \bibinfo {pages} {1089} (\bibinfo {year}
  {1993})}\BibitemShut {NoStop}%
\bibitem [{\citenamefont {Tolhuizen}(1997)}]{tolhuizen1997generalized}%
  \BibitemOpen
  \bibfield  {author} {\bibinfo {author} {\bibfnamefont {L.}~\bibnamefont
  {Tolhuizen}},\ }\bibfield  {title} {\bibinfo {title} {The generalized
  {G}ilbert-{V}arshamov bound is implied by {T}ur{\'a}n's theorem},\
  }\href@noop {} {\bibfield  {journal} {\bibinfo  {journal} {IEEE Trans.
  Inform. Theory}\ }\textbf {\bibinfo {volume} {43}},\ \bibinfo {pages} {1605}
  (\bibinfo {year} {1997})}\BibitemShut {NoStop}%
\bibitem [{\citenamefont {Tverberg}(1966)}]{tverberg1966generalization}%
  \BibitemOpen
  \bibfield  {author} {\bibinfo {author} {\bibfnamefont {H.}~\bibnamefont
  {Tverberg}},\ }\bibfield  {title} {\bibinfo {title} {A generalization of
  {R}adon's theorem},\ }\href@noop {} {\bibfield  {journal} {\bibinfo
  {journal} {Journal of the London Mathematical Society}\ }\textbf {\bibinfo
  {volume} {1}},\ \bibinfo {pages} {123} (\bibinfo {year} {1966})}\BibitemShut
  {NoStop}%
\bibitem [{\citenamefont {Matousek}(2002)}]{matousek2013lectures}%
  \BibitemOpen
  \bibfield  {author} {\bibinfo {author} {\bibfnamefont {J.}~\bibnamefont
  {Matousek}},\ }\href@noop {} {\emph {\bibinfo {title} {Lectures on Discrete
  Geometry}}},\ \bibinfo {series} {Graduate Texts in Mathematics}, Vol.\
  \bibinfo {volume} {212}\ (\bibinfo  {publisher} {Springer Science \& Business
  Media},\ \bibinfo {year} {2002})\BibitemShut {NoStop}%
\bibitem [{\citenamefont {B{\'a}r{\'a}ny}\ and\ \citenamefont
  {Sober{\'o}n}(2018)}]{barany2018tverberg}%
  \BibitemOpen
  \bibfield  {author} {\bibinfo {author} {\bibfnamefont {I.}~\bibnamefont
  {B{\'a}r{\'a}ny}}\ and\ \bibinfo {author} {\bibfnamefont {P.}~\bibnamefont
  {Sober{\'o}n}},\ }\bibfield  {title} {\bibinfo {title} {Tverberg’s theorem
  is 50 years old: a survey},\ }\href@noop {} {\bibfield  {journal} {\bibinfo
  {journal} {Bulletin of the American Mathematical Society}\ }\textbf {\bibinfo
  {volume} {55}},\ \bibinfo {pages} {459} (\bibinfo {year} {2018})}\BibitemShut
  {NoStop}%
\bibitem [{\citenamefont {Bose}\ and\ \citenamefont
  {Chowla}(6263)}]{Chowla1962}%
  \BibitemOpen
  \bibfield  {author} {\bibinfo {author} {\bibfnamefont {R.}~\bibnamefont
  {Bose}}\ and\ \bibinfo {author} {\bibfnamefont {S.}~\bibnamefont {Chowla}},\
  }\bibfield  {title} {\bibinfo {title} {Theorems in the additive theory of
  numbers},\ }\href {http://eudml.org/doc/139252} {\bibfield  {journal}
  {\bibinfo  {journal} {Commentarii Mathematici Helvetici}\ }\textbf {\bibinfo
  {volume} {37}},\ \bibinfo {pages} {141} (\bibinfo {year}
  {1962/63})}\BibitemShut {NoStop}%
\bibitem [{\citenamefont {Kubischta}\ and\ \citenamefont
  {Teixeira}(2025{\natexlab{b}})}]{kubischta2025quantum}%
  \BibitemOpen
  \bibfield  {author} {\bibinfo {author} {\bibfnamefont {E.}~\bibnamefont
  {Kubischta}}\ and\ \bibinfo {author} {\bibfnamefont {I.}~\bibnamefont
  {Teixeira}},\ }\bibfield  {title} {\bibinfo {title} {Quantum codes and
  irreducible products of characters},\ }\href@noop {} {\bibfield  {journal}
  {\bibinfo  {journal} {Designs, Codes and Cryptography}\ ,\ \bibinfo {pages}
  {1}} (\bibinfo {year} {2025}{\natexlab{b}})}\BibitemShut {NoStop}%
\bibitem [{\citenamefont {Uy}\ and\ \citenamefont {Gangloff}(2024)}]{UySUd}%
  \BibitemOpen
  \bibfield  {author} {\bibinfo {author} {\bibfnamefont {R.~F.}\ \bibnamefont
  {Uy}}\ and\ \bibinfo {author} {\bibfnamefont {D.~A.}\ \bibnamefont
  {Gangloff}},\ }\href@noop {} {\bibinfo {title} {Qudit-based quantum
  error-correcting codes from irreducible representations of {$SU(d)$}}}
  (\bibinfo {year} {2024}),\ \Eprint {https://arxiv.org/abs/2410.02407}
  {arXiv:2410.02407 [quant-ph]} \BibitemShut {NoStop}%
\bibitem [{\citenamefont {Wasilewski}\ and\ \citenamefont
  {Banaszek}(2007)}]{Wasilewski}%
  \BibitemOpen
  \bibfield  {author} {\bibinfo {author} {\bibfnamefont {W.}~\bibnamefont
  {Wasilewski}}\ and\ \bibinfo {author} {\bibfnamefont {K.}~\bibnamefont
  {Banaszek}},\ }\bibfield  {title} {\bibinfo {title} {Protecting an optical
  qubit against photon loss},\ }\href
  {https://doi.org/10.1103/PhysRevA.75.042316} {\bibfield  {journal} {\bibinfo
  {journal} {Phys. Rev. A}\ }\textbf {\bibinfo {volume} {75}},\ \bibinfo
  {pages} {042316} (\bibinfo {year} {2007})}\BibitemShut {NoStop}%
\bibitem [{\citenamefont {Cleve}\ \emph {et~al.}(1999)\citenamefont {Cleve},
  \citenamefont {Gottesman},\ and\ \citenamefont {Lo}}]{cleve1999share}%
  \BibitemOpen
  \bibfield  {author} {\bibinfo {author} {\bibfnamefont {R.}~\bibnamefont
  {Cleve}}, \bibinfo {author} {\bibfnamefont {D.}~\bibnamefont {Gottesman}},\
  and\ \bibinfo {author} {\bibfnamefont {H.-K.}\ \bibnamefont {Lo}},\
  }\bibfield  {title} {\bibinfo {title} {How to share a quantum secret},\
  }\href@noop {} {\bibfield  {journal} {\bibinfo  {journal} {Physical Review
  Letters}\ }\textbf {\bibinfo {volume} {83}},\ \bibinfo {pages} {648}
  (\bibinfo {year} {1999})}\BibitemShut {NoStop}%
\bibitem [{\citenamefont {Adiprasito}\ \emph {et~al.}(2020)\citenamefont
  {Adiprasito}, \citenamefont {B{\'a}r{\'a}ny}, \citenamefont {Mustafa},\ and\
  \citenamefont {Terpai}}]{adiprasito2020theorems}%
  \BibitemOpen
  \bibfield  {author} {\bibinfo {author} {\bibfnamefont {K.}~\bibnamefont
  {Adiprasito}}, \bibinfo {author} {\bibfnamefont {I.}~\bibnamefont
  {B{\'a}r{\'a}ny}}, \bibinfo {author} {\bibfnamefont {N.~H.}\ \bibnamefont
  {Mustafa}},\ and\ \bibinfo {author} {\bibfnamefont {T.}~\bibnamefont
  {Terpai}},\ }\bibfield  {title} {\bibinfo {title} {Theorems of
  {C}arath{\'e}odory, {H}elly, and {T}verberg without dimension},\ }\href@noop
  {} {\bibfield  {journal} {\bibinfo  {journal} {Discrete \& Computational
  Geometry}\ }\textbf {\bibinfo {volume} {64}},\ \bibinfo {pages} {233}
  (\bibinfo {year} {2020})}\BibitemShut {NoStop}%
\bibitem [{\citenamefont {B{\'e}ny}\ and\ \citenamefont
  {Oreshkov}(2010)}]{beny2010general}%
  \BibitemOpen
  \bibfield  {author} {\bibinfo {author} {\bibfnamefont {C.}~\bibnamefont
  {B{\'e}ny}}\ and\ \bibinfo {author} {\bibfnamefont {O.}~\bibnamefont
  {Oreshkov}},\ }\bibfield  {title} {\bibinfo {title} {General conditions for
  approximate quantum error correction and near-optimal recovery channels},\
  }\href@noop {} {\bibfield  {journal} {\bibinfo  {journal} {Physical Review
  Letters}\ }\textbf {\bibinfo {volume} {104}},\ \bibinfo {pages} {120501}
  (\bibinfo {year} {2010})}\BibitemShut {NoStop}%
\bibitem [{\citenamefont {Brandao}\ \emph {et~al.}(2019)\citenamefont
  {Brandao}, \citenamefont {Crosson}, \citenamefont {{\c{S}}ahino{\u{g}}lu},\
  and\ \citenamefont {Bowen}}]{brandao2019quantum}%
  \BibitemOpen
  \bibfield  {author} {\bibinfo {author} {\bibfnamefont {F.~G.}\ \bibnamefont
  {Brandao}}, \bibinfo {author} {\bibfnamefont {E.}~\bibnamefont {Crosson}},
  \bibinfo {author} {\bibfnamefont {M.~B.}\ \bibnamefont
  {{\c{S}}ahino{\u{g}}lu}},\ and\ \bibinfo {author} {\bibfnamefont
  {J.}~\bibnamefont {Bowen}},\ }\bibfield  {title} {\bibinfo {title} {Quantum
  error correcting codes in eigenstates of translation-invariant spin chains},\
  }\href@noop {} {\bibfield  {journal} {\bibinfo  {journal} {Physical Review
  Letters}\ }\textbf {\bibinfo {volume} {123}},\ \bibinfo {pages} {110502}
  (\bibinfo {year} {2019})}\BibitemShut {NoStop}%
\bibitem [{\citenamefont {Gschwendtner}\ \emph {et~al.}(2019)\citenamefont
  {Gschwendtner}, \citenamefont {K{\"o}nig}, \citenamefont
  {{\c{S}}ahino{\u{g}}lu},\ and\ \citenamefont
  {Tang}}]{gschwendtner2019quantum}%
  \BibitemOpen
  \bibfield  {author} {\bibinfo {author} {\bibfnamefont {M.}~\bibnamefont
  {Gschwendtner}}, \bibinfo {author} {\bibfnamefont {R.}~\bibnamefont
  {K{\"o}nig}}, \bibinfo {author} {\bibfnamefont {B.}~\bibnamefont
  {{\c{S}}ahino{\u{g}}lu}},\ and\ \bibinfo {author} {\bibfnamefont
  {E.}~\bibnamefont {Tang}},\ }\bibfield  {title} {\bibinfo {title} {Quantum
  error-detection at low energies},\ }\href@noop {} {\bibfield  {journal}
  {\bibinfo  {journal} {Journal of High Energy Physics}\ }\textbf {\bibinfo
  {volume} {2019}},\ \bibinfo {pages} {1} (\bibinfo {year} {2019})}\BibitemShut
  {NoStop}%
\end{thebibliography}%
\appendix

\onecolumngrid

\section{Multinomial coefficients}
A multi-dimensional version of the Vandermonde convolution, \cref{lemma:convolution}, is slightly less known than its one-dimensional analog. Since we need it in the main text, in this appendix, we present it together with a short proof. We start with the following
\begin{lemma}\label{Lemma: multinomial}
For integer $N\geq t\geq 0$, let $\un,\ue$ be $q$-tuples with $\sum_{i=0}^{q-1} n_i=N,\sum_{i=0}^{q-1} e_i=t$
Then
    \begin{align}\label{eq: md}
       \displaystyle \frac{\binom{n_0}{e_0}\binom{n_1}{e_1}\ldots\binom{n_{q-1}}{e_{q-1}}}{\binom{N}{t}}=\frac{\binom{t}{\ue}\binom{N-t}{\un-\ue}}{\binom{N}{\un}}.
    \end{align}
    \begin{proof}
        By a direct calculation,
        \begin{align*}
            \frac{\binom{N-t}{\un-\ue}}{\binom{N}{\un}}
            &=\frac{(N-t)!}{\prod_{i=0}^{q-1}(n_i-e_i)!}\frac{\prod_{i=0}^{q-1}n_i!}{N!}=\frac{e_0!\ldots e_{q-1}!}{t!}\frac{t!(N-t)!}{N!}\prod_{i=0}^{q-1}\frac{n_i!}{(n_i-e_i)!e_i!}\\
             &=\frac{1}{\binom{t}{\ue}\binom{N}{t}}\prod_{i=0}^{q-1}\binom{n_i}{e_i}. \qedhere
            \end{align*}
    \end{proof}
\end{lemma}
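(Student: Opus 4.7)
The identity is a standard relation among binomial and multinomial coefficients, and I do not anticipate any substantive obstacle. My plan is to proceed by direct algebraic expansion in terms of factorials, while keeping a combinatorial interpretation in mind as a conceptual check.

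First, I would clear denominators and rewrite the claim in the equivalent cross-multiplied form
\[
\binom{N}{\un}\prod_{i=0}^{q-1}\binom{n_i}{e_i}
\;=\;
\binom{N}{t}\binom{t}{\ue}\binom{N-t}{\un-\ue}.
\]
Then I would expand each coefficient using $\binom{N}{\un}=N!/\prod_i n_i!$, $\binom{n_i}{e_i}=n_i!/(e_i!(n_i-e_i)!)$, $\binom{N}{t}=N!/(t!(N-t)!)$, $\binom{t}{\ue}=t!/\prod_i e_i!$, and $\binom{N-t}{\un-\ue}=(N-t)!/\prod_i(n_i-e_i)!$. After cancellation of the common factors $\prod_i n_i!$, $t!$ and $(N-t)!$, both sides reduce to
\[
\frac{N!}{\prod_i e_i!\,\prod_i (n_i-e_i)!},
\]
and the identity follows. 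The hypotheses $\sum_i n_i = N$ and $\sum_i e_i = t$ are used only to guarantee that the multinomial coefficients $\binom{N}{\un}$, $\binom{t}{\ue}$ and $\binom{N-t}{\un-\ue}$ are well-defined (i.e., that their top arguments equal the sums of their bottom arguments). Equivalently, I could divide both sides of the original identity by $\binom{N}{\un}$ to isolate $\prod_i\binom{n_i}{e_i}/\binom{N}{t}$ and recognize the resulting expression as the probability mass function of the multivariate hypergeometric distribution, written in the two natural ways.

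As a conceptual check, both sides of the cross-multiplied form count the number of ordered partitions of an $N$-element set into $2q$ blocks of sizes $(e_0, n_0-e_0,\; e_1, n_1-e_1,\;\ldots,\; e_{q-1}, n_{q-1}-e_{q-1})$. The left side first partitions $[N]$ into $q$ blocks of sizes $\un$ and then splits each block of size $n_i$ into sub-blocks of sizes $e_i$ and $n_i-e_i$; the right side first selects a distinguished $t$-subset and its complement, then refines each by the $i$-coordinate. Double counting yields the identity without any further computation, so the only ``hard'' part is the routine algebra, and I would present the factorial calculation in the paper since it matches the style of the surrounding material.
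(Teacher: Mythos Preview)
Your proposal is correct and follows essentially the same approach as the paper: a direct factorial expansion showing both sides reduce to $N!/\bigl(\prod_i e_i!\prod_i(n_i-e_i)!\bigr)$. The paper's proof starts from $\binom{N-t}{\un-\ue}/\binom{N}{\un}$ and manipulates to the other side rather than cross-multiplying, but the substance is identical; your additional double-counting interpretation is a nice conceptual check that the paper omits.
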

\begin{lemma}\label{lemma:convolution}
    Let $N$, $t$ be nonnegative integers with $N\geq t$, and $\un \in \cS_{q,N}$. Then
    \begin{align*}
        \sum_{\ue \in \cS_{q,t}}\binom{t}{\ue}\binom{N-t}{\un-\ue} = \binom{N}{\un}.
    \end{align*}
\end{lemma}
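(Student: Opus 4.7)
The plan is to derive this multivariate Vandermonde-type identity by coefficient extraction from a polynomial identity. First I would apply the multinomial theorem to both sides of the trivial algebraic equality
\begin{align*}
(x_0+x_1+\cdots+x_{q-1})^N = (x_0+x_1+\cdots+x_{q-1})^{t}\,(x_0+x_1+\cdots+x_{q-1})^{N-t},
\end{align*}
and then compare the coefficients of the monomial $x_0^{n_0}x_1^{n_1}\cdots x_{q-1}^{n_{q-1}}$. The left-hand side immediately contributes $\binom{N}{\un}$. Multiplying out the two expansions on the right and collecting the contributions to this monomial produces $\sum_{\ue\in\cS_{q,t}}\binom{t}{\ue}\binom{N-t}{\un-\ue}$, where the convention that $\binom{N-t}{\un-\ue}$ vanishes whenever some component of $\un-\ue$ is negative automatically enforces the constraint $\un-\ue\in\cS_{q,N-t}$.

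I do not anticipate any substantive obstacle here; the proof is essentially a one-line polynomial identity once the multinomial theorem is invoked. As a sanity check, I would mention the equivalent combinatorial argument: $\binom{N}{\un}$ counts the words $\bfx\in\cQ^N$ with $\comp(\bfx)=\un$, and conditioning on the composition $\ue$ of the length-$t$ prefix of such a word partitions this set into blocks of size $\binom{t}{\ue}\binom{N-t}{\un-\ue}$, indexed by $\ue\in\cS_{q,t}$. A third, purely algebraic route would start from \cref{Lemma: multinomial} (already proved): dividing the claim by $\binom{N}{\un}/\binom{N}{t}$ reduces it to $\sum_{\ue\in\cS_{q,t}}\prod_{i=0}^{q-1}\binom{n_i}{e_i}=\binom{N}{t}$, which is itself the $q$-color Vandermonde identity obtained by extracting the coefficient of $y^t$ from $\prod_{i=0}^{q-1}(1+y)^{n_i}=(1+y)^{N}$. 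Any of these three routes gives a complete proof in a few lines; I would present the multinomial-theorem version as the cleanest.
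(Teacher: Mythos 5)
Your proposal is correct. Your preferred route — expanding $(x_0+\cdots+x_{q-1})^N = (x_0+\cdots+x_{q-1})^t(x_0+\cdots+x_{q-1})^{N-t}$ by the multinomial theorem and comparing coefficients of $x_0^{n_0}\cdots x_{q-1}^{n_{q-1}}$ — is a self-contained generating-function argument that does not rely on \cref{Lemma: multinomial} at all. The paper instead takes exactly your third route: it iterates the one-dimensional Vandermonde convolution by induction to get $\sum_{\ue\in\cS_{q,t}}\prod_i\binom{n_i}{e_i}=\binom{N}{t}$, then divides through \cref{eq: md} and sums over $\ue$ to read off the claim. The trade-off is small: the paper's approach economizes by reusing \cref{Lemma: multinomial}, which it needs anyway elsewhere, while yours is slightly more direct and makes the ``split the $N$ positions into a block of $t$ and a block of $N-t$'' structure visible in a single polynomial identity (your combinatorial paraphrase makes the same point). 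Both are standard and valid; there is no gap.
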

\begin{proof}
    The standard (one-dimensional) Vandermonde convolution has the form
     $$
     \sum_{e=0}^t \binom {n_0}e\binom{n_1}{t-e}=\binom{n_0+n_1}{t}.
     $$
By induction, we quickly conclude that
   $$
   \sum_{\ue\in \cS_{q,t}}\prod_{i=0}^{q-1}\binom {n_i}{e_i}=\binom Nt,
   $$
so the left-hand side of \cref{eq: md} is 1 when summed on $\ue\in\cS_{q,t}$. Then so is the right-hand side, which is the claim of the lemma.
\end{proof}

\twocolumngrid

\end{document}